\documentclass[a4paper,USenglish,cleveref,showAuthorOrcid,hideChapterNumber]{lipics-v2021}
\usepackage{craigstyle}
\usepackage[utf8]{inputenc}
\usepackage[T1]{fontenc}

\usepackage{etoolbox}
\usepackage[noadjust]{cite}

\usepackage{amsmath,amssymb}

\usepackage{graphicx}

\usepackage{relsize}

\usepackage{colonequals}

\usepackage{adjustbox}

\usepackage{multirow}

\usepackage[dvipsnames]{xcolor}

\usepackage[all]{xy}

\usepackage[noxy]{virginialake}

\usepackage{macros_Iris} 

\usepackage{tikz}
\usetikzlibrary{positioning,arrows,calc,shapes.geometric}
\tikzset{ modal/.style={>=stealth,shorten >=1pt,shorten <=1pt,auto,node distance=0.7cm, semithick}, world/.style={circle,draw,minimum size=0.2cm}, point/.style={circle,draw,inner sep=0.5mm,fill=black},point1/.style={circle,draw,inner sep=0.5mm}, reflexive above/.style={->,loop,looseness=7,in=120,out=60}, reflexive below/.style={->,loop,looseness=7,in=240,out=300}, reflexive left/.style={->,loop,looseness=7,in=150,out=210}, reflexive right/.style={->,loop,looseness=7,in=30,out=330}, itria/.style={draw,isosceles triangle,shape border rotate=270,yshift=-1.45cm, minimum height = 15mm}, empty/.style={inner sep=0.0mm } }

\usepackage{url}

\usepackage{hyperref}

\pdfoutput=1

\newcommand{\posint}{\mathbb{N}}

\newcommand{\relat}{\mathcal{R}}

\catcode`@=11
\def\Left#1#2\Right{\begingroup%
   \def\ts@r{\nulldelimiterspace=0pt \mathsurround=0pt}%
   \let\@hat=#1%
   \def\sht@im{#2}%
   \def\@t{{\mathchoice{\def\@fen{\displaystyle}\k@fel}%
          {\def\@fen{\textstyle}\k@fel}%
          {\def\@fen{\scriptstyle}\k@fel}%
          {\def\@fen{\scriptscriptstyle}\k@fel}}}%
   \def\g@rin{\ts@r\left\@hat\vphantom{\sht@im}\right.}%
   \def\k@fel{\setbox0=\hbox{$\@fen\g@rin$}\hbox{%
      $\@fen \kern.4\wd0 \copy0 \kern-.4\wd0%
      \llap{\copy0}\kern.4\wd0$}}%
      \def\pt@h{\mathopen\@t}\pt@h\sht@im%
      \Right}%
\def\Right#1{\let\@hat=#1%
   \def\st@m{\mathclose\@t}%
   \st@m\endgroup}
\catcode`@=12

\newcommand{\Lab}{\mathtt{L}}

\newcommand{\intfun}{\seqint}
\newcommand{\multform}{\mathop{\mathsf{form}}}

\newcommand{\class}{\mathcal{C}}

\renewcommand{\emptyset}{\varnothing}
\renewcommand{\phi}{\varphi}
\newcommand{\impl}{\rightarrow}
\renewcommand{\epsilon}{\varepsilon}
\newcommand{\IPC}{{\sf IPC}}
\newcommand{\K}{\textnormal{{\sf K}}}
\newcommand{\D}{\textnormal{{\sf D}}}
\newcommand{\T}{\textnormal{{\sf T}}}
\newcommand{\GL}{\textnormal{{\sf GL}}}

\newcommand{\vlfill}[1]{\{#1\}}
\renewcommand{\vlhole}{\vlfill{\:}}

\newcommand{\ce}{\colonequals}
\newcommand{\cce}{\coloncolonequals}

\newcommand{\calM}{\mathcal{M}}

\newcommand{\calL}{\mathcal{L}}
\newcommand{\calLp}{\calL_{\sf{p}}}
\newcommand{\calLBox}{\calL_{\Box}}

\newcommand{\spdisj}{\mathbin{\varovee}}
\newcommand{\spconj}{\mathbin{\varowedge}}
\newcommand{\bigspdisj}{\mathop{\mathlarger{\mathlarger{\mathlarger{\mathlarger{\varovee}}}}}\limits}
\newcommand{\bigspconj}{\mathop{\mathlarger{\mathlarger{\mathlarger{\mathlarger{\varowedge}}}}}\limits}
\newcommand{\form}{{\sf Form}}

\newcommand{\seqint}{\Left[\cdot\Right]}
\newcommand{\labint}[1]{\Left[#1\Right]}

\newcommand{\itRR}{\mathit{RR}}
\newcommand{\itRL}{\mathit{RL}}
\newcommand{\itLR}{\mathit{LR}}
\newcommand{\itLL}{\mathit{LL}}

\newcommand{\Icomment}[1]{ }

\newcommand{\labe}[2]{{#1}\mathord{:}{#2}}

\title{Interpolation in Proof Theory}
\author{Iris van der Giessen}{University of Amsterdam, The Netherlands}{i.vandergiessen@uva.nl}{https://orcid.org/0009-0008-4908-2496}{} 
\author{Raheleh Jalali}{University of Bath, UK}{rahele.jalali@gmail.com}{https://orcid.org/0000-0002-3321-8087}{}
\author{Roman Kuznets}{Institute of Computer Science of the Czech Academy of Sciences, Czechia}{kuznets@cs.cas.cz}{https://orcid.org/0000-0001-5894-8724}{}

\begin{document}

\maketitle              % typeset the header of the contribution

\begin{abstract}
  This chapter provides a comprehensive overview of proof-theoretic methods for establishing interpolation properties   
  across a range of logics, including classical, intuitionistic, modal, and substructural logics. Central to the discussion are two foundational techniques: Maehara's method for Craig interpolation
  and Pitts' method for uniform interpolation. The chapter demonstrates how these methods lead to results on the existence of well-behaved proof systems in the contemporary framework of universal proof theory and how they provide a road map for constructing interpolation proofs using modern proof formalisms. The emphasis of the chapter is on constructive, modular, and syntax-driven techniques that illuminate deeper connections between interpolation properties and proof systems. 
\end{abstract}

\setcounter{tocdepth}{2}
 \tableofcontents

\section{Introduction}  

Proof-theoretic techniques  towards interpolation originating from the 1950/60's~\cite{Craig57a,Maehara61} are well-established  and play an important role in the contemporary research on interpolation. The aim of this chapter is to provide a nice balance between (1)~the~\emph{basics} of the praised proof-theoretic methods based on sequent calculi to prove interpolation properties~(IPs), (2)~the~recent \emph{applications} of such methods to the existence of proof systems within the emerging field of \emph{universal proof theory}, and (3)~the~\emph{development} of similar methods for more powerful contemporary proof systems such as labelled sequents.

The proof-theoretic method to prove the Craig interpolation property~(CIP) is known as Maehara's method~\cite{Maehara61}. It provides an algorithm that, given a sequent proof of $\phi \impl \psi$,  produces a Craig interpolant~$\theta$ for it. This method is \emph{constructive}: it explicitly constructs an interpolant, instead of merely proving its existence. Sometimes, the constructed interpolants preserve the polarity of propositional variables, which then shows the so-called Lyndon interpolation property~(LIP). Maehara's method is used for a wide range of logics, including classical and intuitionistic first-order logic~\cite{Takeuti75,TroelstraS00,Schuette77}, (intuitionistic) modal logic~\cite{Fitting83,vdGiessen22}, linear logic~\cite{Roorda94}, and substructural logic~\cite{Onokomori}.

Pitts~\cite{Pitts92} has developed a syntactic method to prove the uniform interpolation property~(UIP) for the intuitionistic propositional logic~$\IPC$ (see  \refchapter{chapter:uniform},  the main chapter on uniform interpolation). At the time, this was a rather surprising result as it shows that propositional quantification can be interpreted in~$\IPC$ itself. Pitts uses a strongly terminating sequent calculus for~$\IPC$ to prove the~UIP. Pitts' idea was generalized to various propositional and (intuitionistic) modal logics~\cite{Bilkova06,Iemhoff19APAL,Iemhoff19AML,AkbarTabatabaiIJ22nonnormal,Feree_etal24} and  formalized in Coq/Rocq~\cite{Feree_vGool23,Feree_etal24} enabling the automatic computation of uniform interpolants.

Maehara's method and Pitts' method are based on sequent calculi. With the development of more expressive sequent structures, such as labelled sequents, hypersequents, and nested sequents, related methods were designed to prove interpolation properties using such structures~\cite{KuznetsL18,Kuznets16LFCS,Lyon_etall20,vdGiessenJK24,vdGiessenJK23TABLEAUX}. The new methods reveal that a more robust type of `interpolants' is needed where it does not suffice to define them as formulas, but rather as structures that mimic the type of sequent structure at hand. Interestingly, such calculi seem more suitable to prove the~LIP than Maehara's method via analytic sequent calculi. 

The proof-theoretic approach to interpolation is not only useful in light of its constructive nature. The approach can also establish general results in proof theory itself.  In the field of \emph{universal proof theory}, structural conditions on sequent rules are identified that guarantee the~CIP~or~UIP~\cite{Iemhoff19APAL,AkbarTabatabaiJ25,AkbarTabatabaiJ18uniform,AkbarTabatabaiIJ22nonnormal}. Such sequent rules are called \emph{semi-analytic} and capture the intuitive notion of `nice' 
rules. Given the fact that interpolation properties are rare among intermediate logics, most such logics cannot have such `nice' sequent calculi. 

The advantages of the proof-theoretic method are the following (we already mentioned a couple): it provides a \emph{constructive} proof of interpolation properties; the complexity of a Craig interpolant is often \emph{linear} in the size of the proof; it is flexible across different classes of logics; and it finds applications in \emph{proof complexity} and \emph{universal proof theory}. There are also some disadvantages. 
Although the method is \emph{constructive}, it does not immediately provide an algorithm for computing
an interpolant based on a provable implication $\phi \impl \psi$ because it requires the construction of a sequent proof first. Also, Maehara's method is not complete meaning that it might be that given $\phi \impl \psi$ some of its interpolants cannot be constructed via this method~\cite{HetzlJalali24}. \looseness=-1

Some applications and developments of the proof-theoretic approach to interpolation can be found in other chapters of this book.  \refchapter{chapter:proofcomplexity} on \emph{proof complexity} discusses the feasible extraction of interpolants from proofs, which relates to the NP~vs.~coNP problem.  \refchapter{chapter:fixedpoint} discusses new Maehara-type methods for \emph{cyclic proofs}.

This chapter is structured as follows. 
We choose to illustrate the methods for the simple propositional and modal logics introduced in \cref{subseq:prel}. The basics of Maehara's method and Pitts' method are explained in \cref{sec:basics}. The modularization of these methods and its impact on universal proof theory is discussed in \cref{sec:universal-proof-theory}. In \cref{sect:gener} the proof-theoretic method and results for labelled sequents, hypersequents, and nested sequents is discussed. In \cref{sec:conclusion} we provide some concluding remarks.

\section{Preliminaries}
\label{subseq:prel}

\subsection{Propositional and Modal Logic}

Fix a countable set of \emph{propositional variables} $\Prop = \{p,q,\dots\}$, also called \emph{atoms}. 
For propositional logic we work with the language $\calLp \ce \{ \bot, \top, \vee, \wedge, \impl \}$  and use symbols $\phi, \psi, \dots$ to denote formulas generated from~$\Prop$~and~$\calLp$ as usual. 
The set of all formulas is denoted by~$\form_{\calLp}$. 
We use the standard abbreviation $\neg \phi \ce  \phi \impl \bot$. 
The base logics used in this chapter are \textit{classical propositional logic}~$\CPC$ and \textit{intuitionistic propositional logic}~$\IPC$, where $\IPC$~can be defined as $\CPC$~without the law of the excluded middle $\phi \vee \neg \phi$. 
An \emph{intermediate logic} is any logic~$\lgc$ such that $\IPC \subseteq \lgc \subseteq \CPC$, where we consider a logic as a set of formulas closed under substitution and modus ponens. 
The interested reader is referred to \cite{vDalenTroelstra88,TroelstraS00}.

For modal logic we work with the language $\calLBox \ce \calLp \cup \{ \Box \}$. 
As usual we define $\Diamond \phi \ce  \neg \Box \neg \phi$.  
We define~$\Box^n \phi$ inductively as $\Box^0 \phi \ce \phi$ and $\Box^{n+1} \phi \ce \Box (\Box^n \phi)$.  \Cref{fig:modal-axioms} presents standard modal axioms and rules, and the left column of \cref{fig:modal-logics} presents common modal logics. 
Semantically, modal logics can be characterized by Kripke frames, also known as  \textit{linear transition systems}. 
A \emph{Kripke model} is a tuple $\calM = (W,R,V)$ with a non-empty set~$W$ of \emph{worlds}, a binary relation $R \subseteq W \times W$ called the \emph{accessibility relation}, and a function \mbox{$V \colon W \to \mathcal{P}(\Prop)$} called a \emph{valuation}.  
\Cref{fig:modal_satisfaction} provides the inductive definition of the \emph{modal satisfaction relation}~$\models$ between worlds in a model and formulas. 
We say that formula~$\phi$ is \emph{valid} in model $\calM = (W,R,V)$ if{f} $\calM,w \models \phi$ for all worlds $w \in W$.  
Well-known soundness and completeness results are summarized in \cref{fig:modal-logics}, where $\phi$~is in a logic whenever it is valid in all respective models of the indicated classes. 
We refer to~\cite{Blackburnetall01ML,ChagrovZakharyaschev97ModalLogic} for full introductions on modal logic. 

\begin{figure}[t]
  \centering
\fbox{\parbox{0.97\textwidth}{
{\centering
$
\begin{array}{ll}
    \kaxiom &\Box (\phi \impl \psi) \impl (\Box \phi \impl \Box \psi)\\
   \taxiom &\Box \phi \impl \phi\\
    \fouraxiom &\Box \phi \impl \Box \Box \phi \\
\end{array}
\quad
\begin{array}{ll}
    \fiveaxiom &\neg \Box \phi \impl \Box \neg \Box \phi\\
    \daxiom &\Box \phi \impl \Diamond \phi\\
    \wlobaxiom & \Box (\Box \phi \impl \phi) \impl \Box \phi
\end{array}
\quad
\begin{array}{ll}
    {\MP} & \vliinf{}{}{\psi}{\phi \impl \psi}{\phi}\\    
    {\Nrule} & \vlinf{}{}{\Box \phi}{\phi}
\end{array}
$
}}}
\caption{Modal axioms and the inference rules  modus ponens~{\MP} and necessitation~{\Nrule}.} 
\label{fig:modal-axioms}
\end{figure}

\begin{figure}[t!]
    \centering
    \fbox{\parbox{0.98\textwidth}{

    $
    \begin{array}{lll}
	 \calM, w \models p &\text{if{f} } & p \in V(w);\\
    \calM, w \models\top; & &\\
    \calM, w\not\models\bot; & & \\
    \calM, w\models\phi\impl\psi &\text{if{f} } & \calM, w\not\models\phi \text{ or }\calM, w\models\psi;\\
    \calM, w\models\phi\vee\psi &\text{if{f} } & \calM, w\models\phi \text{ or }\calM, w\models\psi;\\
    \calM, w\models\phi\wedge\psi &\text{if{f} } & \calM, w\models\phi \text{ and }\calM, w\models\psi;\\
    \calM, w\models\Box \phi &\text{if{f} } & \text{for all } v \text{ such that } wRv, \text{ we have }\calM, v\models\phi.
    \end{array}
    $
    }}
    \caption{Definition of the modal satisfaction relation~$\models$ between Kripke model $\calM = (W,R,V)$, world $w \in W$, and formula~$\phi$.}
\label{fig:modal_satisfaction}
\end{figure}

\begin{figure}[t!]
    \small{
\fbox{\parbox{0.97\textwidth}{
\setlength\abovedisplayskip{0pt}
\setlength\belowdisplayskip{0pt}
\begin{align*}
    \K &\ce  \CPC  + \kaxiom   && \text{class of all models}\\
    \T &\ce   \K  + \taxiom    && \text{class of all reflexive models}\\
    \D &\ce  \K + \daxiom     && \text{class of all serial models}\\
    \Kvier &\ce  \K + \fouraxiom    && \text{class of all transitive models}\\
    \Svier &\ce  \K + \taxiom  + \fouraxiom    && \text{class of all transitive and reflexive models}\\
    \Svijf &\ce  \K + \taxiom  + \fiveaxiom &&\text{class of all total models}\\
    \GL &\ce  \K + \wlobaxiom && \text{class of all transitive and conversely well-founded models}
\end{align*}
}
}}
\caption{Axiomatizations of classical modal logics (each should be understood as closed under rules~{\MP}~and~{\Nrule}) with their soundness and completeness results.}
\label{fig:modal-logics}
\end{figure}

In general, a language~$\calL$ is a set of symbols and we write~$\form_{\calL}$ to denote the set of formulas generated from~$\Prop$~and~$\calL$. For every language, we use symbols $\phi, \psi, \dots$ to denote formulas. We write $\Sub{\phi}$ for the set of all subformulas occurring in formula~$\phi$. For a logic~$\lgc$, we write $\Lproves \phi$ to mean $\phi \in \lgc$.

\subsection{Sequent Calculi}
\label{subsec:seqcal}

Sequent calculi are rule-based descriptions of a logic and originate from Gentzen~\cite{Gentzen35a,Gentzen35b}. Here we introduce sequent calculi for propositional logic and modal logic~\cite{Gentzen35a,Takeuti75,Schuette77,TroelstraS00}. 

A \emph{sequent} is an expression of the form $\Gamma\seqar\Delta$ where $\Gamma$~and~$\Delta$~are finite multisets of formulas. 
$\Gamma$~is called the \emph{antecedent} of such a sequent, and $\Delta$~is called its \emph{succedent} or \emph{conclusion}. If $\Delta$~contains at most one formula, the sequent is called a \emph{single-conclusion} sequent.  We use capital letters $S, S', S_1, S_2, \dots$ to refer to sequents. We use the notation $\Gamma, \varphi$ for the multiset $\Gamma \cup \{\varphi\}$ and $\Gamma, \Delta$ for $\Gamma \cup \Delta$. We define $\Sub{\Gamma} \ce  \bigcup_{\phi \in \Gamma} \Sub{\phi}$ for the set of all subformulas of formulas occurring in~$\Gamma$ and $\Box \Gamma \ce  \{\Box \phi \mid \phi \in \Gamma \}$.  

The intended meaning of a sequent $\Gamma\seqar\Delta$ is the formula 
$
\bigwedge_{\phi \in \Gamma}\phi\impl\bigvee_{\psi \in \Delta}\psi
$
also called the \emph{formula interpretation} of the sequent. We adopt the convention that the empty conjunction is~$\top$ and the empty disjunction is~$\bot$. 

An \emph{inference rule} is an expression of the form on the left below where $S$~and~$S_i$'s~are sequents. Each inference rule gives rise to different \emph{rule instances}, namely for every substitution~$\sigma$, the expression on the right is an \emph{instance} of the rule.
\begin{equation}
\label{eq:seq_rule}
\vliiiinf{}{}{S}{S_1}{S_2}{\dots}{S_n}  \qquad \qquad \qquad
\vliiiinf{}{}{\sigma{S}}{\sigma{S_1}}{\sigma{S_2}}{\dots}{\sigma{S_n}}
\end{equation}
This is a rather formal way of presenting sequent rules needed in \cref{sec:semi-analytic}, 
but for now we adopt a standard practice in proof theory of leaving the substitutions implicit and write rules as \emph{rule schemas} (e.g.,~\cref{fig:G3pc}), in which so-called \emph{multiset variables} (resp.~formula variables and atom variables) in the rules can represent any multiset (resp.~formula and atom).

The upper sequent(s) of a rule are called the \emph{premise(s)} and the lower sequent is called the \emph{conclusion}. A rule with no premises is called an \emph{axiom}, a rule with one premise is a \emph{unary} rule, and a rule with two premises is a \emph{binary} rule. When considering rule schemas as in \cref{fig:G3pc}, the distinguished formula in the conclusion is called the \emph{principal formula} and the multisets like $\Gamma, \Delta$, etc.~are called the \emph{context}. The formulas in the premises that are not in the context are called the \emph{active} formulas of the rule. Axiom~$(\id)$ differs in that it has two principal formulas but no active ones. In the cut rule (\cref{fig:G3pc}) we call~$\phi$ the \emph{cut formula}.

A \emph{sequent calculus} is a finite set of inference rules and a \emph{proof} in a given sequent calculus~$\SC$ is a finite rooted tree in which the nodes are sequents generated inductively from instances of the inference rules in~$\SC$. This means that in a proof all leaves are instances of axioms. The root of the tree is called the \emph{endsequent} and is the sequent proved by the sequent proof.  
We write $\SCproves  \Gamma \seqar \Delta$ if{f} there is a proof in sequent calculus~$\SC$ with endsequent $\Gamma \seqar \Delta$.

Sequent calculi are used to represent the provable formulas of a logic via soundness and completeness theorems. This is reflected in the following definition:

\begin{definition}
\label{CalculusForLogic}
    Let $\SC$~be a sequent calculus and let $\lgc$~be a logic over the same language~$\calL$. We say that $\SC$~is a \emph{sequent calculus for}~$\lgc$, or that~$\lgc$ is \emph{the logic of}~$\SC$, if{f} for every sequent $\Gamma \seqar \Delta$, we have $\SCproves  \Gamma \seqar \Delta$ if{f} $\Lproves \bigwedge \Gamma \impl \bigvee \Delta$. 
\end{definition}

\Cref{fig:G3pc} defines the well-known sequent calculus~$\LK$ for classical logic~$\CPC$ and the single-conclusion calculus~$\LJ$ for intuitionistic logic~$\IPC$. We define the calculi without the cut rule~(\cref{fig:G3pc}). \Cref{fig:modal-rules} presents some well-known modal rules, and \cref{fig:modal-sequent} defines calculi for classical modal logics. 
We denote sequent calculi in bold font, using the name of the logic, sometimes prefixed with~$\SC\mathbf{.}$, to identify the system. Thus,
in \cref{fig:modal-rules} we mean that $\GthreeK$~is a sequent calculus for modal logic~$\K$,~etc.

\begin{figure}[t]
\centering{
\small
\fbox{
    \parbox{0.965\textwidth}{
    \vspace{-1.2em}
    \[
        \vlinf{\id}{}{p \seqar p}{}
    \]
    \vspace{-1.9em}
    \\
    \[
        \vlinf{\lr\bot}{}{\bot\seqar}{}
        \qquad
        \vliinf{\lr\impl}{}{\phi \impl \psi,\Gamma\seqar\Delta}{\Gamma\seqar\Delta,\phi}{\psi, \Gamma \seqar \Delta} 
        \qquad
        \vliinf{\lr\vee}{}{\phi\vee\psi,\Gamma\seqar\Delta}{\phi,\Gamma\seqar\Delta}{\psi,\Gamma\seqar\Delta} 
        \qquad
        \vlinf{\lr\wedge^i }{\mathsmaller{(i=0,1)}}{\phi_0\wedge\phi_1,\Gamma\seqar\Delta}{\phi_i,\Gamma\seqar\Delta}
    \]
    \vspace{-0.8em}
    \\
    \[
        \vlinf{\rr\top}{}{\seqar\top}{}
        \qquad
        \vlinf{\rr\impl}{}{\Gamma\seqar\Delta,\phi \impl \psi}{\phi,\Gamma\seqar\Delta, \psi} 
        \qquad
        \vlinf{\rr\vee^i}{\mathsmaller{(i=0,1)}}{\Gamma\seqar\Delta,\phi_0 \vee \phi_1}{\Gamma\seqar\Delta,\phi_i} 
        \qquad
        \vliinf{\rr\wedge}{}{\Gamma\seqar\Delta,\phi\wedge\psi}{\Gamma\seqar\Delta,\phi}{\Gamma\seqar\Delta,\psi}
    \]
    \vspace{-0.8em}
    \\
    \[
        \vlinf{\lr\wk}{}{\phi,\Gamma\seqar\Delta}{\Gamma\seqar\Delta}
        \qquad \qquad
        \vlinf{\rr\wk}{}{\Gamma\seqar\Delta,\phi}{\Gamma\seqar\Delta}
        \qquad \qquad 
        \vlinf{\lr\cntr}{}{\phi,\Gamma\seqar\Delta}{\phi,\phi,\Gamma\seqar\Delta}
        \qquad \qquad
        \vlinf{\rr\cntr}{}{\Gamma\seqar\Delta,\phi}{\Gamma\seqar\Delta,\phi,\phi}
    \]
    \vspace{0.1em}
    \hrule
    \vspace{-0.2em}
    \[
        \qquad \qquad \qquad \qquad \qquad \qquad \vliinf{\cut}{}{\Gamma,\Gamma' \seqar \Delta,\Delta'}{\Gamma \seqar \Delta, \phi}{\phi,\Gamma'\seqar\Delta'}
    \]
    \vspace{-1em}
    }
}}
\caption{\textbf{Top:} Sequent calculus~$\LK$. Subscripts~$l$~and~$r$ stand for \emph{left} and \emph{right} rule respectively. The \emph{structural rules} are \emph{weakening}~$(\lr\wk)$ and $(\rr\wk)$ and \emph{contraction}~$(\lr\cntr)$ and~$(\rr\cntr)$. Sequent calculus~$\LJ$ is obtained by restricting these rules to single-conclusion sequents, i.e.,~$\Delta=\emptyset$ in the right rules, and in the left rules $\Delta$~contains at most one formula, with the extra condition for~$(\lr\impl)$ that $\Delta$~is not present in the left premise, and the right contraction rule~$(\rr\cntr)$ is omitted. \textbf{Bottom:} The cut rule. For single-conclusion calculi we require that $\Delta = \emptyset$ and that $\Delta'$~contains at most one formula.}
\label{fig:G3pc}
\end{figure}

\begin{figure}[t!]
  \centering
  \begin{minipage}[t]{0.6\textwidth}
    \small{
\fbox{\parbox{.97\textwidth}{
{\centering
\vspace{-0.22em}
\[
\vlinf{\Kseqrule}{}{\Box\Gamma\seqar\Box \phi}{\Gamma \seqar \phi} \quad \quad 
\vlinf{\Tseqrule}{}{\Box \phi,\Gamma\seqar\Delta}{\Box \phi,\phi,\Gamma \seqar \Delta} \qquad \quad
\vlinf{\Dseqrule}{}{\Box\Gamma, \Box \phi \seqar}{\Gamma,\phi\seqar}
\]
\\
\[
\vlinf{\Fourseqrule}{}{\Box\Gamma\seqar\Box \phi}{\Box \Gamma, \Gamma \seqar \phi} \quad \quad 
\vlinf{\Sfourseqrule}{}{\Box\Gamma\seqar \Box \phi}{\Box \Gamma \seqar \phi} \quad \quad
\vlinf{\GLseqrule}{}{\Box\Gamma\seqar\Box \phi}{\Box \Gamma, \Gamma,\Box \phi \seqar \phi} 
\]
\vspace{-0.25em}
}
}}}
\caption{Modal sequent rules.} \label{fig:modal-rules}
  \end{minipage}
  \hfill
  \begin{minipage}[t]{0.37\textwidth}
    \small{
    \fbox{\parbox{0.95\textwidth}{
    \vspace{-1.15em}
    \begin{align*}
        \GthreeK &\ce  \LK + (\Kseqrule)\\
        \GthreeT &\ce  \GthreeK + (\Tseqrule)\\
        \GthreeD &\ce  \GthreeK + (\Dseqrule)\\
        \GthreeKfour &\ce  \LK + (\Fourseqrule)\\
        \GthreeSfour &\ce  \LK + (\Sfourseqrule) + (\Tseqrule)\\
        \GthreeGL &\ce  \LK + (\GLseqrule)
    \end{align*}
    \vspace{-1.75em}
    }}
    }
\caption{Modal sequent calculi.} \label{fig:modal-sequent}
  \end{minipage}
\end{figure}

Important to note is that the cut rule 
is admissible in the presented calculi. It means that the rule can safely be added to the calculi without deriving new sequents.
    Formally, a rule with premises $S_1, \dots, S_n$ and conclusion~$S$ is \emph{admissible} in a sequent calculus~$\SC$ if{f} for any substitution~$\sigma$, if $\SCproves \sigma{S_i}$ for all~$i$, then $\SCproves \sigma{S}$.  

\begin{theorem}[Cut-admissibility]
    Rule $(\cut)$ is admissible in all calculi from \cref{fig:G3pc,fig:modal-sequent}.
\end{theorem}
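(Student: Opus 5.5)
We prove admissibility of $(\cut)$ by Gentzen's syntactic cut-elimination, applied to a single \emph{topmost} cut. Suppose the premises $\Gamma\seqar\Delta,\phi$ and $\phi,\Gamma'\seqar\Delta'$ have cut-free derivations $\pi_1,\pi_2$. It suffices to show that $\Gamma,\Gamma'\seqar\Delta,\Delta'$ has a cut-free derivation.

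Because the calculi contain explicit contraction, a plain cut does not reduce well: permuting it upward past a contraction on~$\phi$ duplicates the cut without lowering any measure. We therefore prove admissibility of Gentzen's \emph{multicut}
\[
\vliinf{\mathsf{mcut}}{}{\Gamma,\Gamma' \seqar \Delta,\Delta'}{\Gamma \seqar \Delta, \phi^n}{\phi^m,\Gamma'\seqar\Delta'}
\qquad (n,m\geq 1),
\]
where $\phi^k$ denotes $k$ copies of~$\phi$. The ordinary cut is the case $n=m=1$. The argument is a main induction on the complexity of~$\phi$ (counting $\Box$ as a connective), with a subinduction on the sum of the heights of $\pi_1$ and~$\pi_2$.

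The case analysis runs as follows.
\begin{itemize}
\item If one premise is an axiom ($\id$, $\lr\bot$, $\rr\top$), the conclusion is obtained directly, possibly using weakenings.
\item If in some premise no copy of~$\phi$ is principal in the last rule (a weakening, a contraction, or a logical rule acting on the context), we permute the multicut upward and apply the subinduction hypothesis. A contraction on~$\phi$ is simply absorbed by changing $n$ or~$m$; this is the point of using multicut. A weakening on~$\phi$ either reduces the multiplicity or, when it was the last copy, lets us replace the whole multicut by weakenings.
\item If $\phi$ is principal on both sides, it is compound, and we reduce to multicuts on its immediate subformulas. The remaining copies of~$\phi$ are first removed by multicuts of lower height with the premises of the last rules, via the subinduction hypothesis.
\end{itemize}
For the propositional connectives this is the standard argument. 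For~$\LJ$ we must additionally check that every permutation preserves single-conclusion sequents and respects the restriction on $(\lr\impl)$. When the right premise ends in $(\lr\impl)$ with the context formula in the succedent, the multicut is permuted into the right premise of that rule only.

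The modal cases are where the individual calculi differ.
\begin{itemize}
\item For $(\Kseqrule)$ against $(\Kseqrule)$ with principal $\Box\phi$: from $\Gamma\seqar\phi$ and $\phi,\Sigma\seqar\psi$ we cut on~$\phi$ and reapply $(\Kseqrule)$, then weaken. A modal rule whose conclusion has $\Box\phi$ in the antecedent but not as the principal box is handled by first pushing the multicut into the other premise.
\item For $(\Tseqrule)$ the principal $\Box\phi$ is retained in the premise. The reduction against $(\Kseqrule)$ or $(\Sfourseqrule)$ therefore uses a cut on $\Box\phi$ of lower height followed by a cut on~$\phi$.
\item For $(\Dseqrule)$ against $(\Kseqrule)$, the cut on~$\phi$ is performed inside the premise $\Gamma,\phi\seqar$, after which $(\Dseqrule)$ is reapplied.
\item For $(\Fourseqrule)$ and $(\Sfourseqrule)$ the boxed context is retained in the premise. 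We first cut $\Box\phi$ against that premise, which has lower height, and then cut on~$\phi$.
\end{itemize}

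The main obstacle is $\GthreeGL$. In the premise $\Box\Gamma,\Gamma,\Box\phi\seqar\phi$ of $(\GLseqrule)$ the formula $\Box\phi$ reappears, so in the principal case the right-hand premise contains $\Box\phi$ again in the antecedent. Height reduction then gives a cut on $\Box\phi$ but no decrease in the size of the cut formula. The remedy we plan to use is Valentini's, following Goré--Ramanayake: we add a third measure, the \emph{width} of the cut, meaning the number of $(\GLseqrule)$ applications above the cut that introduce~$\Box\phi$. We then show that the problematic reduction lowers this width, while the cut on~$\phi$ itself is handled by the main induction hypothesis. This ordering of measures must be set up carefully, and it is the only step for which we expect nontrivial work.
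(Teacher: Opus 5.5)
The paper gives no proof of this theorem; it cites it as standard background. Your outline therefore has to stand on its own.

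For $\LK$, $\LJ$ and the calculi for $\K$, $\T$, $\D$, $\Kvier$, $\Svier$ it is the usual multicut argument and is essentially fine, apart from two small slips.
\begin{itemize}
\item The axioms carry no context. A multicut of $\Gamma\seqar\Delta,\bot^n$ against $\bot\seqar$, or of $\seqar\top$ against $\top^m,\Gamma'\seqar\Delta'$, is therefore not obtained ``directly''. You must push the multicut into the other premise, where $\bot$ on the right (resp.\ $\top$ on the left) can only come from weakening.
\item In the $(\Kseqrule)$/$(\Tseqrule)$ reduction you must finish with applications of $(\Tseqrule)$. These absorb the unboxed copy of $\Gamma$ produced by the cut on $\phi$.
\end{itemize}

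The real gap is $\GthreeGL$. Your sentence ``the problematic reduction lowers this width'' is exactly the step that fails as stated.

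In the principal case, $\tau$ derives $\Box X\seqar\Box\phi$ from $\Box X,X,\Box\phi\seqar\phi$, and $\rho$ derives $\Box\phi,\Box Y\seqar\Box\psi$ from $\Box\phi,\phi,\Box Y,Y,\Box\psi\seqar\psi$. First cut $\tau$ against the premise of $\rho$ (lower height). Then cut on $\phi$ against the premise of $\tau$ (lower degree). After contraction this gives $\Box X,X,\Box\phi,\Box Y,Y,\Box\psi\seqar\psi$, and the diagonal $\Box\phi$ must still be cut away against $\tau$.
\begin{itemize}
\item This residual cut has the \emph{same} left derivation $\tau$. Any count of $(\GLseqrule)$-inferences introducing $\Box\phi$ there is therefore unchanged.
\item Its right premise is an output of the induction hypothesis. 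You know nothing about its height or about the $(\GLseqrule)$-inferences it contains.
\item Worse, suppose $\Box Y$ already contains $\Box X$. Then contraction yields $\Box\phi,\Box Y,Y,\Box\psi\seqar\psi$. Weakening by $\phi$ followed by $(\GLseqrule)$ returns exactly $\Box\phi,\Box Y\seqar\Box\psi$. The residual cut then has literally the same premise sequents and conclusion as the cut you started from.
\end{itemize}
Hence no measure computed from the left derivation or from end-sequents can decrease. This looping is what made Valentini's original argument controversial. Gor\'e and Ramanayake's repair rests on a much finer analysis of the transformed derivations than counting introductions of $\Box\phi$.

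To complete the proof you need one of the following.
\begin{itemize}
\item Reproduce the Gor\'e--Ramanayake analysis.
\item Find a measure that demonstrably decreases along this regress. The residual cut naturally takes place at a sequent with the new diagonal $\Box\psi$ in its antecedent. That suggests a L\"ob-style measure, such as the number of boxed subformulas not yet in the antecedent. You would then also have to control the shape of the cut-free derivations returned by the induction hypothesis.
\item Abandon the syntactic route for $\GthreeGL$. Instead, obtain admissibility of cut from completeness of the cut-free calculus with respect to finite transitive irreflexive frames.
\end{itemize}
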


    Cut-admissibility theorems feature prominently in proof theory. Indeed, the cut rule allows for easy completeness results, but the possibility to eliminate cut has far reaching applications such as consistency and decidability results. Cut-free calculi are used in Maehara's proof-theoretic method for the~CIP~\cite{Maehara61}. We will see in \cref{sect:role_of_cut} that some forms of cut can be adopted in Maehara's method, but may create problems for the LIP.

\begin{remark}[Design choices of sequent calculi]
\label{remark:design_choices}
Multiple choices can be made when designing a sequent calculus. Two common choices  are: the type of sequents used (e.g.,~lists, multisets, or sets), and the choice between explicit or implicit structural rules. Generally, these choices make no difference for Maehara's method to prove the~CIP, only the cut rule may present a problem.  
Pitts' method for the~UIP  is more sensitive to such design choices where contraction rules create a problem~(\cref{subsec:Pitts}).  \lipicsEnd
\end{remark}

\begin{remark}[Sequent calculi vs tableaux]\label{remark:tableaux}
    The intended reading of the sequent rules is from top to bottom: assuming the validity of the premises, the conclusion is valid. Dual to sequent calculi are tableau calculi~\cite{Fitting83}. These can be viewed as sequent calculi `upside down', where the unsatisfiability of the upper formulas is preserved for the lower formulas. Maehara style methods also exist for tableau calculi~\cite{Fitting83}. See  \refchapter{chapter:propositional} of this book for more details. \lipicsEnd
\end{remark}

\section{Interpolation Using Sequent Calculi}
\label{sec:basics}

In this section we discuss the basics of the proof-theoretic methods of interpolation properties based on propositional and modal logics.

\subsection{Craig and Lyndon Interpolation via Maehara's Method}
\label{subsec:Maehara}

There are different types of Craig interpolation depending on what consequence is under consideration (see other chapters in this book). Maehara's method is very suitable for the \emph{local} Craig interpolation property where we interpolate the local consequence relation represented by $\impl$. It is also suitable for the local Lyndon interpolation property, which takes into account the polarity of propositional variables.

\begin{definition}[Signature]
The \emph{signature} of a formula~$\phi$, denoted~$\Voc{\phi}$, is the set of propositional variables occurring in~$\phi$. 

For the propositional language~$\calLp$ and modal language~$\calLBox$ we inductively define the sets~$\Vocpos{\phi}$~and~$\Vocneg{\phi}$ of propositional variables occurring in~$\phi$ with \emph{positive} and \emph{negative polarity} respectively: $\Vocpos{p}\ce\{p\}$ and $\Vocneg{p}\ce\emptyset$ for $p \in \Prop$; $\Vocpos{\bot} = \Vocpos{\top} = \Vocneg{\top} = \Vocneg{\bot}\ce\emptyset$; connectives $\vee$ and $\wedge$ preserve polarity; connective $\impl$ flips polarity in the antecedent and preserves polarity in the succedent; modality~$\Box$ preserves polarity. 

For a multiset~$\Gamma$, we define $\Voc{\Gamma}\ce\bigcup_{\phi\in\Gamma}\Voc{\phi}$, and analogously  for $\Vocpos{\Gamma}$ and $\Vocneg{\Gamma}$. For a sequent $\Gamma \seqar \Delta$, we define $\Vocpos{\Gamma \seqar \Delta} \ce \Vocneg{\Gamma} \cup \Vocpos{\Delta}$ and $\Vocneg{\Gamma \seqar \Delta} \ce \Vocpos{\Gamma} \cup \Vocneg{\Delta}$, where polarity flips in the antecedent similarly to~$\impl$.
\end{definition}

It is clear from the definition that $\Voc{\phi}=\Vocpos{\phi} \cup \Vocneg{\phi}$ for any formula $\phi$ from~$\calLp$~or~$\calLBox$, and similarly for multisets and sequents.

\begin{example}
$\Vocpos{(\Box p \impl q) \impl r} = \{p,r\}$ and $\Vocneg{(\Box p \impl q) \impl r} = \{ q \}$. 
 Similarly, $\Vocpos{(\Box p \impl q) \impl q} = \{p,q\}$ and $\Vocneg{(\Box p \impl q) \impl q} = \{ q \}$.  \lipicsEnd
\end{example}

\begin{remark}
We define the signature to consist of only propositional variables. One can also consider modalities as part of the signature. In the setting of first-order logic, the signature consists of relational and function symbols occurring in a first-order formula~$\phi$. \lipicsEnd
\end{remark}

\begin{definition}[Craig interpolation]
\label{def:Craig-interpolation}
A logic~$\lgc$ over a language~$\calL$ has the 
\emph{Craig Interpolation Property~(CIP)}, if{f} whenever $\Lproves \phi \impl \psi$, there is a formula $\theta \in \form_{\calL}$ such that:
    \begin{enumerate}
        \item\label{cond:seq_var} $\Voc{\theta}\subseteq \Voc{\phi} \cap \Voc{\psi}$,
        \item $\Lproves\phi \impl \theta$ and $\Lproves\theta\impl\psi$.  
    \end{enumerate}
Such formula~$\theta$ is called a  \emph{(Craig) interpolant} of $\phi \impl \psi$ in~$\lgc$.
\end{definition}

\begin{definition}[Lyndon interpolation]
    The \emph{Lyndon Interpolation Property~(LIP)} for logics over language~$\calLp$~or~$\calLBox$ is defined analogously to the~CIP  replacing condition~\ref{cond:seq_var} in \cref{def:Craig-interpolation} with the following condition:
    \begin{enumerate}[1'.]
    \item 
    \label{item:Lyndon}
    $\Vocpos{\theta}\subseteq \Vocpos{\phi} \cap \Vocpos{\psi}$ and $\Vocneg{\theta}\subseteq \Vocneg{\phi} \cap \Vocneg{\psi}$.
\end{enumerate}
In this case~$\theta$ is called a  \emph{Lyndon interpolant} of $\phi \impl \psi$ in~$\lgc$.
\end{definition}

It is clear from the definition that the~LIP  implies the~CIP.

\begin{example}[Non-unique interpolants]
\label{ex:non-unique}
Given a provable implication $\phi \impl \psi$, there may exist multiple provably non-equivalent Craig/Lyndon interpolants. For example in~$\CPC$, let $\phi =  (p \wedge q) \vee (\neg r \wedge s)$ and $\psi =  t \vee p \vee q \vee \neg r$. We have $\proves_\CPC \phi \impl \psi$. There are multiple provably non-equivalent interpolants for it, namely $(p \wedge q) \vee \neg r$, $p \vee \neg r$, $q \vee \neg r$, and $p \vee q \vee \neg r$. \lipicsEnd
\end{example}

Now we introduce Maehara's method and show how sequent calculi can be used to prove the~CIP~and~LIP. \emph{Maehara's method} is commonly described as the proof-theoretic method of proving the~CIP by an induction on a sequent proof (or proof in another suitable proof system). This is a helpful intuition. However, the method is more refined and interesting because of its more complex underlying process. The idea is to formulate a sequent-based interpolation property that is at least as strong as the~CIP and that relies on a calculus with richer sequents called split sequents.

\begin{definition}[Split sequent]
\label{def:split_seq}
    The expression of the form $\Gamma ; \Gamma' \seqar \Delta ; \Delta'$ is called a \emph{split sequent}. Multisets~$\Gamma$~and~$\Delta$ belong to the \emph{left split}, and $\Gamma'$ and $\Delta'$ to the \emph{right split}.
\end{definition}

One can also think of sequent $\Gamma \seqar \Delta$ being the left split and sequent $\Gamma' \seqar \Delta'$ being the right split of split sequent $\Gamma ; \Gamma' \seqar \Delta ; \Delta'$.

\begin{definition}[Sequent Craig interpolation]\label{def:Maehara}
Let $\SC$~be a sequent calculus over a language~$\calL$. We say that $\SC$~has the \emph{Sequent Craig Interpolation Property~(SCIP)} if{f} whenever we have $\SCproves \Gamma, \Gamma' \seqar \Delta, \Delta'$, there exists a formula $\theta \in \form_\calL$ such that
\begin{enumerate}
    \item\label{cond:seq_lang} $\Voc{\theta}\subseteq\Voc{\Gamma,\Delta}\cap\Voc{\Gamma',\Delta'}$, and
    \item\label{cond:seq_main} 
     $\SCproves \Gamma \seqar \Delta, \theta$ and $\SCproves \theta, \Gamma' \seqar \Delta'$.
\end{enumerate}  
If above properties hold we write $\SCproves \Gamma ; \Gamma' \xseqar{\theta}\Delta ; \Delta'$ and we call~$\theta$ a \emph{(Craig) interpolant} for split sequent $\Gamma ; \Gamma' \seqar \Delta ; \Delta'$ in~$\SC$. When dealing with a single-conclusion calculus we require $\Delta = \emptyset$ and $\Delta'$~to contain at most one formula.
 \end{definition}

The intuition of the~SCIP is that formula~$\theta$ interpolates between the left and right side of the split, rather than between the antecedent and succedent of the sequent. The splits are essential as explained after \cref{thm:Craig-sequent->logic}.

For sequent Lyndon interpolation we take into account the polarity of the variables:

\begin{definition}[Sequent Lyndon interpolation]
\label{def:sequent_Lyndon}
The \emph{Sequent Lyndon Interpolation Property~(SLIP)} for logics over language~$\calLp$~or~$\calLBox$ is defined analogously to the~SCIP replacing condition~\ref{cond:seq_lang} in \cref{def:Maehara}~with:
\begin{enumerate}[1'.]
    \item\label{cond:lynd_one} $\Vocpos{\theta}\subseteq \Vocneg{\Gamma \seqar \Delta}\cap \Vocpos{\Gamma' \seqar \Delta'}$  and $\Vocneg{\theta}\subseteq \Vocpos{\Gamma \seqar \Delta}\cap \Vocneg{\Gamma' \seqar \Delta'}$.
\end{enumerate}
In this case~$\theta$ is called a  \emph{Lyndon interpolant} for split sequent $\Gamma ; \Gamma' \seqar \Delta ; \Delta'$ in~$\SC$.
\end{definition}

The interpolation property of a sequent calculus implies interpolation of its logic:

\begin{theorem}\label{thm:Craig-sequent->logic} 
If a sequent calculus~$\SC$ has the~SCIP/SLIP, then its logic~$\lgc$ has the~CIP/LIP.
\end{theorem}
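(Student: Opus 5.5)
Assume $\Lproves \phi \impl \psi$. Since $\SC$ is a sequent calculus for~$\lgc$ (\cref{CalculusForLogic}), the one-formula antecedent/succedent case gives $\SCproves \phi \seqar \psi$. This is legitimate because the formula interpretation of $\phi \seqar \psi$ is exactly $\phi \impl \psi$, up to the trivial conventions for singleton conjunctions and disjunctions.

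Next we split this sequent as $\phi ; \emptyset \seqar \emptyset ; \psi$, that is, $\Gamma = \{\phi\}$, $\Delta = \emptyset$, $\Gamma' = \emptyset$, $\Delta' = \{\psi\}$. This split is admissible for single-conclusion calculi too, since $\Delta = \emptyset$ and $\Delta'$ has one formula. The SCIP then yields $\theta$ with $\Voc{\theta} \subseteq \Voc{\phi} \cap \Voc{\psi}$, together with $\SCproves \phi \seqar \theta$ and $\SCproves \theta \seqar \psi$. Applying \cref{CalculusForLogic} in the other direction turns these into $\Lproves \phi \impl \theta$ and $\Lproves \theta \impl \psi$. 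So $\theta$ is a Craig interpolant.

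For the Lyndon case, we compute the polarities for this split. On the left we have $\Vocneg{\phi \seqar} = \Vocpos{\phi}$ and $\Vocpos{\phi\seqar} = \Vocneg{\phi}$. On the right we have $\Vocpos{\seqar \psi} = \Vocpos{\psi}$ and $\Vocneg{\seqar\psi} = \Vocneg{\psi}$. Condition~1' of \cref{def:sequent_Lyndon} therefore becomes $\Vocpos{\theta} \subseteq \Vocpos{\phi}\cap\Vocpos{\psi}$ and $\Vocneg{\theta}\subseteq\Vocneg{\phi}\cap\Vocneg{\psi}$, which is precisely the LIP condition.

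There is no real obstacle here. The only point to check carefully is this polarity bookkeeping: the flip in the antecedent is what makes the left split's "negative" signature coincide with $\Vocpos{\phi}$.
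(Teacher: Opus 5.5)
Your proof is correct and takes the same route as the paper's sketch: pass from $\Lproves \phi \impl \psi$ to $\SCproves \phi \seqar \psi$, take an interpolant for the split $\phi ; \ \seqar \ ; \psi$, and translate both halves back through the calculus--logic correspondence. Your polarity computation for the Lyndon case carries out exactly the verification that the paper leaves to the reader.
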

\begin{proof}[Proof sketch]
Assume that $\SC$~has the~SCIP/SLIP and let $\Lproves\phi \impl\psi$. By \cref{CalculusForLogic}, $\SCproves \phi \seqar \psi$. For any Craig/Lyndon interpolant $\theta$  for the split $\phi ; \ \seqar \ ; \psi$ of this sequent in~$\SC$, one can prove that $\theta$~is a Craig/Lyndon interpolant of $\phi \impl \psi$ in~$\lgc$.
\end{proof}

The idea of Maehara's method is to present the sequent rules using split sequents, where each sequent rule has multiple counterparts in the split sequent calculus depending on whether the principal formula is on the left or on  the right side of the split. \Cref{fig:interpolation} presents the split system of $\LK$. Given a split sequent proof of $\SCproves \phi ; \ \seqar \ ; \psi$, one assigns Craig interpolants to the leaves and inductively defines the interpolants when going down the tree yielding an interpolant for $\SCproves \phi ; \ \seqar \ ; \psi$. The splits ensure the variable property~\ref{cond:seq_lang} of \cref{def:Maehara}. Indeed, bottom-up, the splits keep track of those formulas originating from $\phi$ on the left and those originating from $\psi$ on the right side of the split.  For example, in the following instance of rule~$R(\rr\impl)$, formula $\psi_1$ should be stored on the right side of the split in the antecedent of the premise:\looseness=-1
\[
{\small{
\vlinf{R(\rr\impl)}{}{\phi ; \seqar\ ;\psi_1 \impl \psi_2}{ \phi ; \psi_1 \seqar \ ; \psi_2}
}}
\]
In the split sequent calculus of $\LK$ we see that active formulas are always on the same side as the principal formula, which is not necessarily the case for sequent calculi for other logics, see \cref{remark:IPC}.

\begin{remark}[Monotone connectives]
When working with a language with only \emph{monotone connectives}, Maehara's split is redundant as the left and right sides can be taken as the left and right sides of the sequent arrow~$\seqar$  (cf.~symmetric sequents in~\cite{Fitting83}).  
\lipicsEnd
\end{remark}

\begin{figure}[!t]
\centering{
\small{
\fbox{\parbox{0.97\textwidth}{
\vspace{0.5em}
\ \ \textbf{Split axioms:}
\vspace{-0.5em}
\[
        \vlinf{\itRR(\id)}{}{;p\xseqar{\top} \ ;p}{} \quad
        \vlinf{\itLR(\id)}{}{p;\ \xseqar{p}\ ;p}{} \quad
        \vlinf{\itRL(\id)}{}{;p\xseqar{\neg p}p;}{} \quad
        \vlinf{\itLL(\id)}{}{p; \ \xseqar{\bot}p;}{}
\]
\vspace{-0.9em}
\[
        \vlinf{L(\lr\bot)}{}{\bot; \ \xseqar{\bot} \ ;}{} \quad \ \ 
        \vlinf{R(\lr\bot)}{}{;\bot\xseqar{\top} \ ;}{} \quad \ 
        \vlinf{L(\lr\top)}{}{; \ \xseqar{\bot}\top;}{} \quad \ \ 
        \vlinf{R(\rr\top)}{}{; \ \xseqar{\top} \ ; \top}{} 
\]

\ \ \textbf{Some unary split rules:}
\[
\vlinf{R(\lr\wedge^i)}{}{\Gamma ; \Gamma',\phi_0\wedge\phi_1\xseqar{\theta}\Delta; \Delta'}{\Gamma; \Gamma',\phi_i\xseqar{\theta}\Delta; \Delta'} \quad
\vlinf{R(\rr\impl)}{}{\Gamma;\Gamma'\xseqar{\theta}\Delta;\Delta',\phi \impl \psi}{\Gamma;\Gamma',\phi\xseqar{\theta}\Delta;\Delta',\psi} \quad\vlinf{L(\rr\cntr)}{}{\Gamma;\Gamma'\xseqar{\theta}\Delta,\phi;\Delta'}{\Gamma;\Gamma' \xseqar{\theta} \Delta,\phi,\phi;\Delta'}
\]

\ \ \textbf{Binary split rules:}
\[
        \vliinf{L(\lr\vee)}{}{\phi\vee\psi,\Gamma; \Gamma'\xseqar{\theta_1 \vee \theta_2}\Delta;\Delta'}{\phi,\Gamma; \Gamma'\xseqar{\theta_1}\Delta;\Delta'}{\psi,\Gamma; \Gamma'\xseqar{\theta_2}\Delta;\Delta'} \ \ \  
        \vliinf{R(\lr\vee)}{}{\Gamma; \Gamma',\phi\vee\psi\xseqar{\theta_1 \wedge \theta_2}\Delta;\Delta'}{\Gamma; \Gamma',\phi\xseqar{\theta_1}\Delta;\Delta'}{\Gamma; \Gamma',\psi\xseqar{\theta_2}\Delta;\Delta'} 
\]
\[
\vliinf{L(\rr\wedge)}{}{\Gamma;\Gamma'\xseqar{\theta_1 \vee \theta_2}\phi\wedge\psi,\Delta;\Delta'}{\Gamma;\Gamma'\xseqar{\theta_1}\phi,\Delta;\Delta'}{\Gamma;\Gamma'\xseqar{\theta_2}\psi,\Delta;\Delta'} \quad
\vliinf{R(\rr\wedge)}{}{\Gamma;\Gamma'\xseqar{\theta_1 \wedge \theta_2}\Delta;\Delta',\phi\wedge\psi}{\Gamma;\Gamma'\xseqar{\theta_1}\Delta;\Delta',\phi}{\Gamma;\Gamma'\xseqar{\theta_2}\Delta;\Delta',\psi}
\]
\[
\vliinf{L(\lr\impl)}{}{\phi \impl \psi,\Gamma;\Gamma'\xseqar{\theta_1 \vee \theta_2}\Delta;\Delta'}{\Gamma;\Gamma'\xseqar{\theta_1}\Delta,\phi;\Delta'}{\psi, \Gamma;\Gamma' \xseqar{\theta_2} \Delta;\Delta'} \quad
\vliinf{R(\lr\impl)}{}{\Gamma;\Gamma',\phi \impl \psi\xseqar{\theta_1 \wedge \theta_2}\Delta;\Delta'}{\Gamma;\Gamma'\xseqar{\theta_1}\Delta;\Delta',\phi}{\Gamma;\Gamma',\psi \xseqar{\theta_2} \Delta;\Delta'} \quad
\]
}
}}
}
\caption{Craig interpolant construction with split sequent calculus for $\LK$. Capital letters~$L$~and~$R$ denote whether the principal formula is on the left or right side of the split respectively.  The figure presents only three of the unary split rules, because all other unary split rules have the same shape in which the active formulas in the premise stay on the same side of the split as the principal formula of the conclusion and the interpolant of the conclusion is preserved from the premise.}
\label{fig:interpolation}
\end{figure}

We present Maehara's method for classical propositional logic: 

\begin{theorem}\label{thm:Maehara_CPC}
    $\LK$ has the~SCIP~and~SLIP, and hence $\CPC$ enjoys the~CIP~and~LIP.
\end{theorem}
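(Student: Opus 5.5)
We prove the SLIP for $\LK$ directly; the SCIP then follows because a Lyndon interpolant is in particular a Craig interpolant, the signature condition being the union of the two polarity conditions. The passage to $\CPC$ is \cref{thm:Craig-sequent->logic}. We argue by induction on the height of a cut-free $\LK$-proof of $\Gamma,\Gamma'\seqar\Delta,\Delta'$. The claim is that for \emph{every} partition of the endsequent into a split $\Gamma;\Gamma'\seqar\Delta;\Delta'$ there is a $\theta$ satisfying condition~1' of \cref{def:sequent_Lyndon} with $\vdash_{\LK}\Gamma\seqar\Delta,\theta$ and $\vdash_{\LK}\theta,\Gamma'\seqar\Delta'$. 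The interpolants are those of \cref{fig:interpolation}. Quantifying over all splits is essential, since the inductive hypothesis must be applied to whichever split of the premise the rule induces.

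\textbf{Base cases.} Take the four splits of $(\id)$, with interpolants $\top, p, \neg p, \bot$, and the splits of $(\lr\bot)$ and $(\rr\top)$. In each case both required sequents are derivable in a few steps using weakening and $(\rr\impl)$ or $(\lr\impl)$. For example, for $\itRL(\id)$ we derive $\seqar p,\neg p$ and $\neg p, p\seqar$. The polarity conditions are immediate. For instance, $p$ in $\itLR$ occurs positively in $\theta=p$, and $p\in\Vocneg{p\seqar}$ and $p\in\Vocpos{\seqar p}$. Similarly, $\neg p$ in $\itRL$ has $p$ negative, and $p\in\Vocpos{\seqar p}\cap\Vocneg{p\seqar}$.

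\textbf{Inductive step.} We distinguish, for each rule, whether the principal formula lies in the left or right split, and we keep the active formulas on the same side as the principal formula.
\begin{itemize}
\item For unary rules, including weakening and contraction, we keep $\theta$. The two required sequents are obtained by applying the same rule to the corresponding premise-sequent given by the inductive hypothesis. The polarity sets of each split only grow, or stay equal, from premise to conclusion, because the principal formula contains the active formulas with the same polarity relative to the sequent.
\item For binary rules with the principal formula on the left, we take $\theta_1\vee\theta_2$. On the left we derive $\Gamma\seqar\Delta,\theta_1\vee\theta_2$ by $(\rr\vee^i)$ and then the rule itself; for $L(\lr\impl)$ we weaken so that the contexts match, then apply $(\lr\impl)$ and contract $\theta$. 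On the right we derive $\theta_1\vee\theta_2,\Gamma'\seqar\Delta'$ by $(\lr\vee)$ from the two inductive hypotheses.
\item For binary rules with the principal formula on the right, we take $\theta_1\wedge\theta_2$, dually.
\end{itemize}
In the binary cases the signature conditions follow because the polarity sets of the two premise splits are contained in those of the conclusion split.

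\textbf{Main obstacle.} The delicate part is the polarity bookkeeping, above all for $\impl$, where an active formula changes side of the arrow. In $L(\lr\impl)$ the formula $\phi$ moves to the succedent of the left split. A positive occurrence of $p$ in $\phi$, counted via $\Vocpos{\Delta,\phi}$, must therefore come from $\Vocneg{\phi\impl\psi}$ in the antecedent, which lands in $\Vocpos{\phi\impl\psi,\Gamma\seqar\Delta}$. This is exactly correct, and the other cases work the same way. Since all rules are cut-free, no variables are introduced that are absent from the endsequent, so the verification goes through. Finally, the proof of \cref{thm:Craig-sequent->logic} applied to the split $\phi;\ \seqar\ ;\psi$ yields the CIP and LIP for $\CPC$.
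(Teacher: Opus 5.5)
Your proposal is correct and follows the paper's proof. Both argue by induction on the cut-free (split) $\LK$-proof with the interpolant constructions of \cref{fig:interpolation}, check the Lyndon polarity conditions and derivability of the two side sequents in the same way the paper does for $\itRR(\id)$ and $L(\lr\impl)$, and pass to $\CPC$ via \cref{thm:Craig-sequent->logic}. You also spell out the remaining cases the paper leaves to the reader, and your remark that the induction hypothesis must range over all splits of the premises is exactly what the paper's split-rule formulation encodes.
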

\begin{proof}
Suppose $\proves_\LK \Gamma; \Gamma' \seqar \Delta; \Delta'$ and let $\pi$ be its split sequent proof with rules from \cref{fig:interpolation}. We will construct an interpolant $\theta$ by induction on the structure of $\pi$. For both the base step and the induction step, there are several subcases based on the lowest split sequent rule of $\pi$. The interpolant construction for each case is summarized in \cref{fig:interpolation}. Here we will check two cases. As most textbooks only consider Craig interpolants, here we show that this construction also satisfies the stronger variable condition for Lyndon interpolants.

\begin{itemize}
\item $\itRR(\id$): Suppose 
\vspace{-0.5em}
\[
{\small{
\pi \ = \ \vlinf{\itRR(\id)}{}{;p\seqar \ ;p}{}
}}
\]
We claim that $\theta=\top$. Indeed, the variable condition~\hyperref[cond:lynd_one]{$1'$} from \cref{def:sequent_Lyndon} holds because $\Voc\top=\emptyset$. Also, $\proves_\LK \ \seqar \top$ by~$(\rr\top)$ and $\proves_\LK \top, p \seqar p$ by~$(\id)$~and~$(\lr\wk)$ as desired.

   \item $L(\lr\impl)$: Suppose that the bottommost rule is an instance of~$(\lr\impl)$ with the principal formula on the left side of the split. For the premises the appropriate splits are chosen.
\vspace{-0.5em}
        \[
{\small{
       \pi \ = \ \vlderivation{\vliin{L(\lr\impl)}{}{\phi\impl\psi,\Gamma; \Gamma'\seqar\Delta;\Delta'}{\stub{\pi_1}{\Gamma; \Gamma'\seqar\Delta,\phi;\Delta'}}{\stub{\pi_2}{\psi,\Gamma; \Gamma'\seqar\Delta;\Delta'}}}
}}
\]
By the induction hypothesis (IH) on $\pi_1$ and $\pi_2$, there are Lyndon interpolants~$\theta_1$~and~$\theta_2$ such that $\Gamma;\Gamma'\xseqar{\theta_1}\Delta,\phi;\Delta'$ and $\psi,\Gamma;\Gamma'\xseqar{\theta_2}\Delta;\Delta'$. We prove that $\theta=\theta_1 \vee \theta_2$ is the desired interpolant. Out of the two variable conditions from \cref{def:sequent_Lyndon}, we show how to check the first (the other is symmetric):
	\begin{align*} 
        &\Vocpos{\theta_1 \vee \theta_2}=\Vocpos{\theta_1}\cup\Vocpos{\theta_2}\\
        &\quad \subseteq \Big(\Vocneg{\Gamma\seqar\Delta,\phi}\cap\Vocpos{\Gamma'\seqar\Delta'}\Big) \cup \Big(\Vocneg{\psi,\Gamma \seqar \Delta} \cap \Vocpos{\Gamma'\seqar\Delta'}\Big)\, (\text{IH})\\
        &\quad = \Big(\Vocneg{\Gamma\seqar\Delta,\phi}\cup\Vocneg{\psi,\Gamma \seqar \Delta}\Big) \cap\Vocpos{\Gamma'\seqar\Delta'}\\
        &\quad =\Big(\Vocpos{\psi,\Gamma}\cup\Vocneg{\Delta,\phi}\Big)\cap\Vocpos{\Gamma'\seqar\Delta'}\\
        &\quad =\Big(\Vocpos{\phi\impl\psi,\Gamma} \cup \Vocneg{\Delta}\Big)\cap\Vocpos{\Gamma'\seqar\Delta'}\\
        &\quad = \Vocneg{\phi\impl\psi,\Gamma \seqar \Delta} \cap\Vocpos{\Gamma'\seqar\Delta'}.
    \end{align*}
	
It remains to check that $\proves_\LK \phi \impl \psi, \Gamma \seqar \Delta, \theta_1 \vee \theta_2$ and $\proves_\LK \theta_1 \vee \theta_2, \Gamma' \seqar \Delta'$. Indeed:
\[
{\small{
\vlderivation{
\vlin{\rr\cntr,\rr\vee^0,\rr\vee^1}{}{\phi\impl\psi,\Gamma\seqar\Delta,\theta_1 \vee \theta_2}{
    \vliin{\lr\impl}{}{\phi\impl\psi,\Gamma\seqar\Delta,\theta_1,\theta_2}{
    		\vlin{\rr\wk}{}{\Gamma\seqar\Delta,\theta_1,\theta_2,\phi}{
        		\stub{\mathit{IH}}{\Gamma\seqar\Delta,\theta_1,\phi}
			}
    }
    {
    		\vlin{\rr\wk}{}{\psi,\Gamma\seqar\Delta,\theta_1,\theta_2}{
        		\stub{\mathit{IH}}{\psi,\Gamma\seqar\Delta,\theta_2}
			}
    }
}
}
\qquad \quad
\vlderivation{
\vliin{\lr\vee}{}{\theta_1 \vee \theta_2,\Gamma'\seqar\Delta'}{
    \stub{\mathit{IH}}{\theta_1,\Gamma'\seqar\Delta'}
}{
    \stub{\mathit{IH}}{\theta_2,\Gamma'\seqar\Delta'}
}
}
}}
\]
\end{itemize}
We leave the other cases to the reader.
\end{proof}

\begin{remark}[Computation and complexity]
    Maehara's method is \emph{constructive}: it provides an algorithm to compute interpolants. The complexity of an interpolant of $\phi \impl \psi$ is \emph{linear} in the number of sequents occurring in the proof-tree of $\phi \seqar \psi$. Different proofs of a sequent may yield different interpolants. \lipicsEnd
\end{remark}

\begin{remark}[Incompleteness]
    Interestingly, it has been shown that Maehara's method is \emph{incomplete} in the sense that there are implications $\phi \impl \psi$ for which there are interpolants that cannot be computed by Maehara's method \cite{HetzlJalali24} (see also the next example). \lipicsEnd
\end{remark}

\begin{example}
    Recall \cref{ex:non-unique} with $\phi = (p \wedge q) \vee (\neg r \wedge s)$ and $\psi = t \vee p \vee q \vee \neg r$. We work in the split calculus $\LK$ (\cref{fig:interpolation}). Craig interpolant $p \vee \neg r$ is derived as follows:
    \vspace{-1em}
    \[
    {\small{
\vlderivation{
\vliin{L(\lr\vee),R(\rr\vee)}{}{(p\wedge q) \vee (\neg r \wedge s) ; \ \xseqar{p \vee \neg r} \ ; t \vee p \vee q \vee \neg r}{
    \vlin{L(\lr\wedge)}{}{p \wedge q ; \ \xseqar{p} \ ; t, p, q, \neg r}{
        \vlin{L(\lr\wk),R(\rr\wk)}{}{p, q ; \ \xseqar{p} \ ; t, p, q, \neg r}{
            \vlin{\itLR(\id)}{}{p ; \ \xseqar{p}\ ; p}{\vlhy{}}
        }
    }
}
{
    \vlin{L(\lr\wedge)}{}{\neg r \wedge s ; \ \xseqar{\neg r} \ ;  t,p, q, \neg r}{
        \vlin{L(\lr\wk),R(\rr\wk)}{}{\neg r, s ; \ \xseqar{\neg r} \ ; t, p, q, \neg r}{
            \vliin{L(\lr\impl)}{}{\neg r ; \ \xseqar{\neg r \vee \bot (= \neg r)} \ ; \neg r}{
                \vlin{R(\rr\impl),R(\rr\wk)}{}{\ \xseqar{\neg r} r ; \neg r}{
                    \vlin{\itRL(\id)}{}{ ; r \xseqar{\neg r} r ; }{\vlhy{}}
                }
            }
            {
                \vlin{R(\rr\wk)}{}{\bot ; \ \xseqar{\bot} \ ; \neg r}{
                    \vlin{L(\lr\bot)}{}{\bot ; \ \xseqar{\bot} \ ; }{\vlhy{}}
                }
            }   
        }
    }
}
}
}}
\]
A similar split sequent proof with the left branch starting with axiom $q ; \xseqar{q} ; q$ yields interpolant $q \vee \neg r$. Interpolant $(p \vee q) \vee \neg r$ can  be obtained by Maehara's method using contraction rules. However, interpolant $(p \wedge q) \vee \neg r$ cannot be obtained by Maehara's method in~$\LK$ (see~\cite{HetzlJalali24}). \looseness=-1\lipicsEnd
\end{example}

\begin{remark}[Other types of sequent calculi]
Maehara's method is very flexible and has been used for different types of sequent calculi (recall \cref{remark:design_choices}). We worked with two-sided (multiset) sequent calculus~$\LK$.  It equally works for sequents based on lists or sets, and when working with implicit weakening and contraction rules~\cite{TroelstraS00}. The method also works for \emph{one-sided sequent} calculi based on formulas in negation normal form, where a one-sided split sequent has the form $\Delta_l ; \Delta_r$ with left split~$\Delta_l$ and right split~$\Delta_r$.  
Recall from \cref{remark:tableaux} that sequent calculi can be viewed as the dual of tableau calculi. For that reason, Maehara's method can be adjusted to tableaux (see  \refchapter{chapter:propositional}.10 of this book). \lipicsEnd
\end{remark}

\begin{remark}[Single-conclusion calculi]
\label{remark:IPC}
Maehara's method also applies to single-conclusion sequent calculi such as~$\LJ$. The main difference is that sequents have only one formula in the succedent, so there is no need to split the succedent; equivalently, one may view the succedent as always belonging to the ``right'' side of the split. For~$\LJ$, the construction of interpolants proceeds much as in the $\LK$~case described in \cref{fig:interpolation}, except that rules such as $\itRL(\id)$ and $R(\rr\wedge)$ are omitted, and the split rule $L(\lr\impl)$ is adapted as follows:
\[
{\small{
\vliinf{L(\lr\impl)}{}{\phi \impl \psi,\Gamma;\Gamma' \xseqar{\theta_1 \impl \theta_2} \Delta'}{\Gamma';\Gamma \xseqar{\theta_1} \phi}{\psi, \Gamma;\Gamma' \xseqar{\theta_2} \Delta'}
}}
\]
where the formula $\phi \impl \psi$ is on the left side of the split in the conclusion, but $\phi$~is forced to be on the `right' side of the split in the left premise. This forces us to swap the sides of~$\Gamma'$~and~$\Gamma$ in this premise, to make sure that $\Gamma$~and~$\phi$~remain on the same side of the split. \lipicsEnd
\end{remark}

In light of the previous remark we have the following theorem.

\begin{theorem}
    Logic~$\IPC$ has the~CIP~and~LIP.
\end{theorem}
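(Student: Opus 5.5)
The plan is to establish that $\LJ$ has the SCIP and SLIP in the single-conclusion sense of \cref{def:Maehara,def:sequent_Lyndon}, and then invoke \cref{thm:Craig-sequent->logic}. Here a split sequent has the form $\Gamma;\Gamma'\seqar\Delta'$, with the (at most one) succedent formula always on the right. We use the cut-free calculus $\LJ$, which is a calculus for $\IPC$ by cut-admissibility. The argument is by induction on a cut-free $\LJ$ proof $\pi$ of $\Gamma,\Gamma'\seqar\Delta'$. We fix an arbitrary partition of the antecedent and read the bottommost rule as a split rule, depending on whether its principal formula lies in $\Gamma$ or in $\Gamma'$ (succedent principal formulas always count as $R$).

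For the axioms we take the following interpolants:
\begin{itemize}
\item $\itRR(\id)$ with $;p\seqar p$ gets $\top$.
\item $\itLR(\id)$ with $p;\ \seqar p$ gets $p$.
\item $L(\lr\bot)$ gets $\bot$, and $R(\lr\bot)$ and $R(\rr\top)$ get $\top$.
\item The antecedent with empty succedent cases are covered by the $\bot$ interpolant on the left.
\end{itemize}

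The rules are handled as follows.
\begin{itemize}
\item \textbf{Unary rules and weakening/contraction.} These keep the active formulas on the principal formula's side and preserve the interpolant. The case $R(\rr\impl)$ moves $\phi$ into $\Gamma'$.
\item \textbf{$\lr\vee$.} $L(\lr\vee)$ gives $\theta_1\vee\theta_2$ and $R(\lr\vee)$ gives $\theta_1\wedge\theta_2$.
\item \textbf{$R(\rr\wedge)$.} This gives $\theta_1\wedge\theta_2$.
\item \textbf{$R(\lr\impl)$.} Both premises keep $\Gamma$ on the left, so we get $\theta_1\wedge\theta_2$, as for $\LK$.
\item \textbf{$L(\lr\impl)$.} This is the only genuinely new case, handled as in \cref{remark:IPC}. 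Since the left premise $\Gamma,\Gamma'\seqar\phi$ must carry $\phi$ in the succedent, we apply the IH to the swapped split $\Gamma';\Gamma\seqar\phi$. This yields $\theta_1$ with $\Gamma'\seqar\theta_1$ and $\theta_1,\Gamma\seqar\phi$. The right premise yields $\theta_2$ with $\psi,\Gamma\seqar\theta_2$ and $\theta_2,\Gamma'\seqar\Delta'$. We set $\theta=\theta_1\impl\theta_2$.
\end{itemize}

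For the two derivability conditions in the $L(\lr\impl)$ case:
\begin{itemize}
\item \textbf{Left side.} From $\theta_1,\Gamma\seqar\phi$ and $\psi,\Gamma\seqar\theta_2$ (weakened by $\theta_1$), $(\lr\impl)$ gives $\phi\impl\psi,\theta_1,\Gamma\seqar\theta_2$, and then $(\rr\impl)$ gives $\phi\impl\psi,\Gamma\seqar\theta_1\impl\theta_2$. This uses contraction and weakening on $\Gamma$ as needed, which $\LJ$ has.
\item \textbf{Right side.} From $\Gamma'\seqar\theta_1$ and $\theta_2,\Gamma'\seqar\Delta'$, $(\lr\impl)$ gives $\theta_1\impl\theta_2,\Gamma',\Gamma'\seqar\Delta'$, and left contraction gives $\theta_1\impl\theta_2,\Gamma'\seqar\Delta'$. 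Note that in $\LJ$ the left premise of $(\lr\impl)$ has no $\Delta$, which is exactly the shape we have.
\end{itemize}

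The main obstacle is the variable bookkeeping in the $L(\lr\impl)$ case, especially for Lyndon, because the swap reverses which side's polarities bound $\theta_1$. By the IH on the swapped split, $\Vocpos{\theta_1}\subseteq\Vocneg{\Gamma'\seqar}\cap\Vocpos{\Gamma\seqar\phi}$. Since $\theta_1$ sits negatively in $\theta_1\impl\theta_2$, we need $\Vocpos{\theta_1}\subseteq\Vocneg{\theta}$. That requires $\Vocpos{\theta_1}\subseteq\Vocpos{\phi\impl\psi,\Gamma\seqar}\cap\Vocneg{\Gamma'\seqar\Delta'}$, which follows from $\Vocneg{\Gamma'\seqar}=\Vocpos{\Gamma'}$ and from $\Vocpos{\Gamma\seqar\phi}=\Vocneg{\Gamma}\cup\Vocpos{\phi}\subseteq\Vocneg{\Gamma}\cup\Vocneg{\phi\impl\psi}$. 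The symmetric computation handles $\Vocneg{\theta_1}$, and $\theta_2$ is handled exactly as in the proof of \cref{thm:Maehara_CPC}. The Craig condition follows by forgetting polarities.

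Having SCIP and SLIP for $\LJ$, we apply \cref{thm:Craig-sequent->logic} to the split $\phi;\ \seqar\psi$, which gives the CIP and LIP for $\IPC$.
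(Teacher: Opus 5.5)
Your proposal is correct and follows the paper's own route: Maehara's method on cut-free $\LJ$ with the succedent always on the right of the split, the swapped split $\Gamma';\Gamma\seqar\phi$ for the left premise of $L(\lr\impl)$ with interpolant $\theta_1\impl\theta_2$ as in \cref{remark:IPC}, and then \cref{thm:Craig-sequent->logic} (your Lyndon bookkeeping for the swapped case checks out). One minor slip: in the paper's $\LJ$ the rule $(\lr\impl)$ shares its context, so $\theta_1\impl\theta_2,\Gamma'\seqar\Delta'$ follows directly from $\Gamma'\seqar\theta_1$ and $\theta_2,\Gamma'\seqar\Delta'$ without the duplicated $\Gamma'$ and the contraction you describe, and on the left side only a weakening by $\theta_1$ is needed, though neither point affects correctness.
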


\begin{figure}
\small{
\fbox{\parbox{.97\textwidth}{
{\centering
\[
\vlinf{L(\Kseqrule)}{}{\Box\Gamma;\Box\Gamma'\xseqar{\Diamond\theta}\Box \phi;}{\Gamma;\Gamma' \xseqar{\theta} \phi;} \,\,\,\,\,
\vlinf{R(\Kseqrule)}{}{\Box\Gamma;\Box\Gamma'\xseqar{\Box\theta};\Box \phi}{\Gamma;\Gamma' \xseqar{\theta} ;\phi} \,\,\,\,\,\,
\vlinf{L(\Dseqrule)}{}{\Box\phi,\Box\Gamma;\Box \Gamma' \xseqar{\Box \theta}\, ;}{\phi,\Gamma;\Gamma'\xseqar{\theta}\,;}
\,\,\,\,\,
\vlinf{R(\Dseqrule)}{}{\Box\Gamma;\Box \Gamma',\Box \phi \xseqar{\Box\theta}\, ; }{\Gamma;\Gamma',\phi\xseqar{\theta}\,;}
\]
\\ \vspace{-1em}
\[
\vlinf{L(\GLseqrule)}{}{\Box\Gamma;\Box\Gamma'\xseqar{\Diamond\theta}\Box \phi;}{\Box \Gamma, \Gamma,\Box \phi ; \Box \Gamma', \Gamma'\xseqar{\theta} \phi;} \quad
\vlinf{R(\GLseqrule)}{}{\Box\Gamma;\Box\Gamma'\xseqar{\Box\theta};\Box \phi}{\Box \Gamma, \Gamma; \Box \Gamma', \Gamma',\Box \phi \xseqar{\theta} ;\phi}
\quad
\vlinf{L(\Tseqrule)}{}{\Box \phi,\Gamma;\Gamma'\xseqar{\theta}\Delta;\Delta'}{\phi,\Box \phi,\Gamma;\Gamma' \xseqar{\theta} \Delta;\Delta'} 
\]
}
\vspace{-1em}
}}}
\caption{Interpolant transformations for modal rules. When the principal formula is on the left side of the split in rules~$(\Kseqrule)$~and~$(\GLseqrule)$, it is a `$\Diamond$-transformation', and when it is on the right it is a `$\Box$-transformation'. It similarly works for rules~$(\Fourseqrule)$~and~$(\Sfourseqrule)$ from \cref{fig:modal-rules}.  For rule~$(\Tseqrule)$,  the same interpolant suffices for both splits. Both transformations for~$(\Dseqrule)$ are `$\Box$-transformation' because the only role $\phi$ plays there is to ensure that the antecedent is not empty. See  \cref{ex:unlabelled-modal-rules,Ex: box diamond}.\looseness=-1} 
\label{fig:modal_Craig_transformations}
\end{figure}

For modal logics, we provide the interpolant transformations for the sequent calculi in \cref{fig:modal-sequent}, and show that logics~$\K$, $\T$, $\D$, $\Kvier$, $\Svier$,~and~$\GL$ have the~CIP. In fact, all of these except for~$\GL$ admit the~LIP. When considering  split rules~$L(\GLseqrule)$~and~$R(\GLseqrule)$ from \cref{fig:modal_Craig_transformations}, $\phi$~and both~$\Box \phi$'s~are placed on the same side of the split. Thus, the Craig condition in \cref{def:Maehara}\eqref{cond:seq_lang} is satisfied, but the stronger Lyndon one in \cref{def:sequent_Lyndon}($1'$) cannot be guaranteed: $\Box \phi$~in the premise antecedent yields the opposite polarities from~$\Box \phi$ in the conclusion succedent. Consequently, sequent calculus~$\GthreeGL$ can be used to prove the~CIP but not the~LIP, despite the fact that logic~$\GL$ does have the~LIP~\cite{Shamkanov11}. An existing proof-theoretic proof of the~LIP for~$\GL$~\cite{Shamkanov14} uses cyclic proofs, which are discussed in  \refchapter{chapter:fixedpoint}. 
More generally, the existence of a cut-free sequent calculus is no guarantee that the logic has the~CIP, nor the logic having the~CIP is a guarantee that a cut-free sequent calculus can be used to prove it. \looseness=-1

Although Maehara's method is flexible across many types of logics, it does not mean that  many logics of each type enjoy the~CIP. In fact, interpolation is rare among, e.g.,~intermediate and modal logics~\cite{Ghil2,GabbayM05}, which has crucial impact on proof theory (see~\cref{sect:results}). \looseness=-1

\subsection{Restricted Cuts and Interpolation}
\label{sect:role_of_cut}

In many textbooks on proof theory~\cite{TroelstraS00,Takeuti75}, Maehara's method is presented as an application of the cut-admissibility theorem. In the previous section we assumed to have a cut-free calculus, but it is well-known that special forms of cut can be admitted for Maehara's method to prove the~CIP (though not necessarily the~LIP). This is especially useful for logics that do not admit a cut-free sequent calculus, but that have a complete calculus including restricted cut rules.\looseness=-1

Restricted cut rules circumvent the problem of the general cut rule for Maehara's method where one has no grip on the propositional variables occurring in the cut formula. We present three types of restricted cuts.
\begin{itemize}
    \item \emph{Analytic cut}: The \emph{analytic cut rule} $(\ancut)$ is presented in \cref{fig:cutrules}. The cut formula is restricted to subformulas of  the conclusion sequent. 
    The rule preserves the \emph{global subformula property} stating that in any proof each formula occurring in it is a subformula of the endsequent. Calculi with the subformula property are often called \emph{analytic}, justifying the name of the rule. Examples of Maehara's method with the analytic cut rule are for modal logic~$\Svijf$~\cite{Ono98} and bi-intuitionistic logic~\cite{KowalskiOno17}.
    \item \emph{Semi-analytic cut rules}: \emph{semi-analyticity} is the concept in which the propositional variables of the cut formula occur in \emph{one} formula from the conclusion. In other words, the cut formula is any (arbitrarily complex) combination of subformulas of a formula from the conclusion, justifying the name \emph{semi-analytic}. 
    An example is $(\semicut)$ from \cref{fig:cutrules} introduced in~\cite{Fitting83} in a tableau calculus. In \cref{sec:semi-analytic} we discuss a slightly different notion of semi-analytic rules, which provides a general setting in which Maehara's method can be applied.\looseness=-1
    \item \emph{Monochromatic cut} (e.g.,~\cite{Carbone97}): this type of cut is more general than semi-analytic cuts where the cut formula may be any combination of subformulas of \emph{multiple} formulas in the conclusion. One has to be careful which \emph{multiple} formulas are allowed and, therefore, this type of cut is only defined for split sequents: an application of the cut rule is \emph{monochromatic} in split sequent proof~$\pi$ with endsequent $\Gamma_1;\Gamma_2 \seqar \Delta_1 ; \Delta_2$ if{f} the cut formula~$\phi$ satisfies the following condition: $\Voc{\phi} \subseteq \Voc{\Gamma_1} \cup \Voc{\Delta_1}$ or $\Voc{\phi} \subseteq \Voc{\Gamma_2} \cup \Voc{\Delta_2}$, i.e.,  tracing the propositional variables only through the left or right side of Maehara's split.
\end{itemize}

\begin{figure}[t]
  \centering
  \begin{minipage}[t]{0.62\textwidth}
    \small{
\fbox{\parbox{0.95\textwidth}{
\centering{
\[
\vliinf{\ancut}{\phi \in \Sub{\Gamma,\Gamma',\Delta,\Delta'}}{\Gamma, \Gamma'\seqar\Delta,\Delta'}{\Gamma\seqar\Delta,\phi}{\phi,\Gamma'\seqar\Delta'}
\]
\[
\vliinf{\semicut}{
\psi \in \Sub{\Gamma,\Gamma',\Delta,\Delta'}}{\Gamma, \Gamma'\seqar\Delta,\Delta'}{\Gamma\seqar\Delta,\Box^n \psi}{\Box^n \psi,\Gamma'\seqar\Delta'}
\]
}
}
}}
\caption{Examples of restricted cut rules.}
\label{fig:cutrules}
  \end{minipage}
  \hfill
  \begin{minipage}[t]{0.35\textwidth}
    \small{
\fbox{\parbox{0.95\textwidth}{
\centering{
\[
\vlinf{\Tseqrule}{}{\Box \phi, \Gamma \seqar \Delta}{\Box \phi, \phi, \Gamma \seqar \Delta}
\]
\[
\vlinf{\Fiveseqrule}{}{\Sigma, \Box \Gamma \seqar \Box \phi, \Box \Delta}{\Box \Gamma \seqar \phi, \Box \Delta}
\]
}
}}}
\caption{Modal rules for~$\GSvijf$: $\GSvijf \ce \LK + (\ancut) + (\Tseqrule) + (\Fiveseqrule)$.}
\label{fig:GS5}
  \end{minipage}
\end{figure}

To illustrate the use of such cuts in Maehara's method, we consider modal logic~$\Svijf$ as an example from~\cite{Ono98}. $\Svijf$~is not known to have a cut-free sequent calculus. Sequent calculus~$\GSvijf \ce  \LK + (\ancut) + (\Tseqrule) + (\Fiveseqrule)$, including the analytic cut rule, is complete for~$\Svijf$ (cf.~\cref{fig:G3pc,fig:cutrules,fig:GS5} for all rules).

\begin{theorem}\label{thm:interpolationSvijf}
Modal logic~$\Svijf$ has the~CIP.
\end{theorem}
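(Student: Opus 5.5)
We prove that $\GSvijf$ has the SCIP; by \cref{thm:Craig-sequent->logic} this gives the CIP for $\Svijf$. As in \cref{thm:Maehara_CPC}, we fix a proof in $\GSvijf$ of $\Gamma,\Gamma'\seqar\Delta,\Delta'$. We then show, by induction on its height, that for \emph{every} split $\Gamma;\Gamma'\seqar\Delta;\Delta'$ of its endsequent there is an interpolant $\theta$ with $\Voc{\theta}\subseteq\Voc{\Gamma,\Delta}\cap\Voc{\Gamma',\Delta'}$.

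The $\LK$ axioms and rules are handled by the split transformations of \cref{fig:interpolation}. For $(\Tseqrule)$ we reuse the interpolant of the premise, keeping $\phi$ on the same side as $\Box\phi$ (\cref{fig:modal_Craig_transformations}).

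For $(\Fiveseqrule)$ the conclusion is split as $\Sigma_1,\Box\Gamma_1;\Sigma_2,\Box\Gamma_2\seqar\Box\Delta_1;\Box\Delta_2$, with $\Box\phi$ placed on one side. The weakened formulas $\Sigma$ play no role. Suppose first that $\Box\phi$ is on the right. The premise is split as $\Box\Gamma_1;\Box\Gamma_2\seqar\Box\Delta_1;\phi,\Box\Delta_2$, and the IH gives $\theta$ with $\Box\Gamma_1\seqar\Box\Delta_1,\theta$ and $\theta,\Box\Gamma_2\seqar\phi,\Box\Delta_2$. We take $\Box\theta$ as the new interpolant.

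On the left side we want $\Box\Gamma_1\seqar\Box\Delta_1,\Box\theta$. We get it by applying $(\Fiveseqrule)$ with principal formula $\Box\theta$ to the IH sequent, which is legitimate because its context consists of boxed formulas only.

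On the right side we want $\Box\theta,\Box\Gamma_2\seqar\Box\phi,\Box\Delta_2$. First, $(\Tseqrule)$ on $\Box\theta$ followed by weakening and a cut on $\theta$ gives $\Box\theta,\Box\Gamma_2\seqar\phi,\Box\Delta_2$; since this context is boxed, $(\Fiveseqrule)$ then yields $\Box\theta,\Box\Gamma_2\seqar\Box\phi,\Box\Delta_2$. To avoid the general cut, we instead apply the variable-preserving $\Svijf$ validity $\Box\theta\impl\theta$ at the level of provability: we invoke completeness of $\GSvijf$ for $\Svijf$ rather than building the derivation inside the calculus. This is allowed because the SCIP only asks for provability.

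If $\Box\phi$ is on the left, the symmetric construction uses $\Diamond\theta$ as interpolant, i.e.\ $\neg\Box\neg\theta$. This again works because the contexts are boxed and $\Svijf$ proves $\neg\Box\psi\impl\Box\neg\Box\psi$. The variable condition is immediate, since $\Box$ and $\neg$ add no variables.

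The main obstacle is $(\ancut)$. A cut with cut formula $\psi$ has premises $\Gamma_a\seqar\Delta_a,\psi$ and $\psi,\Gamma_b\seqar\Delta_b$. For the given split of the conclusion we must decide on which side of each premise split $\psi$ goes, and $\psi$ may contain variables from both sides. Analyticity guarantees $\psi\in\Sub{\Gamma,\Gamma',\Delta,\Delta'}$, but this alone is not enough.

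We first normalise the proof so that cut formulas are subformulas of a \emph{single} formula of the conclusion. If $\psi$ is a subformula of a formula on the left split, we put $\psi$ on the left in both premises and combine the two interpolants by disjunction, as in $L(\lr\impl)$. If it is on the right, we put $\psi$ on the right and combine by conjunction, as in $R(\lr\impl)$. In either case the variables of $\psi$ lie on the chosen side, so the variable condition follows as in \cref{thm:Maehara_CPC}; this is the monochromatic-cut observation.

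The delicate point is that the IH must be stated for arbitrary splits of sequents whose formulas are subformulas of the endsequent. We will track the side via ``$\psi$ occurs in a formula of the left split'', following Ono's argument. I expect this bookkeeping, keeping each cut monochromatic relative to the current split, to be the only real difficulty.
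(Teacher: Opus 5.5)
Your proposal is correct and follows the same Maehara-style route as the paper. The $\LK$ and $(\Tseqrule)$ cases go through the standard split calculus, and for $(\ancut)$ you place the cut formula on the side of the split that contains the conclusion formula it is a subformula of, combining the interpolants by $\vee$ or $\wedge$ exactly as in the paper's $L(\ancut)$/$R(\ancut)$; your explicit $\Box\theta$/$\Diamond\theta$ treatment of $(\Fiveseqrule)$ fills in a case the paper leaves to the reader. Two simplifications: the ``normalisation'' you worry about is vacuous, since $\Sub{\cdot}$ of a multiset is the union of the subformula sets of its members, so analyticity already makes each cut formula a subformula of a single conclusion formula; and in the $(\Fiveseqrule)$ case, weakening by $\Box\theta$ followed by $(\Tseqrule)$ yields $\Box\theta,\Box\Gamma_2\seqar\phi,\Box\Delta_2$ directly, with no cut and no appeal to completeness.
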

\begin{proof}
 The proof can be found in \cite{Ono98} and follows a Maehara-style argument such as in the proof of \cref{thm:Maehara_CPC}. We only state the interpolant constructions for the analytic cut rules and leave all other details to the reader. The interpolant construction for $(\ancut)$ depends on the formula in the conclusion whose subformula the cut formula~$\phi$ is. There are two cases:
\[
{\small{       \vliinf{L(\ancut)}{\phi \in \Sub{\Gamma_1, \Gamma_2, \Delta_1, \Delta_2}}{\Gamma_1, \Gamma_2; \Gamma'_1, \Gamma_2'\xseqar{\theta_1 \vee \theta_2}\Delta_1, \Delta_2; \Delta_1', \Delta_2'}{
            \Gamma_1;\Gamma_1'  \xseqar{\theta_1} \Delta_1, \phi; \Delta_1'
        }{
            \phi,\Gamma_2; \Gamma_2' \xseqar{\theta_2} \Delta_2; \Delta_2'
        }
        }}
\]
\[
{\small{
        \vliinf{R(\ancut)}{\phi \in \Sub{\Gamma_1', \Gamma_2', \Delta_1', \Delta_2'}}{\Gamma_1, \Gamma_2; \Gamma'_1, \Gamma_2'\xseqar{\theta_1 \wedge \theta_2}\Delta_1, \Delta_2; \Delta_1', \Delta_2'}{
            \Gamma_1; \Gamma_1' \xseqar{\theta_1} \Delta_1; \Delta_1', \phi
        }{
            \Gamma_2; \Gamma_2', \phi \xseqar{\theta_2} \Delta_2; \Delta_2'
        }
}}
\]
The variable condition~\ref{cond:seq_lang} from \cref{def:Maehara} is fulfilled  because of the subformula requirements.
\end{proof}
In fact, $\Svijf$~also admits the~LIP.  This cannot be shown using sequent calculus~$\GSvijf$, but can be obtained by using a hypersequent calculus for it (see~\cref{sec:hyper}).

\subsection{Uniform Interpolation via Pitts' Method} 
\label{subsec:Pitts}

The~UIP is a beautiful strengthening of the~CIP (see also  \refchapter{chapter:uniform}, the main chapter on uniform interpolation). Here, we are interested in Pitts'~\cite{Pitts92} important result that \emph{propositional quantification} can be interpreted in intuitionistic propositional logic $\IPC$ (i.e., only using the propositional language). To show this, Pitts provides a proof-theoretic method showing that the~UIP holds for~$\IPC$. The method was later adapted to (intuitionistic) modal logics~\cite{Bilkova06,Iemhoff19AML,AkbarTabatabaiJ18uniform}. The purpose of this section is to explain the main concepts in Pitts' method.

A Craig interpolant of $\phi \impl \psi$ is \textit{uniform} if it \textit{only} relies on~$\phi$ (or~only on~$\psi$) meaning that it is a Craig interpolant that uniformly works for any provable implication $\phi \impl \psi'$ with $\psi'$~in a specified vocabulary. 
Formally, we define the following, where a formula~$\phi$ is \emph{$p$-free} if{f} $p \notin \Voc{\phi}$, \emph{$p^+$-free} if{f} $p \notin \Vocpos{\phi}$, and \emph{$p^-$-free} if{f} $p \notin \Vocneg{\phi}$:

\begin{definition}[Uniform interpolation]\label{def:uniforminterpolation}
Fix a logic $\lgc$ over a language $\calL$. We say that $\lgc$ has the 
\emph{Uniform Interpolation Property~(UIP)} if{f} for any formula $\phi \in \form_{\calL}$ and propositional variable $p \in \Prop$, there are formulas $\exists p \phi, \forall p \phi$ over the same language~$\calL$ such that the following hold:\
\begin{enumerate}
\item
$\exists p \phi$ and $\forall p \phi$ are $p$-free, and $\Voc{\exists p \phi} \subseteq \Voc{\phi}$ and $\Voc{\forall p \phi} \subseteq \Voc{\phi}$;

\item 
$\Lproves \phi \impl \exists p \phi$ and $\Lproves \forall p \phi \impl \phi$;

\item
For all $p$-free formulas $\psi$:
\vspace{-0.5em}
\begin{align*}
    &\text{if } \Lproves \phi \impl \psi,
    \text{ then }  \Lproves \exists p \phi \impl \psi.\\
    &\text{if } \Lproves \psi \impl \phi,
    \text{ then }  \Lproves \psi \impl \forall p \phi.
\end{align*}
\end{enumerate}
Formula~$\exists p \phi$ is called the \emph{uniform post-interpolant} of~$\phi$ w.r.t.~$p$, and $\forall p \phi$~is the \emph{uniform pre-interpolant} of~$\phi$ w.r.t.~$p$ in~$\lgc$. When $\exists p \phi$ (resp.~$\forall p \phi$) exists for all formulas $\phi \in \form_{\calL}$ and $p \in \Prop$, then we say that $\lgc$~has the \emph{Post-Uniform Interpolation Property~(PostUIP)} (resp.~\emph{Pre-Uniform Interpolation Property~(PreUIP)}). 
\end{definition}

The notations~$\exists p \phi$~and~$\forall p \phi$ are suggestive: the formulas do not really contain quantifiers, but they are defined over the (propositional/modal) language~$\calL$. However, the notation is justified by Pitts' result stating that uniform interpolants \emph{interpret propositional quantifiers} of second order intuitionistic logic into the propositional language. 
    Uniform interpolants are also connected to so-called \emph{bisimulation quantifiers} (see  \refchapter{chapter:uniform} of this book). In the literature, and particularly in  \refchapter{chapter:propositional}, post-uniform and pre-uniform interpolants are known as right-uniform and left-uniform interpolants respectively.

\begin{remark}
\label{remark:classical_dual_ui}
In classical logic, we can define $\exists p \phi \ce  \neg \forall p (\neg \phi)$. \lipicsEnd
\end{remark}

The proof of the following property can be found, e.g., in~\cite{vdGiessen22}.

\begin{theorem}[Uniqueness]
Let $\chi_1$~and~$\chi_2$~both be uniform pre-interpolants for~$\phi$ w.r.t.~$p$ in logic~$\lgc$. Then $\Lproves \chi_1 \impl \chi_2$ and $\Lproves \chi_2 \impl \chi_1$. The same holds for uniform post-interpolants. 
\end{theorem}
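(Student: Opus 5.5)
The argument is a direct unfolding of \cref{def:uniforminterpolation}. The idea is to use the defining condition of one pre-interpolant to obtain an implication towards the other, and then to swap the roles of the two. I will do the pre-interpolant case in full and obtain the post-interpolant case by the dual argument.

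Let $\chi_1$ and $\chi_2$ be uniform pre-interpolants for $\phi$ w.r.t.~$p$ in $\lgc$.
\begin{enumerate}
\item By condition~1, $\chi_1$ is $p$-free.
\item By condition~2, we have $\Lproves \chi_1 \impl \phi$.
\item Apply condition~3 for $\chi_2$ to the $p$-free formula $\psi \ce \chi_1$. Since $\Lproves \chi_1 \impl \phi$, it follows that $\Lproves \chi_1 \impl \chi_2$.
\item Interchanging the roles of $\chi_1$ and $\chi_2$ gives $\Lproves \chi_2 \impl \chi_1$.
\end{enumerate}

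For post-interpolants $\chi_1$ and $\chi_2$ the reasoning is dual.
\begin{enumerate}
\item $\chi_2$ is $p$-free and $\Lproves \phi \impl \chi_2$.
\item The condition ``if $\Lproves \phi \impl \psi$ then $\Lproves \exists p\phi \impl \psi$'', applied to $\chi_1$ with $\psi \ce \chi_2$, gives $\Lproves \chi_1 \impl \chi_2$.
\item Symmetrically, $\Lproves \chi_2 \impl \chi_1$.
\end{enumerate}

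There is no real obstacle here. The only point to check is that the universally quantified clause~3 ranges over \emph{all} $p$-free formulas of $\calL$, so it may be instantiated with the other interpolant. That is legitimate because clause~1 guarantees the other interpolant is $p$-free and belongs to $\form_\calL$. No properties of $\lgc$ beyond the definition are used, in particular neither modus ponens nor transitivity of $\impl$.
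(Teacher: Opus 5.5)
Your proof is correct: each direction follows by instantiating clause~3 of one interpolant with the other, which is $p$-free and in $\form_\calL$ by clause~1 and satisfies the needed implication with $\phi$ by clause~2, and the post-interpolant case is exactly dual. The paper gives no proof of its own and refers to \cite{vdGiessen22}; your direct unfolding of the definition is the standard argument for this.
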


Recently, Kurahashi~\cite{Kurahashi20} introduced the notion of uniform Lyndon interpolation:

\begin{definition}[Uniform Lyndon interpolation]
    A logic~$\lgc$ over a language~$\calL$ has the \emph{Uniform Lyndon Interpolation Property~(ULIP)}, if{f} for any $\phi \in \form_\calL$ and $p \in \Prop$, there are formulas~$\forallpos\! p\,  \phi$, $\forallneg\! p\,  \phi$, $\existspos\! p\,  \phi$,~and~$\existsneg\! p\,  \phi$ in~$\form_\calL$ that satisfy the properties from $\cref{def:uniforminterpolation}$, except that each occurrence of $p$-free is replaced with $p^+$-free for~$\forallpos\! p\,  \phi$~and~$\existspos\! p\,  \phi$ and with $p^-$-free for~$\forallneg\! p\,  \phi$~and~$\existsneg\! p\,  \phi$.
\end{definition}

\begin{theorem}[\cite{Kurahashi20}]
If a logic~$\lgc$ has the~UIP/ULIP, then it has the~CIP/LIP. 
\end{theorem}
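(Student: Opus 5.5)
The argument is to eliminate, one at a time, the variables of $\phi$ that do not occur in $\psi$. Suppose $\Lproves \phi \impl \psi$ and let $p_1,\dots,p_n$ enumerate $\Voc{\phi}\setminus\Voc{\psi}$. Define $\theta_0 \ce \phi$ and $\theta_{k+1} \ce \exists p_{k+1}\theta_k$, and take $\theta \ce \theta_n$. Alternatively, one can work from the right with $\forall$ over $\Voc{\psi}\setminus\Voc{\phi}$; either direction works.

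\textbf{Signature.} Clause~1 of \cref{def:uniforminterpolation} gives $\Voc{\theta_{k+1}}\subseteq \Voc{\theta_k}\setminus\{p_{k+1}\}$. Inductively, $\Voc{\theta}\subseteq\Voc{\phi}\setminus\{p_1,\dots,p_n\}\subseteq \Voc{\phi}\cap\Voc{\psi}$.

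\textbf{Left half, $\Lproves\phi\impl\theta$.} Clause~2 gives $\Lproves\theta_k\impl\theta_{k+1}$ for each $k$. These implications chain into $\Lproves\phi\impl\theta$ using transitivity of implication in $\lgc$. Transitivity holds because $\lgc$ contains $\IPC$ theorems and is closed under modus ponens; this is the only logical assumption needed about $\lgc$.

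\textbf{Right half, $\Lproves\theta\impl\psi$.} This is the main step. I will prove $\Lproves\theta_k\impl\psi$ by induction on $k$. For $k=0$ it is the hypothesis. For the step, $\psi$ is $p_{k+1}$-free because $p_{k+1}\notin\Voc{\psi}$, so clause~3 applied to $\theta_k$ and $\Lproves\theta_k\impl\psi$ yields $\Lproves\exists p_{k+1}\theta_k\impl\psi$. The only point to watch is that clause~3 requires $\psi$ to be free of the eliminated variable at every stage, and this holds because all the $p_i$ were chosen outside $\Voc{\psi}$. Hence $\theta$ is a Craig interpolant.

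\textbf{Lyndon case.} Iterate the polarity-specific quantifiers instead. For each $p\in\Vocpos{\phi}\setminus\Vocpos{\psi}$, apply $\existspos p$; for each $p\in\Vocneg{\phi}\setminus\Vocneg{\psi}$, apply $\existsneg p$. The step needing care is the signature bookkeeping. I need $\Vocpos{\existspos p\,\chi}\subseteq\Vocpos{\chi}\setminus\{p\}$ and $\Vocneg{\existspos p\,\chi}\subseteq\Vocneg{\chi}$. I read the ULIP clause~1, with "$p$-free" replaced by "$p^+$-free" and the signature inclusion taken polarity-wise, as supplying exactly this. Under that reading, after all eliminations $\Vocpos{\theta}\subseteq\Vocpos{\phi}\cap\Vocpos{\psi}$, and likewise for $\Vocneg{}$.

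The right-half induction then goes through as before. When $\existspos p$ is applied, $\psi$ is $p^+$-free because $p\notin\Vocpos{\psi}$, so clause~3 applies. When $\existsneg p$ is applied, $\psi$ is $p^-$-free because $p\notin\Vocneg{\psi}$, so clause~3 again applies.
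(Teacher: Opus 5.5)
Your proposal is correct and follows the standard argument behind the cited result. You eliminate the non-shared variables (resp.\ non-shared polarities) of $\phi$ with uniform post-interpolants, chain clause~2 to get $\phi\impl\theta$, and use clause~3 against the fixed $\psi$; the iteration is needed only because the chapter states the (U)IP one variable at a time. Your polarity-wise reading of the signature clause is the intended one. It matches Kurahashi's original definition, where the interpolant $B$ of $A$ with respect to $P^{+},P^{-}$ must satisfy $v^{+}(B)\subseteq v^{+}(A)\setminus P^{+}$ and $v^{-}(B)\subseteq v^{-}(A)\setminus P^{-}$. Your Lyndon bookkeeping genuinely needs it, since the bare inclusion $\Voc{\existspos p\,\chi}\subseteq\Voc{\chi}$ would allow an elimination step to introduce wrong-polarity occurrences of other variables.
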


In fact, uniform interpolants are the \textit{weakest} and \textit{strongest} Craig interpolants.

Now we turn to the proof-theoretic method by Pitts.
The ingredients for Pitts' method in comparison to Maehara's method are summarized in \cref{table:overview}. 
\begin{table}
\begin{align*}
&\text{\textbf{Maehara's method}}  && \text{\textbf{Pitts' method}}  \\ 
\hline \rule{0pt}{1\normalbaselineskip}
&\text{Craig interpolation} && \text{Uniform interpolation}\\
&\text{Sequent formulation of the Craig IP} \ \ \ \ \ \ \ && \text{Sequent formulation of the uniform IP} \ \ \ \ \ \ \ \\
&\text{`Cut-free' sequent calculus} && \text{`Strongly terminating' sequent calculus}\\
&\text{Proof of $\phi \impl \psi$} && \text{Proof search of $\phi$}
\end{align*}
\caption{General overview of similarities and differences between Maehara's and Pitts' method. The term `cut-free' is in quotes, because cut-free systems are nice for Maehara's method, but certainly not necessary (recall \cref{sect:role_of_cut}). Similarly, for Pitts' method, strongly terminating calculi are a nice feature to prove the~UIP, but not necessary (see \Cref{remark:terminating}).\vspace{-1em}}
\label{table:overview}
\end{table}
Because of technical reasons we will not prove the~UIP for~$\IPC$, as in the original proof by Pitts~\cite{Pitts92}, but we will explain Pitts' method for modal logic~$\K$ as done by B\'\i{}lkov\'a in~\cite{Bilkova06}. Recall \cref{remark:classical_dual_ui}: we only focus on the construction of pre-interpolants~$\forall p \phi$.

Towards a sequent based formulation of the~UIP we have the following in mind.  A~pre-interpolant for~$\phi$ w.r.t.~$p$ works almost as a Craig interpolant for all implications $\psi \impl \phi$ where $\psi$~is $p$-free (almost, because it might violate the Craig variable condition). So $\phi$~on the right is fixed, but $\psi$~on the left is flexible. This idea will be present in the split sequents in the following definition, where $\Gamma \seqar \Delta$ is fixed on the right of the split, and the $p$-free contexts~$\Pi$~and~$\Sigma$ are flexible and always on the left side of the split.

\begin{definition}
\label{def:seq_uip}
A sequent calculus~$\SC$ over a language~$\calL$ has the \emph{Sequent Pre-Uniform Interpolation Property~(SPreUIP)}  if{f} for any sequent $\Gamma \seqar \Delta$ and $p \in \Prop$, there exists a formula $A_p(\Gamma \seqar \Delta) \in \form_\calL$ such that the following properties hold:
\begin{enumerate}
\item
$\Voc{A_p(\Gamma \seqar \Delta)} \subseteq \Voc{\Gamma, \Delta} \setminus \{ p \}$;

\item
$\SCproves \Gamma, A_p(\Gamma \seqar \Delta) \seqar \Delta$;

\item
\label{item-three:sequent-uip}
$
\SCproves \Pi \seqar A_p(\Gamma \seqar \Delta), \Sigma
$ for any $p$-free multisets $\Pi$ and $\Sigma$ such that $\SCproves \Pi , \Gamma \seqar \Sigma , \Delta$.
\end{enumerate}
\end{definition}

\begin{remark}
    We only focus on modal logic~$\K$ for which above sequent formulation of the~UIP is sufficient because of \cref{remark:classical_dual_ui}. When considering intuitionistic logics one additionally has to define a formula~$E_p(\Gamma)$ that satisfies dual-like properties (not exactly dual since the construction of~$E_p(\Gamma)$ depends on~$A_p(\Gamma \seqar \phi)$). Here we leave out all such details and refer the reader to~\cite{Pitts92,AkbarTabatabaiJ18uniform}.  \lipicsEnd
\end{remark}
\begin{remark}
Here we only define a sequent version for the~UIP. One can also provide a sequent version for the~ULIP, see~\cite{AkbarTabatabaiIJ22nonnormal}. \lipicsEnd
\end{remark}

\begin{theorem}
    Let $\SC$~be a sequent calculus for a logic~$\lgc$. If $\SC$~has the~SPreUIP, then $\lgc$~has the~PreUIP.
\end{theorem}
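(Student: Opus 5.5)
The plan is to take, for a formula $\phi$ and a variable $p$, the candidate $\forall p\phi \ce A_p(\seqar \phi)$. This is the formula supplied by the SPreUIP for the sequent with empty antecedent and succedent $\phi$. We then check the three clauses of \cref{def:uniforminterpolation} for pre-interpolants one at a time, translating between $\SC$ and $\lgc$ each time via \cref{CalculusForLogic}.

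First, the variable clause is immediate: clause~1 of \cref{def:seq_uip} gives $\Voc{A_p(\seqar\phi)}\subseteq \Voc{\phi}\setminus\{p\}$, so $\forall p\phi$ is $p$-free and its signature is contained in that of $\phi$. Second, clause~2 gives $\SCproves A_p(\seqar\phi)\seqar\phi$. Since $\SC$ is a calculus for $\lgc$, this yields $\Lproves \forall p\phi\impl\phi$. (The formula interpretation of a sequent with a one-element antecedent and one-element succedent is literally the implication, modulo the conventions on empty conjunctions and disjunctions, which do not arise here.)

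Third, for uniformity, let $\psi$ be $p$-free with $\Lproves \psi\impl\phi$. By \cref{CalculusForLogic}, $\SCproves \psi\seqar\phi$. We apply clause~\ref{item-three:sequent-uip} of \cref{def:seq_uip} with $\Gamma=\emptyset$, $\Delta=\{\phi\}$, $\Pi=\{\psi\}$ and $\Sigma=\emptyset$. Both $\Pi$ and $\Sigma$ are $p$-free, and $\Pi,\Gamma\seqar\Sigma,\Delta$ is exactly $\psi\seqar\phi$. Hence $\SCproves \psi\seqar A_p(\seqar\phi)$, and translating back gives $\Lproves \psi\impl\forall p\phi$.

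There is no real obstacle; the only point requiring care is this last step. There one must choose the split so that the fixed part $\Gamma\seqar\Delta$ carries $\phi$ while the $p$-free flexible contexts carry $\psi$, and observe that $\SC$ being a calculus for $\lgc$ is used in both directions of \cref{CalculusForLogic}.
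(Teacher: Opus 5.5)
Your proposal is correct and matches the paper's argument. The paper likewise sets $\forall p\phi \ce A_p(\ \seqar \phi)$ and leaves the verification to the reader. Your check of the three clauses supplies exactly those omitted details: in particular, instantiating clause~3 with $\Gamma=\emptyset$, $\Delta=\{\phi\}$, $\Pi=\{\psi\}$, $\Sigma=\emptyset$, and using \cref{CalculusForLogic} in both directions.
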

\begin{proof}
    Define $\forall p \phi \ce  A_p(\ \seqar \phi)$. We leave it to the reader to check the details.
\end{proof}

$A_p(\Gamma \seqar \Delta)$ is constructed using the sequent calculus. Whereas for Maehara's method we are given a \textit{provable} split sequent $\Gamma ; \Gamma' \seqar \Delta ; \Delta'$, here we are given a sequent $\Gamma \seqar \Delta$ that might be \textit{unprovable}. So we cannot look at a \textit{proof} of $\Gamma \seqar \Delta$ because there simply might not be one. The trick is to construct the uniform interpolant via a \emph{proof search} of $\Gamma \seqar \Delta$. A proof search of a sequent is a tree in which, bottom-up, rules of the sequent calculus are applied in an attempt to find a proof. A key feature that we want of this proof search is that it is \textit{finite}. This helps ensure that the algorithm constructing the uniform pre-interpolant terminates.
\looseness=-1

\begin{definition}[Strong termination]
\label{def:termination}
A sequent calculus is said to be \emph{strongly terminating} if{f} for each sequent, any order of bottom-up applications of the rules results in a  finite tree, i.e., it stops with leaves that are either axioms or unprovable sequents to which no rule can be applied.
\looseness=-1
\end{definition}

\begin{remark}
\label{remark:terminating}
We focus on strongly terminating calculi, but weaker forms of termination may also work, though they typically require additional machinery, e.g., for modal logic $\T$ \cite{Bilkova06},  for a hypersequent calculus to prove the UIP for $\Svijf$~\cite{vdGiessenJK24}, or  in cyclic proofs~\cite{Afshari_etal21}. \lipicsEnd
\end{remark}

Many sequent calculi are not strongly terminating. For example, sequent calculus~$\LJ$ is not strongly terminating due to contraction rule~$(\lr\cntr)$  (see \cref{fig:interpolation}), but even when the explicit weakening and contraction rules are subsumed as in~$\Gthreepi$ from~\cite{TroelstraS00}, strong termination still fails, due to the intuitionistic version of rule~$(\lr\impl)$ where contraction is embedded as follows:
\[
\vliinf{(\lr\impl)^*}{}{\phi \impl \psi, \Gamma \seqar \Delta}{\phi \impl \psi, \Gamma \seqar \phi}{\psi, \Gamma \seqar \Delta}
\]
where $\Delta$ has at most one formula. Consequently, Pitts used a strongly terminating calculus~$\Gfourpi$ for~$\IPC$ where $(\lr\impl)^*$ is replaced with four other implication rules~\cite{Dyckhoff92,Hudelmaier93,Vorobev52,Vorobev70}, an idea later extended to intuitionistic modal logics~\cite{Iemhoff22}.

We look at modal logic~$\K$. Sequent calculus~$\GthreeK$ is not strongly terminating because of the contraction rules. Instead, we will consider calculus~$\GdrieK$ from \cref{fig:G3K} with implicit weakening and contraction.

\begin{figure}[t]
\centering{
\small{
\fbox{\parbox{0.97\textwidth}{
\vspace{-1em}
\[
\vlinf{(\id)^*}{}{p, \Gamma \seqar \Delta, p}{} \quad \quad
\vlinf{(\Kseqrule)^*}{}{\Pi, \Box \Gamma \seqar \Delta, \Box \phi}{\Gamma \seqar \phi}
\]
\[
\vlinf{(\lr\bot)^*}{}{\bot, \Gamma \seqar \Delta}{}\quad \quad
\vliinf{\lr\impl}{}{\phi \impl \psi,\Gamma\seqar\Delta}{\Gamma\seqar\Delta,\phi}{\psi, \Gamma \seqar \Delta} \quad \quad
\vliinf{\lr\vee}{}{\phi\vee\psi,\Gamma\seqar\Delta}{\phi,\Gamma\seqar\Delta}{\psi,\Gamma\seqar\Delta} \quad \quad
\vlinf{\lr\wedge}{}{\phi \wedge\psi,\Gamma\seqar\Delta}{\phi, \psi,\Gamma\seqar\Delta}
\]
\vspace{0.1em}
\[
\vlinf{(\rr\top)^*}{}{\Gamma\seqar\Delta,\top}{}\quad \quad
\vlinf{\rr\impl}{}{\Gamma\seqar\Delta,\phi \impl \psi}{\phi,\Gamma\seqar\Delta, \psi} \quad \quad \quad
\vlinf{\rr\vee}{}{\Gamma\seqar\Delta,\phi \vee \psi}{\Gamma\seqar\Delta,\phi,\psi} \quad \quad
\vliinf{\rr\wedge}{}{\Gamma\seqar\Delta,\phi\wedge\psi}{\Gamma\seqar\Delta,\phi}{\Gamma\seqar\Delta,\psi}
\]
\vspace{-0.3em}
}
}
}}
\caption{Sequent calculus~$\GdrieK$.}
\label{fig:G3K}
\end{figure}

\begin{theorem}
\label{thm:termination}
Sequent calculus~$\GdrieK$ for~$\K$ is strongly terminating.
\end{theorem}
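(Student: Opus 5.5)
We will prove strong termination by a measure that strictly decreases, in a well-founded order, from the conclusion to each premise of every rule of $\GdrieK$. Since every rule has at most two premises, every bottom-up proof-search tree is finitely branching. By K\"onig's lemma, such a tree is infinite only if it has an infinite branch. So it suffices to show that no branch can be infinite, whatever order the rules are applied in.

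For the measure, we define the weight $w(\phi)$ of a formula as the number of occurrences of connectives and modal operators in it, plus one, so that atoms, $\bot$ and $\top$ have weight $1$. To a sequent $\Gamma \seqar \Delta$ we assign the finite multiset of the weights of all formula occurrences in $\Gamma$ and $\Delta$. We compare these multisets by the Dershowitz--Manna multiset ordering over $\mathbb{N}$, which is well-founded. Replacing one element of a multiset by finitely many strictly smaller elements yields a strictly smaller multiset, and so does deleting elements.

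We then check each rule, going from conclusion to premise. The axioms $(\id)^*$, $(\lr\bot)^*$ and $(\rr\top)^*$ have no premises, so there is nothing to check. For the propositional rules $(\lr\impl)$, $(\lr\vee)$, $(\lr\wedge)$, $(\rr\impl)$, $(\rr\vee)$ and $(\rr\wedge)$, each premise is obtained from the conclusion by removing the principal formula and adding one or two of its immediate subformulas. Each of these has strictly smaller weight, and the context is copied unchanged, so the multiset strictly decreases. This holds in particular because no rule of $\GdrieK$ duplicates the principal formula, unlike $(\lr\impl)^*$ for $\LJ$.

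The only case needing care is $(\Kseqrule)^*$, with conclusion $\Pi, \Box\Gamma \seqar \Delta, \Box\phi$ and premise $\Gamma \seqar \phi$. Here the contexts $\Pi$ and $\Delta$ are deleted, and each formula $\Box\psi$ with $\psi \in \Gamma$, together with $\Box\phi$, is replaced by $\psi$ or $\phi$, whose weight is smaller by one. So the premise multiset is obtained from the conclusion multiset by deleting elements and replacing elements with strictly smaller ones. This is a strict decrease, since $\Box\phi$ is actually replaced. If $\Gamma$ is empty, the replacement of $\Box\phi$ by $\phi$ still gives strictness.

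Hence every branch of any proof-search tree corresponds to a strictly descending chain in a well-founded order, so every branch is finite, and therefore the whole tree is finite. A leaf is then either an axiom instance or a sequent to which no rule applies, meaning all its formulas are atoms or $\bot$/$\top$ in unhelpful positions and no $\Box$ occurs on the right. This is exactly the requirement in \cref{def:termination}. I expect no real obstacle. The only subtle point is to make sure that the implicit weakening in $(\Kseqrule)^*$ and in the axioms never increases the measure, and the deletion argument above handles this. A simpler alternative is to use the sum of weights as a single natural number, which also strictly decreases in every rule; we may present that instead.
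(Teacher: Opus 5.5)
Your main argument is correct and is essentially the paper's. Both proofs assign each sequent a size built from the weights of the formulas on both sides, and check that it strictly drops from the conclusion to every premise of every rule. The difference is in bookkeeping. You compare multisets of weights in the Dershowitz--Manna ordering and spell out the K\"onig's-lemma step. The paper uses the plain sum of weights, with $w(\phi)$ counting all symbol occurrences of $\phi$. Its displayed formula sums only over $\Gamma$, which is evidently a slip, since $(\rr\impl)$ enlarges the antecedent.

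One claim in your last paragraph is false, however. With your weight (connectives plus one) you get $w(\phi\circ\psi)=w(\phi)+w(\psi)$ for $\circ\in\{\wedge,\vee,\impl\}$. So the plain sum does not change under $(\lr\wedge)$, $(\rr\vee)$ and $(\rr\impl)$: for instance, $\seqar p\vee q$ and its premise $\seqar p,q$ both have sum $2$. The single-number version only works if, as in the paper, atoms and constants are counted too, so that $w(\phi\circ\psi)=w(\phi)+w(\psi)+1$. With your weight you genuinely need the multiset ordering.

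A minor point: leaves to which no rule applies may still contain boxed formulas in the antecedent, e.g.\ $\Box p\seqar q$. So your description of such leaves is slightly off, though this does not affect the argument.
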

\begin{proof}
    We define the weight~$w(\phi)$ of a formula~$\phi$ to be the number of symbols occurring in it counted as a multiset. For sequents, we define $w(\Gamma \seqar \Delta) \ce \Sigma_{\phi \in \Gamma} w(\phi)$. For all rules in~$\GdrieK$ it holds that the weight of each premise is smaller than the weight of the conclusion, meaning that the calculus is strongly terminating.
\end{proof}

\begin{remark}
Modal sequent rules~$(\Fourseqrule)$~and~$(\Tseqrule)$ from \cref{fig:modal-rules} prevent strong termination, because rule~$(\Fourseqrule)$ duplicates the context~$\Gamma$ and rule~$(\Tseqrule)$ duplicates principal formula~$\phi$ in the premise. In fact, logics~$\Kvier$~and~$\Svier$ do not have the~UIP~\cite{Bilkova06,GhilardiZ95}. 
\lipicsEnd
\end{remark}

\begin{theorem}\label{thm:Kuipsequent}
Sequent calculus~$\GdrieK$ has the~SPreUIP, and hence logic~$\K$ enjoys the~UIP.
\end{theorem}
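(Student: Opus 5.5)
The plan is to define $A_p(\Gamma \seqar \Delta)$ by recursion on the weight $w(\Gamma \seqar \Delta)$ from the proof of \cref{thm:termination}, following B\'\i{}lkov\'a's adaptation of Pitts. The guiding idea is that $A_p(\Gamma \seqar \Delta)$ should express, in $p$-free terms, ``the $p$-free context needed to make $\Gamma \seqar \Delta$ provable.''

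\textbf{Non-atomic sequents.} If $\Gamma \cup \Delta$ contains a non-atomic, non-boxed formula, pick one such formula and apply its invertible rule of $\GdrieK$ backwards, obtaining premises $S_1,\dots,S_n$ with $n \in \{1,2\}$. Set $A_p(\Gamma\seqar\Delta) \ce \bigwedge_i A_p(S_i)$. Invertibility of the propositional rules (height-preserving, implicit weakening and contraction) makes this choice harmless.

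\textbf{Critical sequents.} Otherwise $\Gamma = \Gamma_0, \Box\Gamma_1$ and $\Delta = \Delta_0, \Box\Delta_1$, where $\Gamma_0,\Delta_0$ consist of atoms, $\bot$ and $\top$. If $(\id)^*$, $(\lr\bot)^*$ or $(\rr\top)^*$ applies, set $A_p \ce \top$. Otherwise define
\[
A_p(\Gamma\seqar\Delta) \ce \bigvee_{q \in \Gamma_0, q\neq p} \neg q \;\vee\; \bigvee_{q\in\Delta_0, q\neq p} q \;\vee\; \bigvee_{\phi\in\Delta_1} \Box A_p(\Gamma_1 \seqar \phi) \;\vee\; \Diamond A_p(\Gamma_1 \seqar\;).
\]
The last disjunct handles a $\Box$-formula coming from the left $p$-free context $\Pi$: the rule $(\Kseqrule)^*$ applied with principal formula $\Box\psi \in \Sigma$ needs $\Gamma_1, \Pi_1 \seqar \psi$, where $\Pi_1$ is the boxed part of $\Pi$. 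The recursion is well founded because every premise has smaller weight.

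\textbf{Properties 1 and 2.} Property 1 is immediate by induction, since $p$ is excluded explicitly at the atomic level and the recursion only uses subformulas. For property 2, $\Gamma, A_p(\Gamma \seqar \Delta) \seqar \Delta$, argue by induction. In the non-atomic case, derive each premise with the extra conjunct and reapply the rule. In the critical case, split on the disjunct with $(\lr\vee)$:
\begin{itemize}
\item $\neg q$ and $q$ are closed by $(\id)^*$;
\item from IH $\Gamma_1, A_p(\Gamma_1\seqar\phi) \seqar \phi$, apply $(\Kseqrule)^*$ to get $\Box\Gamma_1, \Box A_p \seqar \Box\phi$;
\item from $\Gamma_1, A_p(\Gamma_1\seqar\;)\seqar$, use $\Diamond\theta = \neg\Box\neg\theta$ with $(\rr\impl)$ and $(\Kseqrule)^*$ to derive $\Box\Gamma_1 \seqar \Box\neg\theta$.
\end{itemize}

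\textbf{Property 3 (main obstacle).} We must show: if $\proves \Pi,\Gamma\seqar\Sigma,\Delta$ with $\Pi,\Sigma$ $p$-free, then $\proves \Pi \seqar A_p(\Gamma\seqar\Delta),\Sigma$. I would prove this by induction on the height of the proof of $\Pi,\Gamma\seqar\Sigma,\Delta$, with a subinduction on $w(\Gamma\seqar\Delta)$, and case on the last rule.
\begin{itemize}
\item If the last rule acts on a formula of $\Pi$ or $\Sigma$, use the IH on the premises and reapply the same rule, since $A_p$ is unchanged.
\item If $\Gamma\seqar\Delta$ is non-atomic, first use invertibility to move to the premises of the chosen rule, with no greater height. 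The IH then gives $\Pi\seqar A_p(S_i),\Sigma$ for each $i$, and $(\rr\wedge)$ combines them.
\item In the critical case the last rule is an axiom or $(\Kseqrule)^*$. An axiom $q$ on both sides with $q$ split between $\Gamma$ and $\Sigma$ (or $\Pi$ and $\Delta$) forces $q\neq p$, and the matching disjunct $\neg q$ or $q$ gives a proof.
\item If $(\Kseqrule)^*$ has principal formula $\Box\phi \in \Delta_1$, its premise is $\Pi_1,\Gamma_1'\seqar\phi$ with $\Gamma_1'\subseteq\Gamma_1$. Weakening admissibility lifts this to $\Pi_1,\Gamma_1\seqar\phi$. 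The IH gives $\Pi_1\seqar A_p(\Gamma_1\seqar\phi)$, and $(\Kseqrule)^*$ yields $\Pi\seqar\Box A_p,\Sigma$.
\item If the principal formula is $\Box\psi\in\Sigma$, the IH gives $\Pi_1\seqar A_p(\Gamma_1\seqar\;), \psi$, which must be turned into $\Pi \seqar \Diamond A_p(\Gamma_1\seqar\;), \Sigma$ using $\Diamond\theta=\neg\Box\neg\theta$.
\end{itemize}

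The delicate points are this last step and getting the bookkeeping of the two inductions to mesh. In the $\Box\psi \in \Sigma$ case, $\Pi_1, \neg A_p \seqar \psi$ gives $\Box\Pi_1, \Box\neg A_p \seqar \Box\psi$ and hence $\Pi \seqar \neg\Box\neg A_p, \Sigma$, so $\neg$ is handled via $\impl,\bot$. Weakening and invertibility admissibility for $\GdrieK$ are assumed as standard.
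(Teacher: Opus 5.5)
Your construction and verification plan are essentially the paper's (B\'\i{}lkov\'a's adaptation of Pitts, as in the table in the proof). Invertible rules get conjunctions of premise interpolants. Critical sequents get a disjunction of literals, $\Box$-clauses for $\Box\phi\in\Delta_1$, and a $\Diamond$-clause for boxes coming from $\Sigma$. Property~3 goes by induction on proof height. Your signs on the literal disjuncts, $\neg q$ for antecedent atoms and $q$ for succedent atoms, are the ones that make properties~2 and~3 work. The printed table has them the other way round, which looks like a typo, so do not change yours.

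There is one concrete defect: your recursion is not well founded. You include $\Diamond A_p(\Gamma_1\seqar\;)$ unconditionally, and your claim that every recursive call has smaller weight fails for the empty sequent. There your clause reads $A_p(\;\seqar\;)\ce\Diamond A_p(\;\seqar\;)$, which is circular. Every critical sequent with $\Gamma_1=\emptyset$ calls $A_p(\;\seqar\;)$. Examples are $\seqar p$, $p\seqar\bot$ and $\seqar\Box\phi$, which are exactly the sequents in the paper's worked example. So $A_p$ is left undefined on most inputs, including $\forall p\,\phi=A_p(\;\seqar\phi)$ for many $\phi$. This is why the paper splits the last row into $\Gamma'\neq\emptyset$ and $\Gamma'=\emptyset$ and drops the $\Diamond$-clause in the latter case.

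Either of two repairs works:
\begin{itemize}
\item Omit the $\Diamond$-disjunct when $\Gamma_1=\emptyset$. In property~3 you must then treat the case of principal formula $\Box\psi\in\Sigma$ with $\Gamma_1=\emptyset$ directly: the premise is $\Pi_1'\seqar\psi$, so $(\Kseqrule)^*$ with its implicit weakening gives $\Pi\seqar A_p(\Gamma\seqar\Delta),\Sigma$ at once.
\item Add the base case $A_p(\;\seqar\;)\ce\bot$. Properties~2 and~3 hold trivially for it, and your remaining argument goes through unchanged.
\end{itemize}

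A minor point: property~3 is proved by induction on height, so the weakening you apply to the premise of $(\Kseqrule)^*$ must be height-preserving. It is in $\GdrieK$, but you should say so.
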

\begin{proof}[Proof idea]
By recursion on $\Gamma \seqar \Delta$ we construct the formula $A_p(\Gamma \seqar \Delta)$ and prove that it acts as the pre-interpolant of $\Gamma \seqar \Delta$ w.r.t.~$p$. 
The algorithm starts a proof search bottom-up for sequent $\Gamma \seqar \Delta$ and inductively defines the uniform interpolants. The algorithm follows the steps from \cref{fig:uip} according to the shape of the sequents with priority given to the higher rows in the table. The left column of the table indicates the rule applied in the proof search, the middle column the shape of the conclusion of that rule, and the right column defines the pre-interpolant using the premise(s) of that rule. Rows~2--7 correspond to so-called \emph{invertible rules}. We stress that for each invertible rule, the pre-interpolant for the conclusion can always be taken to be a conjunction of the pre-interpolants of the premises. The last row is a special case corresponding to the \emph{non-invertible} rule~$(\Kseqrule)$ providing the $\Box$-clauses in the result, but it also corresponds to the possibility in which the proof search `applies rule~$(\Kseqrule)$' on a boxed formula of the extra context~$\Sigma$ on the `left split' from condition~\ref{item-three:sequent-uip} of \cref{def:seq_uip} providing the $\Diamond$-clause in the result. The algorithm terminates in the same ordering on weights~$w$ in which sequent calculus~$\GdrieK$ terminates (\cref{thm:termination}). Indeed, all rows except for the latter reflect the rules of the calculus, and also the recursion in the last row reduces in the same way. \looseness=-1

\begin{table}[t]
{\small{
\begin{align*}
& \textit{rule} &&\Gamma \seqar \Delta \textit{ matches} && A_p(\Gamma \seqar \Delta)\textit{ equals}\\
\hline \rule{0pt}{1\normalbaselineskip}
& (\id)^*  && r,\Gamma' \seqar \Delta', r && \top \\
& (\rr\top)^*  && \Gamma \seqar \Delta', \top && \top \\
& (\lr\bot)^*  && \bot,\Gamma' \seqar \Delta && \top \\
& (\lr\wedge) &&\Gamma',\psi_1 \wedge \psi_2 \seqar \Delta && A_p(\Gamma', \psi_1,\psi_2 \seqar \Delta) \\
& (\lr\vee) &&\Gamma',\psi_1 \vee \psi_2 \seqar \Delta && A_p(\Gamma',\psi_1 \seqar \Delta) \wedge A_p(\Gamma,\psi_2 \seqar \Delta) \\
& (\lr\impl) &&\Gamma', \psi_1 \impl \psi_2 \seqar \Delta && A_p(\Gamma' \seqar \Delta, \psi_1) \wedge A_p(\Gamma,\psi_2 \seqar \Delta) \\
& (\rr\wedge) &&\Gamma \seqar \Delta',\psi_1 \wedge \psi_2 && A_p(\Gamma \seqar \Delta', \psi_1) \wedge A_p(\Gamma \seqar \Delta', \psi_2)\\
& (\rr\vee) &&\Gamma \seqar \Delta',\psi_1 \vee \psi_2 && A_p(\Gamma \seqar \Delta', \psi_1,\psi_2)\\
& (\rr\impl) &&\Gamma \seqar \Delta',\psi_1 \impl \psi_2 && A_p(\Gamma, \psi_1 \seqar \Delta',\psi_2)\\
& (\Kseqrule)^* \text{ or no} && \Box \Gamma', \Phi \seqar \Box \Delta',\Psi \  && \Diamond A_p(\Gamma' \seqar \emptyset) \vee \bigvee_{q \in \Phi\setminus\{p\}} q  \vee \bigvee_{r \in \Psi \setminus \{p\}} \neg r \vee \bigvee_{\psi \in \Delta'} \Box A_p(\Gamma' \seqar \psi)    
\\[-1.5em]
& \text{rule applies} && \text{where } \Gamma' \neq \emptyset &&
\\
& (\Kseqrule)^* \text{ or no} && \Phi \seqar \Box \Delta',\Psi \  && \bigvee_{q \in \Phi\setminus\{p\}} q  \vee \bigvee_{r \in \Psi \setminus \{p\}} \neg r \vee \bigvee_{\psi \in \Delta'} \Box A_p(\seqar \psi)\\[-1.5em]
& \text{rule applies} &&  &&
\end{align*}
\vspace{-2em}
}}
\caption{Uniform interpolant construction for~$\GdrieK$, where $\Phi$ and $\Psi$ do not contain boxed formulas.\vspace{-1em}}
\label{fig:uip}
\end{table}

It remains to check the correctness of this algorithm, i.e., we need to check the conditions of \cref{def:seq_uip}. This is based on syntactic arguments for which we refer to~\cite{Bilkova06}. 
\end{proof}

\begin{example}
Let us apply the uniform interpolant algorithm for~$\GdrieK$ to formula $\Box p \vee \Box \neg p$. We present it in a proof search where we compute~$A_p(\cdot)$ of each sequent denoted in parentheses:
\[
\small{
\vlderivation{
\vlin{\rr\vee}{}{\ \seqar \Box p \vee\Box \neg p \ \ (\Box \bot \vee \Box \bot)}{
    \vliid{}{\text{Proof search on rule ($\Kseqrule$)}}{\ \seqar \Box p, \Box \neg p  \ \ (\Box \bot \vee \Box \bot)}{
    		\vlhy{\ \seqar p \ \ (\bot)}
    }
    {
    		\vlin{\rr\impl}{}{\ \seqar \neg p  \ \ (\bot)}{
        		\vlhy{p \seqar \bot \ \ (\bot)}
			}
    }
}
}
}
\]
For the top sequents, no rule of $\GdrieK$ applies, and the last row of \cref{fig:uip} applies with the empty disjunction yielding $\bot$. We leave it to the reader to check the conditions of \cref{def:seq_uip}. \looseness=-1\lipicsEnd
\end{example}

Similar proof-theoretic proofs of uniform interpolation have been provided for classical modal logics~$\GL$, $\SvierGrz$, $\T$~\cite{Bilkova06}, and~$\D$~\cite{Iemhoff19AML}, as well as for intuitionistic modal logics~$\iK$,~$\iKD$~\cite{Iemhoff19APAL}, and~$\iSL$~\cite{Feree_etal24}. The method has been extended to general forms of calculi providing results for a wide range of intermediate and substructural logics as discussed in the next section.

The uniform interpolation algorithm has been formalized in the interactive theorem prover Coq/Rocq for logics~$\IPC$, $\K$, $\GL$,~and~$\iSL$~\cite{Feree_vGool23,Feree_etal24}. This has resulted in an online calculator of uniform interpolants~\cite{Feree_vGool23,Feree_etal24}. This calculator has been useful to prove that certain second-order intuitionistic connectives are non-definable~\cite{Kocsis25}.

\section{Interpolation in Universal Proof Theory}
\label{sec:universal-proof-theory}

\emph{Universal proof theory}, initiated by Iemhoff \cite{Iemhoff19APAL,Iemhoff19AML} and extended in later works \cite{AkbarTabatabaiIJ21,AkbarTabatabaiIJ22int,AkbarTabatabaiIJ22nonnormal,AkbarTabatabaiJ18uniform,AkbarTabatabaiJ25,DP}, investigates general questions about the existence, equivalence, and characterization of well-behaved proof systems across broad classes of logics. It offers a uniform and modular method to establish logical properties, such as the~CIP~and~UIP and the admissibility of Visser rules~\cite{DP}, by linking them to the existence of certain structurally constrained sequent calculi. This framework applies to many logics, notably intuitionistic, modal, and substructural ones, by defining syntactic conditions that guarantee these properties.

The method yields both positive and negative results. Positively, if a logic admits a (terminating) \emph{semi-analytic} calculus satisfying the required structural conditions, it enjoys the~CIP~(and~UIP). In \cref{sec:semi-analytic} we introduce these calculi and see that they provide a general form for axioms and rules, such that as a result, the framework provides a systematic way to prove interpolation properties. Negatively, and more significantly, since these properties are rare, these results indicate that most logics cannot have a semi-analytic
(resp. terminating semi-analytic) calculus which we will discuss in \cref{sect:results}. In this sense, it not only provides a systematic way to prove interpolation results but also demonstrates the non-existence of well-behaved calculi for large classes of logics. 

Although interpolation properties were the first to be investigated within this line of research, \cite{DP} introduces a new logical invariant: the admissibility of the Visser rules. This extension significantly broadens the scope of universal proof theory, moving beyond interpolation to encompass further proof-theoretic properties. Such a generalization is particularly valuable for obtaining negative results, as it provides new tools to capture the limitations of proof systems. In \cite{DP}, a constructive sequent calculus is introduced, and one interesting result is that no intermediate logic other than $ \IPC$ admits a constructive sequent calculus.
\subsection{Semi-analytic Rules and Interpolation}
\label{sec:semi-analytic}
The following definition introduces the so-called \emph{semi-analytic} rules, a concept that is slightly weaker than the analyticity property of rules: a rule is \emph{analytic} when the formulas in the premises are subformulas of the principal formula in the conclusion. As a result, these rules are broader and more general than the analytic ones. 

\begin{definition}[Semi-analytic rules, \cite{AkbarTabatabaiJ18uniform,AkbarTabatabaiJ25}] \label{semi-analyticRules}
For $1 \leq i \leq n$ and $1 \leq j \leq m$, let $\Gamma_i$'s, $\Delta_i$'s, and~$\Pi_j$'s be pairwise disjoint families of pairwise distinct multiset variables, $\overline{\phi}_{ir}$'s, $\overline{\psi}_{js}$'s, and~$\overline{\chi}_{js}$'s be multisets of formulas and $\phi$ a formula, where $1 \leq r \leq k_i$ and $1 \leq s \leq l_j$.  

A rule is called \emph{left single-conclusion semi-analytic} if it has the form
\[
\vliinf{}{}{\Pi_1, \dots, \Pi_m, \Gamma_1, \dots, \Gamma_n, \phi \seqar \Delta_1, \dots, \Delta_n}{\{\Pi_j , \overline{\psi}_{js} \seqar \overline{\chi}_{js} \mid 1 \leq j \leq m, 1 \leq s \leq l_j \}}{\quad \{\Gamma_i , \overline{\phi}_{ir} \seqar \Delta_i \mid 1 \leq i \leq n, 1 \leq r \leq k_i \}}
\]
and it satisfies the following variable condition:
\[
\bigcup_{ir} \Voc{\overline{\phi}_{ir}} \cup \bigcup_{js} \Voc{\overline{\psi}_{js}} \cup \bigcup_{js} \Voc{\overline{\chi}_{js}}
  \subseteq \Voc{\phi}.   
\]
In each instance of the rule, at most one of $\Delta_i$'s can be substituted by a formula and the rest must be empty, as the rule is single-conclusion. 

A rule is called \emph{right single-conclusion semi-analytic} if it has the form
\[
\vlinf{}{}{\Gamma_1, \dots, \Gamma_n \seqar \phi}{\{\Gamma_i , \overline{\phi}_{ir} \seqar \overline{\chi}_{ir} \mid 1 \leq i \leq n, 1 \leq r \leq k_i\}}
\]
and the following variable condition holds:
\[
     \bigcup_{ir} \Voc{\overline{\phi}_{ir}} \cup \bigcup_{ir} \Voc{\overline{\chi}_{ir}} \subseteq \Voc{\phi}.
\]
A \emph{multi-conclusion semi-analytic} rule can be defined similarly, when we allow the rule to be multi-conclusion (see~\cite[Definition 4.3]{AkbarTabatabaiJ25}). A rule is called \emph{semi-analytic} if it is a single-conclusion or multi-conclusion left or right semi-analytic rule.
\end{definition}

\begin{example}\label{ExampleSemi-analyticRules}
A generic example of a left single-conclusion semi-analytic rule is the following:
\[
\vliiinf{}{}{\Pi, \Gamma, \mu \seqar \Delta}{\Pi, \nu^1_{11}, \nu^2_{11} \seqar \rho_{11}}{\Pi, \nu_{12} \seqar \rho_{12}}{\Gamma, \mu^1_1, \mu^2_1, \mu^3_1 \seqar \Delta}
\]
where 
$
\Voc{\mu^1_1} \cup \Voc{\mu^2_1} \cup \Voc{\mu_1^3} \cup \Voc{\nu_{11}^1} \cup \Voc{\nu_{11}^2} \cup \Voc{\nu_{12}} \cup \Voc{\rho_{11}} \cup \Voc{\rho_{12}} \subseteq \Voc{\mu}$.

The left and middle premises are  in the same family with the context $\Pi$ in the antecedent. Thus, one copy of $\Pi$ appears in the antecedent of the conclusion.  \lipicsEnd
\end{example}

\begin{example}
Concrete examples of multi-conclusion semi-analytic rules are the rules in \cref{fig:G3pc}, except for the cut rule. The single-conclusion version of these rules are single-conclusion semi-analytic rules.
Concrete non-examples of multi-conclusion semi-analytic rules include the cut rule, as it violates the variable condition. Considering the modal rules in \cref{fig:modal-rules},  rules~$(\Tseqrule)$~and~$(\Sfourseqrule)$ are single-conclusion semi-analytic, $(\Kseqrule)$~and~$(\Dseqrule)$~are non-examples as the contexts do not remain intact from the premises to the conclusion, and $(\Fourseqrule)$~and~$(\GLseqrule)$ are also non-examples, as context~$\Gamma$ disappears from the premise to the conclusion.
Non-examples for single-conclusion semi-analytic rules are provided by the single-conclusion version of the mentioned rules, i.e.,~$(\Kseqrule)$, $(\Dseqrule)$, $(\Fourseqrule)$, $(\GLseqrule)$,~and~$(\cut)$. For another non-example, consider the following rule in the calculus~$\mathbf{KC}$ introduced in \cite{bor} for the logic $\KC= \IPC+\neg \phi \vee \neg \neg \phi$:
\[
\vlinf{}{}{\Gamma \seqar \phi \impl\psi, \Delta}{\Gamma, \phi \seqar \psi, \Delta}
\qquad
\vliinf{}{}{\Gamma, \phi \impl\psi, \Pi \seqar  \Delta, \Lambda}{\Gamma \seqar \Delta, \phi}{\ \ \psi, \Pi \seqar \Lambda}
\]
in which $\Delta$ must be substituted by multisets of \emph{strictly negative formulas} (not containing all formulas in the language; see \cite{bor}).  
This rule is not semi-analytic as its context $\Delta$ is not free for arbitrary substitutions of multisets. \lipicsEnd
\end{example}

\begin{remark}
Note that the semi-analytic cut rules introduced in \cref{subsec:Maehara} do not fall in the category of semi-analytic rules as per \cref{semi-analyticRules}. The reason is the simple fact that in a cut rule, there is no principal formula. However, it is possible to generalize the notion to also cover semi-analytic cut rules, a direction we leave for future work.
\lipicsEnd
\end{remark}

 Now, we define a general form for axioms.

\begin{definition}[Focused axioms, \cite{AkbarTabatabaiJ25}]\label{Dfn Focused Axioms}
A meta-sequent is a \emph{focused axiom}, if it has one of the forms:
\[
 \phi \seqar \phi \quad \quad  \quad \quad \seqar \bar{\psi} \quad \quad  \quad \quad  \bar{\theta} \seqar \quad \quad  \quad \quad \Gamma, \bar{\phi} \seqar \Delta \quad \quad  \quad \quad  \Gamma \seqar \bar{\phi}, \Delta
\]
where for each axiom $\Voc{\mu}= \Voc{\nu}$, for any $\mu, \nu \in \bar{\eta}$ and $\eta \in \{\phi, \psi, \theta\}$.  
\end{definition}

Note that the term \emph{focused} is different from the notion of `focused proofs' used in proof-search, as in the works \cite{Dale, Dale1}.
\begin{example}
The axioms in sequent calculi~$\LK$~and~$\LJ$ are all focused. More examples:
\[
p, \neg p \seqar \quad \quad  \quad \quad  \seqar p, \neg p \quad \quad  \quad \quad  \Gamma, \neg \top \seqar \Delta \quad \quad  \quad \quad \Gamma \seqar \Delta, \neg \bot   
\]
For a non-example, consider the axiom $(p, \neg p, q \seqar \,)$, where $p$ and $q$ are distinct atoms. This axiom is not focused, as the set of the variables of $p$ and $q$ 
are not equal. \lipicsEnd
\end{example}

\begin{definition}
A calculus~$\SC$ is \emph{(single-conclusion) multi-conclusion semi-analytic} if each rule in~$\SC$ is (single-conclusion) multi-conclusion semi-analytic and its axioms are (single-conclusion versions of) focused axioms. In case the language contains the modality~$\Box$, any union of the sets of rules $\{(\Kseqrule)\}, \{(\Fourseqrule)\}, \{(\Kseqrule),(\Dseqrule)\}, \{(\Kseqrule),(\Tseqrule)\}$ are also allowed to appear in~$\SC$. \looseness=-1
\end{definition}

We finish this subsection by defining what it means for a calculus to be fully terminating. This is a stronger notion than \emph{strong termination} introduced in \cref{def:termination}.

\begin{definition}[Fully terminating sequent calculi]\label{Dfn:TerminatingCalculi}
Let $\SC$~be a calculus and $\preceq$~be a well-founded order on sequents. We say that $\SC$~is \emph{fully terminating with respect to~$\preceq$} 
if for any sequent~$S$ there are at most finitely many instances of the rules in~$\SC$ with the conclusion~$S$. Moreover, we require the following to be strictly less than~$S$ with respect to the order~$\preceq$:
\begin{itemize}
\item
the premises of all instances of a rule where the conclusion is $S$,

\item
proper subsequents of $S$, and

\item
in case that the language contains the modality $\Box$, any $S'=(\Gamma, \Pi \seqar \Delta, \Lambda)$, where $\Pi \cup \Lambda \neq \emptyset$ and $S=(\Gamma, \Box \Pi \seqar \Delta, \Box \Lambda)$.
\end{itemize}
The calculus $\SC$ is \emph{fully terminating} if there is an order $\preceq$ such that $\SC$ is fully terminating with respect to $\preceq$.
\end{definition}

\begin{remark}
   In modal logics, a fully terminating semi-analytic calculus cannot contain the modal rules $(\Fourseqrule)$ and $(\Tseqrule)$ presented in Figure~\ref{fig:modal-rules}. This means that any fully semi-analytic modal sequent calculus can contain the set of modal rules $\{(\Kseqrule)\}$ and $\{(\Kseqrule),(\Dseqrule)\}$. \lipicsEnd
\end{remark}

A useful rule of thumb for interpolation is that if a logic~$\lgc$ has a semi-analytic calculus~$\SC$, then $\lgc$~enjoys the~CIP, and if $\SC$~is fully terminating, it also enjoys the~UIP. The proofs essentially follow Maehara's approach for the~CIP and Pitts' method for the~UIP. Of course, there are a lot of technical details to check for which we refer to \cite{AkbarTabatabaiJ25,AkbarTabatabaiJ18uniform}.

\subsection{Interpolation and (Non-)Existence of Sequent Calculi}
\label{sect:results}
In this section, we apply the uniform methodology introduced in the context of universal proof theory and present a rich number of positive and negative results based on the~CIP~and~UIP across different logics.  
For a positive result we first observe that a given logic has a (fully terminating) semi-analytic calculus, which allows us to conclude that this logic has the~CIP~(UIP). For a negative result we use the contraposition: if we know that a logic does not have the~CIP~(UIP), we immediately conclude that this logic cannot have a (fully terminating) semi-analytic calculus. 

\subsubsection{Intermediate and Modal Logics}

The following theorem is the main positive result with regard to proving the~CIP~and~UIP.

\begin{theorem}[\cite{AkbarTabatabaiJ25,AkbarTabatabaiJ18uniform}]\label{MainCor} 
Let $\lgc$~be a logic over a language $\calL \in \{\calLp, \calLBox\}$ and $\SC$~be a single-conclusion (resp.~multi-conclusion) semi-analytic calculus for~$\lgc$. 
\begin{itemize}
    \item If $\calL = \calLp$ and if $\IPC \subseteq \lgc$ (resp.~$\lgc = \CPC$), then $\lgc$~has the~CIP. Moreover, if $\SC$~is fully terminating, then $\lgc$~has the~UIP.
    \item If $\calL = \calLBox$ and if $\K \subseteq \lgc$, then~$\lgc$ has the~CIP. Moreover, if $\SC$~is fully terminating, then~$\lgc$ has the~UIP.
\end{itemize}
\end{theorem}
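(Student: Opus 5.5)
The plan is to prove the CIP part by a generalized Maehara argument and the UIP part by a generalized Pitts argument, in both cases working with split sequents over the given calculus $\SC$. The first step is to establish the SCIP for $\SC$ (in the single-conclusion case, with the succedent always on the right of the split, as in \cref{remark:IPC}). Then \cref{thm:Craig-sequent->logic} transfers it to $\lgc$; this needs that $\SC$ is a calculus for $\lgc$ in the sense of \cref{CalculusForLogic}.

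The SCIP is proved by induction on a cut-free proof of $\Gamma,\Gamma' \seqar \Delta,\Delta'$ in $\SC$, with a fixed split. There are three kinds of cases.
\begin{itemize}
\item \emph{Focused axioms.} These are handled exactly like the split axioms of \cref{fig:interpolation}. The interpolant is $\top$ or $\bot$ when all the designated formulas $\bar\phi$ lie on one side. When they are split across both sides, it is a conjunction (or negation) of the designated formulas on one side. The focusing condition $\Voc{\mu}=\Voc{\nu}$ guarantees the variable condition, because any variable on one side then also occurs on the other.
\item \emph{Left semi-analytic rules.} Suppose the principal formula $\phi$ is on, say, the left of the split. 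Every contextual multiset variable $\Gamma_i,\Delta_i,\Pi_j$ is split by the instance, and each premise inherits the corresponding split. The active formulas $\overline{\phi}_{ir},\overline{\psi}_{js},\overline{\chi}_{js}$ are put on the same side as $\phi$. Then the premise interpolants are combined: take $\bigvee_r\theta_{ir}$ within each family $i$, and $\bigvee_s\theta_{js}$ within each $\Pi_j$-family. These are joined disjunctively if $\phi$ is on the left and conjunctively if it is on the right, which generalizes $L(\lr\vee)$ and $R(\lr\impl)$. In the single-conclusion case, the $\Pi_j$-premises have their own succedents $\overline{\chi}_{js}$, which must sit with $\phi$. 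So, as in $L(\lr\impl)$ of \cref{remark:IPC}, the split of $\Pi_j$ is swapped, and those interpolants enter as antecedents of implications, giving $\bigwedge_{js}\theta_{js}\impl\bigvee_{ir}\theta_{ir}$. The variable condition of \cref{semi-analyticRules} ensures that active formulas introduce no variable outside $\Voc{\phi}$, which lies on the correct side. The two required derivations are rebuilt by applying the same rule instance to premises extended with the interpolant, followed by $\vee$/$\wedge$/$\impl$ manipulations. These manipulations require that $\lgc\supseteq\IPC$, or that we are in $\CPC$ for multi-conclusion calculi. I will use that hypothesis, via completeness of $\SC$, to obtain admissibility of weakening, contraction and the connective rules.
\item \emph{Right semi-analytic rules and modal rules.} Right rules are dual, and are easier since there are no $\Pi$-families. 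The modal rules $(\Kseqrule)$, $(\Dseqrule)$, $(\Fourseqrule)$ and $(\Tseqrule)$ are treated exactly by the transformations of \cref{fig:modal_Craig_transformations}. This uses $\K\subseteq\lgc$, so that $\Box$ and $\Diamond$ behave normally in the verification steps.
\end{itemize}

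For the UIP, assume $\SC$ is fully terminating with respect to $\preceq$. I will define $A_p(S)$, together with a dual $E_p(\Gamma)$ in the intuitionistic setting, by well-founded recursion on $\preceq$, following the shape of \cref{fig:uip}. $A_p(S)$ will be a disjunction over all finitely many rule instances with conclusion $S$, and over all ways of splitting the contexts into a part belonging to $S$ and a part belonging to the flexible $p$-free context $\Pi,\Sigma$. Each disjunct is assembled from the $A_p$ and $E_p$ of the premises, using the same combination as in the Craig case. Proper subsequents are allowed in the recursion because they are $\preceq$-smaller, and the modal clause recurses on the un-boxed sequent $S'$ from \cref{Dfn:TerminatingCalculi}, producing $\Box$ and $\Diamond$ clauses as in the last rows of \cref{fig:uip}. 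The conditions of \cref{def:seq_uip} are then proved by induction on $\preceq$ for conditions~1–2, and by induction on the proof of $\Pi,\Gamma\seqar\Sigma,\Delta$ for condition~\ref{item-three:sequent-uip}.

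I expect the main obstacle to be condition~\ref{item-three:sequent-uip}. The last rule of the given proof may act on the flexible context or on $S$, and it may distribute the contexts in arbitrary ways across premise families. Handling this requires including every possible context split in the definition of $A_p$, and showing that each such case is matched by a disjunct. Finiteness of rule instances is what keeps this disjunction finite.
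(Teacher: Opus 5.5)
Your strategy is the one the paper points to: it gives no argument beyond ``Maehara for the CIP, Pitts for the UIP'' and a reference to \cite{AkbarTabatabaiJ25,AkbarTabatabaiJ18uniform}. Your treatment of the axioms, the right rules, the multi-conclusion classical case, the modal transformations and the overall shape of the $A_p$/$E_p$ recursion is fine as sketched.

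\textbf{The gap.} There is a concrete error in the single-conclusion left-rule case with $\phi$ on the left. Your formula $\bigwedge_{j,s}\theta_{js}\impl\bigvee_{i,r}\theta_{ir}$ is wrong as soon as the rule has $n\geq 2$ families $\Gamma_i$, which \cref{semi-analyticRules} explicitly allows.

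In a single-conclusion instance only one $\Delta_{i_0}$ is a formula. So the premises of every other family have the form $\Gamma_i,\overline{\phi}_{ir}\seqar$ with empty succedent. When you re-apply the rule to obtain $\phi,\Gamma^{\ell},\Pi^{\ell}\seqar\theta$, the disjunction can be placed only in $\Delta_{i_0}$. For $i\neq i_0$, the hypothesis $\Gamma_i^{\ell},\overline{\phi}_{ir}\seqar\theta_{ir}$ does not give the empty-succedent premise the rule needs, and intuitionistically you cannot fold $\theta_{ir}$ into the disjunction.

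The formula itself fails, not just your derivation of it. Take the rule with premises $\Gamma_1,\alpha\seqar\Delta_1$ and $\Gamma_2,\beta\seqar\Delta_2$ and conclusion $\Gamma_1,\Gamma_2,(\neg\beta\impl\alpha)\wedge(\neg\alpha\impl\beta)\seqar\Delta_1,\Delta_2$.
\begin{itemize}
\item It is single-conclusion semi-analytic and $\IPC$-sound in every instance. So it can be added to any single-conclusion semi-analytic calculus for $\IPC$ without changing the logic.
\item Consider the split $(\neg\beta\impl\alpha)\wedge(\neg\alpha\impl\beta);\neg\beta\seqar\alpha$, coming from the instance $\Gamma_1=\emptyset$, $\Delta_1=\alpha$, $\Gamma_2=\neg\beta$.
\item Your premise splits force $\theta_1\equiv\alpha$ and $\beta\vdash\theta_2\vdash\neg\neg\beta$.
\item Your recipe therefore outputs $\alpha\vee\theta_2$, which the principal formula does not imply in $\IPC$. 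It is refuted at the root of the three-world Kripke model whose two maximal worlds force only $\alpha$ and only $\beta$, respectively.
\end{itemize}

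\textbf{The repair.} Treat the families $\Gamma_i$ with $i\neq i_0$ like the $\Pi_j$: their interpolants must enter the antecedent of the implication. Use $\neg\theta_{ir}$, or equivalently swap their split. This gives $\bigl(\bigwedge_{j,s}\theta_{js}\wedge\bigwedge_{i\neq i_0,r}\neg\theta_{ir}\bigr)\impl\bigvee_{r}\theta_{i_0r}$, with the empty disjunction read as $\bot$. Both required derivations then go through:
\begin{itemize}
\item Re-apply the same rule instance with $\theta_{js}$ added to $\Pi_j^{\ell}$, with $\neg\theta_{ir}$ added to $\Gamma_i^{\ell}$ for $i\neq i_0$, and with $\Delta_{i_0}$ instantiated by the disjunction.
\item For the other side, $\Gamma_i^{r}\seqar\neg\theta_{ir}$ follows from $\theta_{ir},\Gamma_i^{r}\seqar$.
\end{itemize}
In the example this yields the correct interpolant $\neg\beta\impl\alpha$.

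Since you assemble the disjuncts of $A_p$ and $E_p$ ``using the same combination as in the Craig case'', the same correction is needed in your single-conclusion UIP construction.
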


The theorem in its current form does not yield strong general positive results for two reasons. First, in the realm of intermediate logic it is known that only seven consistent intermediate logics have the CIP and UIP, namely $\IPC$, $\LC$, $\KC$, $\BDtwo$, $\Sm$, $\GSc$,~and~$\CPC$~\cite{Max}. Second, for modal calculi, being semi-analytic depends on an ad-hoc examination of rules, as we currently lack a general form for a semi-analytic modal rule.
However, for modal logics there are many proof-theoretic proofs of the CIP, e.g., for modal logic~$\GL$~\cite{Bilkova06} and intuitionistic modal logics in~\cite{Iemhoff19APAL} (see also \Cref{sec:basics}).
 
To state our negative results (the following corollaries), we rely on the  nonexistence of the~CIP~and~UIP in a wide range of logics~\cite{Bil,GhilardiZ95,Ghil2,Max,max91}. Consequently, we get the nonexistence of (fully terminating) calculi consisting only of semi-analytic rules and focused axioms. \looseness=-1

\begin{corollary}[Negative results,~CIP]
The following results hold:
\begin{itemize}
    \item 
Except for logics~$\IPC$, $\LC$, $\KC$, $\BDtwo$, $\Sm$, $\GSc$,~and~$\CPC$, no intermediate logic has a single- or multi-conclusion semi-analytic sequent calculus.
\item 
With the possible exception of at most~37 for~$\Svier$ and 6~for~$\Grz$, none of their consistent extensions have a single- or multi-conclusion semi-analytic sequent calculus.
\end{itemize}
\end{corollary}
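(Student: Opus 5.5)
The argument is by contraposition from \cref{MainCor}, combined with the known classification results on the~CIP for intermediate and modal logics. No new proof theory is needed: everything rests on the implication ``semi-analytic calculus $\Rightarrow$ CIP'' and on the rarity of the CIP.

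For the first item, let $\lgc$ be an intermediate logic, so $\IPC \subseteq \lgc \subseteq \CPC$ and $\lgc$ is over~$\calLp$. Suppose $\lgc$ has a single-conclusion semi-analytic calculus~$\SC$. The first clause of \cref{MainCor} then gives the CIP for~$\lgc$. Maksimova's classification~\cite{Max} says the only consistent intermediate logics with the CIP are $\IPC$, $\LC$, $\KC$, $\BDtwo$, $\Sm$, $\GSc$ and~$\CPC$. Hence $\lgc$ is one of these seven.

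For a multi-conclusion semi-analytic calculus, \cref{MainCor} as stated only applies when $\lgc = \CPC$. So I would invoke the more general statement in~\cite{AkbarTabatabaiJ25}: a multi-conclusion semi-analytic calculus for a logic extending~$\IPC$ also yields the CIP via a Maehara-style split construction. The only change from \cref{thm:Maehara_CPC} is that the context families stay on their own sides of the split, so interpolants combine as conjunctions and disjunctions exactly as in \cref{fig:interpolation}. Once the CIP is obtained, the same classification applies.

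For the second item, consider a consistent extension~$\lgc$ of $\Svier$ (respectively~$\Grz$) over~$\calLBox$; since $\K \subseteq \Svier \subseteq \lgc$, the modal clause of \cref{MainCor} applies. If $\lgc$ had a single- or multi-conclusion semi-analytic calculus, it would have the CIP. By Maksimova's results~\cite{max91}, at most~37 normal extensions of~$\Svier$ have the CIP, and at most~6 extensions of~$\Grz$ do. So every other consistent extension lacks such a calculus.

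The main obstacle is checking that \cref{MainCor} really covers the situations used. There are two concerns. First, the multi-conclusion case for intermediate logics other than~$\CPC$ must go through; here I would re-run Maehara's method on the split rules, treating the interpolants for premises of one family as a disjunction or conjunction according to the side of the principal formula. Second, the allowed modal rule sets $(\Kseqrule)$, $(\Fourseqrule)$, $(\Dseqrule)$, $(\Tseqrule)$ must receive the $\Box$/$\Diamond$ transformations of \cref{fig:modal_Craig_transformations}; this is what lets the modal clause apply to calculi for extensions of~$\Svier$. The rest is bookkeeping with the cited classification theorems.
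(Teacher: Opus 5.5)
You have the same route as the paper: the corollary comes from contraposing \cref{MainCor} against Maksimova's classifications of logics with the CIP. That works for single-conclusion calculi in the first item. It also works for both kinds of calculus in the second item, because the modal clause of \cref{MainCor} has no $\CPC$ restriction: the base there is classical.

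The gap is in your repair of the multi-conclusion case for intermediate logics. Maehara's method establishes the SCIP, which must interpolate every split of every provable sequent. In particular it must interpolate the split $;p\seqar p;$ of $p\seqar p$, which is provable in any calculus for $\lgc$. That requires some $\theta$ with $\SCproves\ \seqar p,\theta$ and $\SCproves \theta,p\seqar$. Soundness then gives $\Lproves p\vee\theta$ and $\Lproves \theta\impl\neg p$, hence $\Lproves p\vee\neg p$, so $\lgc=\CPC$. The interpolant $\neg p$ of $\itRL(\id)$ in \cref{fig:interpolation} is only correct classically. Keeping context families on their own sides does not help: these crossed splits arise whenever a left rule with principal formula on the left side puts active formulas into the left succedent while the right antecedent keeps its context.

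This is exactly why \cref{MainCor} restricts the multi-conclusion clause over $\calLp$ to $\lgc=\CPC$, and why the substructural version needs involutive bases ($\CFLe$, $\MALL$). The ``more general statement'' you attribute to \cite{AkbarTabatabaiJ25} is therefore not available. Without it, the multi-conclusion half of the first item does not follow by contraposition at all.

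The paper's own justification is only that contraposition, and in \cref{Cor_Negative_Applications} the chapter excludes extensions of $\IPC$ only from having single-conclusion semi-analytic calculi. To keep the multi-conclusion claim for intermediate logics you would need a genuinely different argument, for instance one showing that an intermediate logic with a multi-conclusion semi-analytic calculus must be $\CPC$. Otherwise that part of the statement should be restricted to single-conclusion calculi.
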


\begin{corollary}[Negative results,~UIP]
With the possible exception of at most six extensions of~$\Svier$, none of the extensions of either~$\Svier$~or~$\Kvier$ have a single- or multi-conclusion fully terminating semi-analytic sequent calculus. 
\end{corollary}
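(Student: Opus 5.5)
The corollary follows by contraposition from \cref{MainCor}, combined with the known negative result on uniform interpolation for extensions of $\Kvier$ and $\Svier$~\cite{GhilardiZ95,Bilkova06}. The aim is to show that no such extension $\lgc$ (apart from the at most six exceptions) has a single- or multi-conclusion fully terminating semi-analytic calculus. So suppose $\lgc$ is an extension of $\Kvier$ or $\Svier$ and $\SC$ is such a calculus for $\lgc$. Since $\Kvier$ and $\Svier$ extend $\K$, we have $\K \subseteq \lgc$, and $\lgc$ lives in $\calLBox$. The second bullet of \cref{MainCor} then applies: $\SC$ is semi-analytic and fully terminating, so $\lgc$ has the~UIP.

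The key steps, in order, are as follows.
\begin{enumerate}
\item Check that the hypotheses of \cref{MainCor} really hold. An extension of $\Svier$ or $\Kvier$ is a normal modal logic containing~$\K$. For the multi-conclusion case, the theorem's side condition concerns only the language and $\K\subseteq\lgc$ in the modal bullet, so nothing beyond that is needed.
\item Invoke the classification of uniform interpolation above $\Kvier$ and $\Svier$. By Ghilardi and Zawadowski's results~\cite{GhilardiZ95} and the related literature on which logics in these lattices have the~UIP (the same family of results the paper cites in~\cite{Bil,GhilardiZ95,Ghil2,Max,max91}), every extension of $\Kvier$ lacks the~UIP. Every extension of $\Svier$ also lacks it, except possibly for at most six. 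This is the analogue, for the~UIP, of Maksimova's count of the (at most 37) extensions of $\Svier$ with the~CIP.
\item Conclude by contraposition. Having the~UIP contradicts the second step for all but the exceptional logics, so no fully terminating semi-analytic calculus exists for them.
\end{enumerate}

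The \cref{MainCor} direction is a black box, so the only genuine content is the external classification in the second step. Its main obstacle is justifying the number six. The plan is to start from the fact that the~UIP implies the~CIP (Kurahashi's theorem above). This restricts attention to the finitely many (at most 37) extensions of $\Svier$ with the~CIP. Among these, one then rules out all but six by the known failures of uniform interpolation: Ghilardi and Zawadowski show that $\Svier$ itself fails it, and their argument transfers to extensions containing enough transitive non-well-founded structure, while $\Kvier$ and its extensions are handled similarly. For the extensions of $\Kvier$ that lie outside $\Svier$, the plan is to argue that the UIP would force the CIP, and then appeal to the corresponding known failures. I expect this bookkeeping to be the delicate part; the proof-theoretic part is immediate.
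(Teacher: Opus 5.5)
Your overall route is the paper's: contrapose \cref{MainCor} and import known failures of the~UIP from the literature. Steps~1 and~3 are fine.

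The gap is the imported fact in step~2, which is false as you state it. Not every extension of~$\Kvier$ lacks the~UIP. $\GL$ and $\SvierGrz$ are extensions of~$\Kvier$ with the~UIP (the paper itself cites B\'\i{}lkov\'a's proofs), and so is the trivial logic $\Kvier+\Box\bot$. Your two sentences in step~2 also contradict each other, since every extension of~$\Svier$ is an extension of~$\Kvier$. Your planned bookkeeping for the extensions of~$\Kvier$ outside~$\Svier$ cannot work either. Passing from the~UIP to the~CIP gains nothing there, because $\Kvier$ and $\GL$ do have the~CIP (Maehara's method, \cref{subsec:Maehara}). So there is no failure of the~CIP to appeal to. For the $\Svier$ part you need not reconstruct the bound six; the paper simply takes it from the cited literature.

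The contrapositive of \cref{MainCor} can only exclude calculi for logics that actually lack the~UIP. The literature supplies exactly these: $\Kvier$ and $\Svier$ themselves, and all but at most six extensions of~$\Svier$. With step~2 corrected to this, your argument proves the corollary in that reading, which is all the paper's one-line justification supports. For an arbitrary extension of~$\Kvier$ nothing follows (e.g.,~for~$\GL$), and the literal claim in fact fails.

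Take $\Kvier+\Box\bot$. It is consistent and is not an extension of~$\Svier$. It has a fully terminating multi-conclusion semi-analytic calculus: the contraction-free propositional calculus for~$\CPC$ used in \cref{FLeUniformInterpolation} (e.g.,~the propositional part of~$\GdrieK$), extended by the focused axiom $\Gamma\seqar\Delta,\Box\phi$. This calculus is sound and complete because $\Box\phi$ is true at every world without successors, so after invertible decomposition every valid leaf is an axiom. You should therefore restrict the claim about~$\Kvier$ to $\Kvier$ itself rather than try to prove it for all its extensions.
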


\subsubsection{(Modal) Substructural and Linear Logics} 
The results of the previous section are generalized to (modal) substructural and linear logics. Consider the \emph{bounded} language  $\calL^b\ce\{\wedge, \vee, \to, *, 0, 1, \top, \bot\}$ and the \emph{unbounded} language $\calL^{u}\ce\calL^b \setminus \{\top, \bot\}$. Denote $\calL_!^b\ce\calL^b \cup \{!\}$ and $\calL_!^u\ce\calL^u \cup \{!\}$. 
We recall some well-known calculi for substructural and linear rules in \cref{fig:FLe,fig:rules_bang}. The left column of \cref{fig:sequent_calculi_substr_etc} presents well-known sequent calculi using these rules, the weakening and contraction rules from \cref{fig:G3pc}, and $(\lr\bot)^*$ and $(\rr\top)^*$ from \cref{fig:G3K}. The single-conclusion versions of the calculi in the left column are~$\FLseqe$, $\FLseqew$, $\FLseqec$, $\IMALLseq$, $\AIMALLseq$, and $\RIMALLseq$ respectively.

\begin{figure}[t]
\centering{
\small{
\fbox{\parbox{0.97\textwidth}{
\[\vlinf{\id_\phi}{}{\phi \seqar \phi}{} \qquad \qquad 
\vlinf{\lr0}{}{0\seqar}{}\qquad \qquad\vlinf{\rr1}{}{\seqar1}{}\qquad \qquad \vlinf{\rr0}{}{\Gamma\seqar\Delta,0}{\Gamma\seqar\Delta}
\qquad \qquad 
\vlinf{\lr1}{}{1,\Gamma\seqar\Delta}{\Gamma\seqar\Delta}
\]
\vspace{0.3em}
\[
\vlinf{\lr* }{}{\phi*\psi,\Gamma\seqar\Delta}{\phi,\psi,\Gamma\seqar\Delta} \ \ \ 
\vliinf{(\lr\impl)^*}{}{\phi \impl \psi,\Gamma,\Gamma'\seqar\Delta,\Delta'}{\Gamma\seqar\Delta,\phi}{\psi, \Gamma' \seqar \Delta'} \ \  \
\vliinf{\lr\vee}{}{\phi\vee\psi,\Gamma\seqar\Delta}{\phi,\Gamma\seqar\Delta}{\psi,\Gamma\seqar\Delta} \ \ \ 
\vlinf{\lr\wedge^i }{\mathsmaller{(i=0,1)}}{\phi_0\wedge\phi_1,\Gamma\seqar\Delta}{\phi_i,\Gamma\seqar\Delta}
\]
\vspace{0.5em}
\[
\vliinf{\rr*}{}{\Gamma,\Gamma'\seqar\Delta,\Delta',\phi*\psi}{\Gamma\seqar\Delta,\phi}{\Gamma'\seqar\Delta',\psi} \ \ \ 
\vlinf{\rr\impl}{}{\Gamma\seqar\Delta,\phi \impl \psi}{\phi,\Gamma\seqar\Delta, \psi} \ \ \ 
\vlinf{\rr\vee^i}{\mathsmaller{(i=0,1)}}{\Gamma\seqar\Delta,\phi_0 \vee \phi_1}{\Gamma\seqar\Delta,\phi_i} \ \ \ 
\vliinf{\rr\wedge}{}{\Gamma\seqar\Delta,\phi\wedge\psi}{\Gamma\seqar\Delta,\phi}{\Gamma\seqar\Delta,\psi}
\]
}
}
}}
\caption{Sequent calculus~$\CFLseqe$ for substructural logic~$\CFLe$. Rules~$(\lr\vee)$, $(\lr\wedge^i)$, $(\rr\impl)$, $(\rr\vee^i)$, $(\rr\wedge)$,~and~$(\cut)$ are the same as in \cref{fig:G3pc}.}
\label{fig:FLe}
\end{figure}

\begin{figure}[t]
\centering{
\small{
\fbox{\parbox{0.97\textwidth}{
\vspace{-0.3em}
\[\vlinf{\rr!}{}{!\Gamma\seqar !\phi}{!\Gamma\seqar \phi} \quad \quad \vlinf{\lr!}{}{\Gamma,!\phi\seqar\Delta}{\Gamma,\phi\seqar\Delta} \quad \quad \vlinf{\wk!}{}{\Gamma,!\phi\seqar\Delta}{\Gamma\seqar\Delta}\quad \quad\vlinf{\cntr!}{}{\Gamma,!\phi\seqar\Delta}{\Gamma,!\phi,!\phi\seqar\Delta}\quad \quad \vlinf{\Kseqrule!}{}{!\Gamma\seqar!\phi}{\Gamma\seqar\phi}
\quad \quad 
\vlinf{\Dseqrule!}{}{!\Gamma\seqar}{\Gamma\seqar}
\]
\vspace{-0.3em}
}
}
}}
\caption{Sequent rules for `$!$' for linear logic.}
\label{fig:rules_bang}
\end{figure}

\begin{figure}[t]
\fbox{\parbox{0.97\textwidth}{
  \centering
   \hspace{-1.15cm}
    \begin{minipage}[t]{0.36\textwidth}
    \small{
    \vspace{0.5em}
    Calculi over $\calL^{u}$:
    \begin{align*}
        \CFLseqe &\ce  \text{\Cref{fig:FLe}}\\
        \CFLseqew &\ce  \CFLseqe + \{(\lr\wk),(\rr\wk)\}\\
        \CFLseqec &\ce  \CFLseqe + \{(\lr\cntr),(\rr\cntr)\}
    \end{align*}
    \vspace{-1.5em}
    }
  \end{minipage}
    \hspace{1cm}
      \begin{minipage}[t]{0.45\textwidth}
    \small{
    \vspace{0.40em}
    Calculi over $\calL^{b}_{!}$:
    \begin{align*}
        \CLLseq &\ce  \MALLseq + \{(\rr!),(\lr!),(\wk!),(\cntr!) \}\\
        \ILLseq &\ce  \IMALLseq + \{(\rr!),(\lr!),(\wk!),(\cntr!) \}\\
        \ELLseq &\ce  \MALLseq + \{(\wk!),(\cntr!),(\Kseqrule!),(\Dseqrule!) \}\\
        \ALLseq &\ce  \CLLseq + \{(\lr\wk),(\rr\wk) \}\\
        \RLLseq &\ce  \CLLseq + \{(\lr\cntr),(\rr\cntr) \}\\
    \end{align*}
    \vspace{-1.5em}
    }
  \end{minipage}
  \\
  \hspace{-8cm}
  \begin{minipage}[t]{0.41\textwidth}
    \small{
    \vspace{-4em}
    Calculi over $\calL^{b}$:
    \begin{align*}
        \MALLseq &\ce  \CFLseqe + \{(\lr\bot)^*,(\rr\top)^*\}\\
        \AMALLseq &\ce  \MALLseq + \{(\lr\wk),(\rr\wk)\}\\
        \RMALLseq &\ce  \MALLseq + \{(\lr\cntr),(\rr\cntr)\}
    \end{align*}
    \vspace{-1.5em}
    }
  \end{minipage}
  }}
    \caption{Sequent calculi for logics over different languages. In  $\ILLseq$ we use the single-conclusion version of the rules. Rules~$(\lr\wk)$, $(\rr\wk)$, $(\lr\cntr)$,~and~$(\rr\cntr)$ are displayed in \cref{fig:G3pc}, and rules $(\lr\bot)^*$ and $(\rr\top)^*$ in \cref{fig:G3K}.} \label{fig:sequent_calculi_substr_etc}
\end{figure}

For the language $\calL^{b}_{!}$ that contains `!', we present sequent calculi for several logics on the right column of \cref{fig:sequent_calculi_substr_etc}, using the rules in \cref{fig:rules_bang}.  It is worth noting that by denoting~$!$ in the language by $\Box$, the rules in \cref{fig:rules_bang} resemble standard modal rules. That is why we, e.g., write $\FLseqeK$ to mean $\FLseqe+\{\Kseqrule!\}$, write $\FLseqewD$ to mean $\FLseqew + \{ \Kseqrule!,\Dseqrule!\}$, or write $\FLseqeT$ to mean $\FLseqe + \{\lr!\}$, and likewise, we write $\FLe\sf{K}$, $\FLew\sf{D}$, and $\FLe\sf{T}$ for their corresponding logics. The following calculi are proved to enjoy cut admissibility in \cite{Dosenjadid,ono,Gentzen35a,LLL,affine,kanovich,kiriyama,Ono90,Ono98,Onokomori,tro92}, as well as \cite[Proposition 4]{dosen}:
\begin{theorem}\label{thm: cut mardom}
The following sequent calculi enjoy cut admissibility:
 $\FLseqe$, $\FLseqew$, $\FLseqec$, 
 $\CFLseqe$,   $\CFLseqew$,
 $\CFLseqec$,
 $\LJ$, 
 $\LK$,
 $\IMALLseq$, $\AIMALLseq$, $\RIMALLseq$, 
 $\MALLseq$,   $\AMALLseq$, $\RMALLseq$,
  $\mathbf{X}+\{(\rr!), (\lr!)\}$ where $\mathbf{X} \in \{\CFLseqe, \CFLseqew,  \CFLseqec, \LK\}$, $\CFLseqe+\{(\rr!), (\lr!), (\cntr!)\}$, $\CFLseqew+\{(\rr!), (\lr!), (\cntr!)\}$,
$\ILLseq$,  $\CLLseq$,
$\ALLseq$, $\RLLseq$, 
$\ELLseq$.
 \end{theorem}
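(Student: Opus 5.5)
The plan is the standard Gentzen-style syntactic cut elimination, run uniformly over the whole list. Since most of these calculi contain contraction (explicitly, or as $(\cntr!)$ for banged formulas), I would not eliminate $(\cut)$ directly. Instead I would eliminate a generalized rule: Gentzen's \emph{mix}, or equivalently a multicut that removes $n\geq 1$ copies of the cut formula from the right premise and $m\geq 1$ copies from the left premise. For the calculi with $(\cntr!)$ but no general contraction, the multicut will only be allowed to remove several copies when the cut formula has the form $!\phi$. It suffices to show that a proof whose last rule is a single multicut, with both premises cut-free, can be transformed into a cut-free proof of the same conclusion. Eliminating topmost cuts one at a time then removes all cuts.

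The main argument is a double induction. The outer induction is on the complexity of the cut formula. The inner induction is on the sum of the heights of the two premise derivations, or on Gentzen's rank for the mix variant. I would split into cases as follows:
\begin{enumerate}
\item One premise is an axiom, such as $(\id_\phi)$, $(\lr0)$, $(\rr1)$, $(\lr\bot)^*$ or $(\rr\top)^*$.
\item The cut formula is not principal in one premise. Then permute the cut upward past that rule. This permutation is routine because every rule except the $!$-promotion-type rules $(\rr!)$, $(\Kseqrule!)$, $(\Dseqrule!)$ has an arbitrary context.
\item The cut formula is principal in both premises. Then apply the usual key reductions for $*$, $\impl$, $\wedge$, $\vee$, $0$ and $1$, replacing the cut by cuts on proper subformulas.
\end{enumerate}
For the multiplicative rules $(\lr\impl)^*$ and $(\rr*)$, the contexts split between the premises, so the reductions must respect the multiset split. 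For the single-conclusion versions, the reductions must respect the succedent restriction. Weakening and contraction, where present, are absorbed into the multicut counts.

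The main obstacle will be the exponential cases, where the context restrictions of $(\rr!)$, $(\Kseqrule!)$ and $(\Dseqrule!)$ block the naive permutation.
\begin{itemize}
\item \emph{Right premise ends in $(\rr!)$ or $(\Kseqrule!)$.} Here the cut formula sits in a fully banged context. I would only permute a cut upward when the left premise also ends in a promotion-type rule, so that the cut formula is $!\phi$ principal on the left. The resulting context is again fully banged, and the promotion can be reapplied.
\item \emph{$\ELLseq$.} The pair $(\Kseqrule!)$/$(\Kseqrule!)$ reduces to a cut on $\phi$ followed by $(\Kseqrule!)$. The pair $(\Kseqrule!)$/$(\Dseqrule!)$ reduces in the same way followed by $(\Dseqrule!)$.
\item \emph{$(\wk!)$ or $(\cntr!)$ against a promotion.} The reduction either discards the left derivation and reintroduces the banged context by $(\wk!)$, or duplicates it. This duplication is exactly why the multicut on $!$-formulas is needed.
\item \emph{$(\lr!)$ against $(\rr!)$.} This reduces to a cut on $\phi$.
\end{itemize}
In the calculi containing $(\rr!),(\lr!)$ without $(\wk!)$ (the variants $\mathbf{X}+\{(\rr!),(\lr!)\}$ and those with $(\cntr!)$ only), I would check that every reduction uses only the structural rules actually present. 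In particular, the weakening case must not arise unless weakening is available.

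Finally, I would verify that each reduction strictly decreases the induction measure. Each case either lowers the cut complexity or keeps it and lowers the height or rank. Since all the calculi are built from the same rule pool, the cases can be checked once per rule and then combined per calculus. This mirrors the cited proofs, especially \cite{Ono98,Onokomori,LLL} and \cite[Proposition 4]{dosen} for the $!$-variants.
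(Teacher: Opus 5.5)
Your overall strategy, a Gentzen-style double induction on a generalized cut with the exponential cases singled out, is the standard one. It is what the sources cited for this theorem carry out; the paper gives no argument of its own beyond those citations. But two steps you treat as routine are exactly where the argument fails as stated, though both can be repaired.

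\emph{First, the generalized cut is misformulated.} Since $(\cntr!)$ is a left rule, in the calculi without general contraction ($\CLLseq$, $\ALLseq$, $\ELLseq$, $\CFLseqe+\{(\rr!),(\lr!),(\cntr!)\}$, $\CFLseqew+\{(\rr!),(\lr!),(\cntr!)\}$) surplus copies of $!\psi$ may be removed only from the antecedent of the right premise. Allowing $m\geq 2$ copies in the left succedent makes the rule non-admissible. In $\CLLseq$, for instance, $(!p\impl 0)\impl !p\seqar !p,!p$ is derivable by $(\lr\impl)^*$ from $\seqar !p,!p\impl 0$ and $!p\seqar !p$. Cutting both copies against $!p\seqar !p$ would give $(!p\impl 0)\impl !p\seqar !p$, which has no cut-free derivation. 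Only $(\lr\impl)^*$ applies, and each way of splitting the succedent needs $!p\seqar$ or $\seqar !p\impl 0$; both are underivable, as they fail classically when $!$ is read as the identity. Likewise, mix is not ``equivalent'' to the counted multicut in the weakening-free calculi $\FLseqec$, $\CFLseqec$, $\RIMALLseq$, $\RMALLseq$, $\RLLseq$. The usual derivation of cut from mix restores the surplus deleted copies by weakening, so there only the counted version will do.

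\emph{Second, permuting a non-principal cut upward is not routine for the counted $!$-multicut.} Suppose the right premise ends in $(\rr*)$ or $(\lr\impl)^*$ and the $n$ copies of $!\psi$ are distributed over both of its premises. Then you obtain two multicuts, each carrying the whole left context. Without general contraction, the duplicate can be merged via $(\cntr!)$ only when the left premise has the form $!\Gamma\seqar !\psi$. With left premise $q,q\impl !\psi\seqar !\psi$ in $\ILLseq$, say, you are left with an extra $q,q\impl !\psi$ that cannot be contracted. So the left-first discipline must be imposed globally, not only when the right premise ends in a promotion rule. Push the cut into the left premise until $!\psi$ is principal there (a promotion or an identity axiom), and only then into the right premise.

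The same merging is needed in the key cases with $n\geq 2$:
\begin{itemize}
\item $(\lr!)$ against $(\rr!)$ becomes a lower-height multicut on the remaining $n-1$ copies, then a cut on $\psi$, then $(\cntr!)$ on $!\Gamma$.
\item In $\ELLseq$, $(\Kseqrule!)$ against $(\Kseqrule!)$ or $(\Dseqrule!)$ needs $n$ separate cuts on $\psi$ (an unbanged $\psi$ cannot be contracted), then the promotion, then $(\cntr!)$. A single cut followed by $(\Kseqrule!)$ does not suffice.
\end{itemize}
With these repairs your outline goes through.
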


Then, the analog of \cref{MainCor} for substructural and linear logics will become the following (the results can be generalized to multimodal substructural logics \cite{AkbarTabatabaiJ25}, which we do not cover due to space limitations):

\begin{theorem}[\cite{AkbarTabatabaiJ25}]\label{MainCorSub}
Let $\lgc$~be a logic over $\calL \in \{\calL^b, \calL^u, \calL_!^b, \calL_!^u\}$ and $\SC$~be a single-conclusion (resp.~multi-conclusion) semi-analytic calculus for~$\lgc$. Then, 
\begin{itemize}
\item 
if $\calL =\calL^{b}$ or $\calL =\calL_!^{b}$ and $\IMALL \subseteq \lgc$ (resp.~$\MALL \subseteq \lgc$),  then $\lgc$ has the~CIP, and
\item 
if $\calL =\calL^{u}$ or $\calL =\calL_!^{u}$ and $\FLe \subseteq \lgc$ (resp.~$\CFLe \subseteq \lgc$),  then $\lgc$ has the~CIP.
\end{itemize}
\end{theorem}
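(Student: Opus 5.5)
We prove the stronger sequent-level statement and then derive the theorem. Namely, we show that any semi-analytic calculus $\SC$ as in the statement (single-conclusion over an extension of $\IMALL$/$\FLe$, resp.\ multi-conclusion over an extension of $\MALL$/$\CFLe$) has a Maehara-style sequent interpolation property. Concretely, whenever $\SCproves \Gamma,\Gamma'\seqar\Delta,\Delta'$, there is $\theta$ with $\Voc{\theta}\subseteq\Voc{\Gamma,\Delta}\cap\Voc{\Gamma',\Delta'}$ such that $\Gamma\seqar\Delta,\theta$ and $\theta,\Gamma'\seqar\Delta'$ are provable in $\SC$. In the single-conclusion case $\Delta$ is empty and the succedent is taken to be on the right. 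The theorem then follows exactly as in \cref{thm:Craig-sequent->logic}: take the split $\phi;\ \seqar\ ;\psi$.

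Since weakening and contraction are not available in general, the splits must respect the multiplicative structure. In particular, $\theta$ is combined with $*$ and $\impl$ (or, in the classical case, with the par-like dual) rather than with $\wedge$ and $\vee$, whenever contexts are split between premises. The proof is by induction on a cut-free $\SC$-proof.

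\textbf{Axioms.} For a focused axiom, we check each shape and each distribution of its formulas across the split. For $\phi\seqar\phi$ we use $\top$/$1$, $\phi$, $\neg\phi$ (i.e.\ $\phi\impl0$), or $\bot$/$0$, as in the split axioms of \cref{fig:interpolation}. For $\Gamma,\bar\phi\seqar\Delta$, the condition that all formulas in $\bar\phi$ share the same variables means that we can put $\bar\phi$ wholly on one side of the split, taking $\top$ or $\bot$. When $\bar\phi$ is genuinely split across the two sides, we instead take a combination, e.g.\ $\bigast$ of the left part or its linear negation; the variable condition is then met precisely because $\Voc{\mu}=\Voc{\nu}$ for all $\mu,\nu\in\bar\phi$. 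In the bounded languages the constants $\top,\bot$ absorb contexts; in the unbounded ones we rely on $0,1$ and the absence of context in those axioms.

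\textbf{Rules.} For a left semi-analytic rule with principal formula $\phi$ on, say, the left side of the split, every multiset variable $\Pi_j$ and $\Gamma_i$ (and $\Delta_i$) is split into a left and a right part, induced from the split of the conclusion. In each premise, the active formulas $\overline{\phi}_{ir},\overline{\psi}_{js},\overline{\chi}_{js}$ are placed on the same side as $\phi$. This is legitimate because their variables lie in $\Voc{\phi}$. The exception is the single-conclusion case, where the $\Pi$-premises whose succedent $\overline{\chi}_{js}$ has to be on the right; there we swap sides as in \cref{remark:IPC}. The induction hypothesis yields interpolants $\theta_{ir}$ and $\theta_{js}$. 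Premises within one family share a context, so they are combined additively ($\vee$ for left-principal, $\wedge$ for right-principal). Different families use disjoint contexts, so they are combined multiplicatively ($*$, or $\impl$ for the swapped $\Pi$-families, mirroring $L(\lr\impl)$ for $\LJ$). This yields $\theta=\bigast_i\bigvee_r\theta_{ir}\,*\,\bigast_j(\bigwedge_s\theta_{js}\impl\cdots)$, with the exact shape fixed by the side of $\phi$. Right semi-analytic rules are handled analogously and more simply. The modal rules $(\Kseqrule)$, $(\Dseqrule)$, $(\Fourseqrule)$, $(\Tseqrule)$ (for $!$) are handled by the $\Box/\Diamond$ transformations of \cref{fig:modal_Craig_transformations}. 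Since $\Diamond$ is not available for $!$, the left-principal case should instead use $!$-contexts and the dual formulation; the limited modal rule sets allowed by the definition keep this manageable.

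\textbf{Main obstacle.} The hardest step is verifying, for a general left semi-analytic rule, that both $\Gamma_L\seqar\Delta_L,\theta$ and $\theta,\Gamma_R\seqar\Delta_R$ are derivable in $\SC$ itself without weakening or contraction. This requires reapplying the same rule instance with the appropriate parts of the contexts, and then using only the $\IMALL$ (resp.\ $\FLe$) rules to introduce $*$, $\wedge$, $\vee$, $\impl$ on $\theta$. We may use these rules because $\lgc$ extends $\IMALL$/$\FLe$ and $\SC$ is complete for $\lgc$; admissible cut with respect to the logic ensures that derivability in the logic transfers back to $\SC$. The delicate point is that additive combination requires sharing contexts. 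This is exactly what the family structure of semi-analytic rules provides: one copy of $\Pi_j$ for all of its premises, matching the $\wedge/\vee$-rules, which duplicate contexts additively. I would first work this out on \cref{ExampleSemi-analyticRules} to fix the general formula for $\theta$, and then generalize.
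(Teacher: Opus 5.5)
Your route is the one the chapter relies on. It only cites \cite{AkbarTabatabaiJ25} and notes that the proof is Maehara's method run over semi-analytic rules and focused axioms. Your induction hypothesis, same-side placement of active formulas, importing the $\FLe$/$\IMALL$ rules into $\SC$ via completeness, and additive combination inside a family are all fine.

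\textbf{The gap: the interpolant for a left rule with $\phi$ on the left of the split (single-conclusion case).}

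You must obtain $\Pi^L,\Gamma^L,\phi\seqar\theta$ from \emph{one} application of the rule, followed by $\FLe$-rules acting on $\theta$. This rules out your displayed shape for two reasons.
\begin{itemize}
\item A $*$ containing premise interpolants positively would require splitting that antecedent, including $\phi$, among the factors.
\item Since the rule is single-conclusion, at most one $\Delta_{i_0}$ can be non-empty. The other $\Gamma_i$-families therefore cannot carry their interpolants in the succedent at all.
\end{itemize}
So neither the positive $*$-combination of the $\bigvee_r\theta_{ir}$ over different $i$, nor a $*$ of implications, is derivable.

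What you missed is that a $\Gamma_i$-family with $i\neq i_0$ is in exactly the position of a $\Pi$-family with empty fixed succedent. It must be swapped as well, or equivalently contribute $\neg\bigvee_r\theta_{ir}$ to the antecedent. With $\alpha_j:=\bigwedge_s\theta_{js}$ and $\beta_i:=\bigwedge_r\theta_{ir}$ for $i\neq i_0$, take
\[
\theta:=\bigl(\alpha_1*\dots*\alpha_m*\beta_{i_1}*\dots*\beta_{i_k}\bigr)\impl\bigvee_r\theta_{i_0r}.
\]
Here $i_1,\dots,i_k$ are the indices other than $i_0$. The consequent is $0$ if all $\Delta_i$ are empty, and an empty product is $1$.
\begin{itemize}
\item Left side: apply $(\rr\impl)$ and $(\lr{*})$, then the rule itself with contexts $\alpha_j,\Pi^L_j$ and $\beta_i,\Gamma^L_i$. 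Its premises follow from the IH by $(\lr\wedge)$, $(\rr\vee)$, and, when all $\Delta_i$ are empty, $(\rr0)$.
\item Right side: apply $(\lr\impl)^*$. Its left premise $\Pi^R,\Gamma^R_{i\neq i_0}\seqar\alpha_1*\cdots$ follows by $(\rr{*})$ and $(\rr\wedge)$. Its right premise $\bigvee_r\theta_{i_0r},\Gamma^R_{i_0}\seqar\Delta_{i_0}$ follows by $(\lr\vee)$.
\end{itemize}
In the single-conclusion case $*$ thus occurs only negatively. The positive multiplicative combination is the par of the multi-conclusion case.

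\textbf{Smaller points.}
\begin{itemize}
\item Split contextual axioms. For a contextual focused axiom $\Gamma,\bar\phi\seqar\Delta$ with $\bar\phi$ split into non-empty $\bar\phi_L,\bar\phi_R$, the $*$-product of $\bar\phi_L$ alone fails, because $\Gamma^L,\bar\phi_L\seqar\Delta^L,\theta$ would need weakening. Use that product $*\top$ instead. The $\top$ absorbs $\Gamma^L,\Delta^L$ via $(\rr\top)^*$, and on the other side it disappears into the free context of the axiom.
\item Contextual axioms over $\calL^u$. Nothing in the definition excludes them, contrary to what you assume. If one occurs, substitute $1$ for the variables of $\bar\phi$, using fresh atoms for the contexts and closure of $\lgc$ under substitution. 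This yields a variable-free formula that behaves as $\bot$, and its negation behaves as $\top$, so the same recipe applies.
\item The modality $!$. A $\Diamond$-like transformation is available exactly when it is needed. The split with $!\phi$ on the left only arises in the multi-conclusion case, where negation is involutive and $\neg{!}\neg\theta$ does the job.
\end{itemize}
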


\begin{corollary}[Positive results]
\label{ApplicationInt1} 
Concrete examples of the logics that enjoy the~CIP are the logics of the sequent calculi in \cref{thm: cut mardom}.
\end{corollary}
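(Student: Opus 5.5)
The plan is to obtain the corollary directly from \cref{MainCorSub}, combined with cut admissibility (\cref{thm: cut mardom}). For each calculus $\SC$ listed in \cref{thm: cut mardom}, three things must be checked: (i) $\SC$ is a calculus for its logic $\lgc$ in the sense of \cref{CalculusForLogic}; (ii) $\SC$ without cut is semi-analytic in the single- or multi-conclusion sense, as appropriate; and (iii) $\lgc$ extends the required base logic over the appropriate language. For $\LJ$ and $\LK$ one can instead appeal to \cref{MainCor}.

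For (i), the logic of a calculus is defined as the set of formulas whose sequent interpretations are provable, so soundness and completeness are immediate. The only subtlety is that the logic must be closed under modus ponens and substitution. Modus ponens is exactly where \cref{thm: cut mardom} is used: from proofs of $\seqar\phi$ and $\seqar\phi\impl\psi$ we obtain $\phi\seqar\psi$ by inversion or cut, and then $\seqar\psi$ by cut. Substitution closure follows because all rules are schemas; identity axioms $\id_\phi$ are already allowed for arbitrary $\phi$, or can be obtained by an induction on $\phi$.

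For (ii), I would go through the rules of \cref{fig:G3pc}, \cref{fig:FLe} and \cref{fig:rules_bang} one at a time.
\begin{itemize}
\item The propositional and multiplicative rules $(\lr*)$, $(\rr*)$, $(\lr\impl)^*$, $(\lr\vee)$, $(\rr\wedge)$ and so on have intact, free contexts, and their active formulas are subformulas of the principal formula. The variable conditions therefore hold.
\item In the multiplicative rules the contexts split into distinct families $\Gamma_i,\Delta_i$, which is exactly the shape in \cref{semi-analyticRules}.
\item Weakening and contraction are semi-analytic, with the weakened or contracted formula as principal formula.
\item The $!$-rules $(\lr!)$, $(\wk!)$, $(\cntr!)$ are semi-analytic. $(\rr!)$ is of $\Sfourseqrule$-type, and $(\Kseqrule!)$, $(\Dseqrule!)$ are among the explicitly permitted modal rule sets.
\item The axioms $\id_\phi$, $(\lr0)$, $(\rr1)$, $(\lr\bot)^*$, $(\rr\top)^*$ are focused.
\end{itemize}

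For (iii), the base-logic requirement is immediate: each listed calculus contains the rules of $\FLseqe$, $\CFLseqe$, $\IMALLseq$ or $\MALLseq$, according to its language and conclusion type, so its logic extends the corresponding base logic. \Cref{MainCorSub} then yields the CIP.

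The main obstacle is the modal rule $(\rr!)$. It has the form $!\Gamma\seqar!\phi$ over $!\Gamma\seqar\phi$, where the context is constrained to boxed formulas and is therefore not free. It also does not literally appear among the permitted sets $\{(\Kseqrule)\},\{(\Fourseqrule)\},\ldots$. I would first check that the proof of \cref{MainCorSub} in~\cite{AkbarTabatabaiJ25} covers this case; the paper earlier calls $(\Sfourseqrule)$ semi-analytic, which suggests it does. Failing that, I would handle $(\rr!)$ directly by a Maehara step as in \cref{fig:modal_Craig_transformations}. With $\Gamma$ split as $\Gamma_1;\Gamma_2$, the interpolant is $!\theta$ if $!\phi$ lies on the right of the split. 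If $!\phi$ lies on the left, I would try the dual transformation: swap the roles of the two sides and negate where the logic allows. In the intuitionistic (single-conclusion) case, the succedent is always on the right, so the $!\theta$ construction suffices.
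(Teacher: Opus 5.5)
Your proposal is correct and takes essentially the paper's route: the paper's proof just observes that every calculus in \cref{thm: cut mardom} consists of focused axioms and semi-analytic rules, and then applies the main transfer theorem (\cref{MainCorSub}, with \cref{MainCor} covering the $\LJ$/$\LK$ cases). The paper tacitly counts $(\rr!)$ as semi-analytic, just as it calls $(\Sfourseqrule)$ semi-analytic, so your rule-by-rule check and your fallback Maehara step for $(\rr!)$ add care rather than give a different argument.
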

\begin{proof}
The presented sequent calculi for these logics consist of focused axioms and semi-analytic rules. Therefore, by \cref{MainCor}, we get the~CIP for their logics.
\end{proof}

We use a series of facts regarding the nonexistence of the~CIP in a wide range of logics to obtain our negative results. Before that, we recall some of these logics. \Cref{fig:equational-conditions} presents equational conditions to define varieties of pointed commutative residuated lattices and \cref{fig:equational-logics} presents equational logics in the usual sense. Let \textbf{A} be a single pointed commutative residuated
lattice and \textbf{Fm} be the formula algebra for the language of pointed commutative residuated lattices. Given any variety $\mathbb{V}$ of pointed commutative residuated lattices, a \emph{logic}~$\lgc^{\mathbb{V}}$ may be defined by fixing, for all $\Gamma \cup\{\varphi\} \subseteq \mathbf{Fm}$,
\[
\begin{aligned}
\Gamma \vdash_{\lgc^{\mathbb{V}}} \varphi \Leftrightarrow & \text { for any } \mathbf{A} \in \mathbb{V} \text { and homomorphism } v\colon \mathbf{F m} \rightarrow \mathbf{A}: \\
& \text { if for all } \psi \in \Gamma, \quad 1 \leqslant v(\psi), \quad \text { then } \quad 1 \leqslant v(\phi)
\end{aligned}
\]
By $\mathcal{V}(\mathbb{K})$, we mean the variety generated by a class of pointed commutative residuated lattices~$\mathbb{K}$.
Let us define some fundamental fuzzy logics.  For any $n > 1$, let ${L}_n = \{0, \frac{1}{n-1}, \dots , \frac{n - 2}{n - 1}, 1 \}$ and
\[
\mathbb{L}_n = \langle L_n , min, max , *_{\text{\L}}, \to_{\text{\L}}, 1, 0 \rangle
 \quad , \quad
\mathbb{G}_n = \langle L_n , min, max , min, \to_{G}, 1, 0 \rangle   
\]
where $x *_{\text{\L}} y = max \{0, x +y-1\}$, $x \to_{\text{\L}} y = min \{1, 1-x+y\}$, and $x \to_G y$ is $y$ if $x > y$, otherwise 1. Define $\Lukas_n$ and $\Godel_n$ for $n >1$ as the logics of $\mathcal{V}(\mathbb{L}_n)$ and $\mathcal{V}(\mathbb{G}_n)$, respectively (see~\cite{March}). Separately, define the relevance logic $\Rlogic$ as the logic of the variety of distributive pointed commutative residuated lattices satisfying $x \leq x*x$ and $\neg \neg x=x$ for any $x$. 
\begin{figure}[!t]
  \centering
\fbox{\parbox{0.98\textwidth}{
{\centering
$ \ \ \ 
\begin{array}{ll}
    \prl & 1 \leq (x \impl y) \vee (y \impl x)\\
    \disaxiom & x \wedge (y \vee z) = (x \wedge y) \vee (x \wedge z)\\
    \inv & \neg \neg x = x \\
    \intaxiom & x \leq 1 \\
    \bd & 0 \leq x\\
    \idaxiom & x = x * x \\
\end{array}
\quad \ \ \ 
\begin{array}{ll}
    \fp & 0=1 \\
    \divaxiom & x * (x \impl y) = y * (y \impl x)\\
    \can & x \impl (x * y)= y \\
    \rcan & 1= \neg x \vee ((x \impl (x * y)) \impl y)\\
    \nc & x \wedge \neg x \leq 0 \\
     &  \\
\end{array}
$
}}}
\caption{Some equational conditions.} \label{fig:equational-conditions}
\end{figure}

\begin{figure}[!t]
    \small{
\fbox{\parbox{0.97\textwidth}{
\setlength\abovedisplayskip{0pt}
\setlength\belowdisplayskip{0pt}
\ \ 
\begin{align*}
    \ULm &\ce  \FLe  + \prl + \disaxiom   && & \Lukas &\ce  \MTL  + \divaxiom + \inv\\
    \IULm &\ce  \ULm  + \inv   && & \Plogic &\ce  \MTL  + \divaxiom + \rcan\\
    \MTL &\ce  \ULm  + \intaxiom + \bd     && & \CHL &\ce  \ULm  + \intaxiom + \fp + \divaxiom + \can\\
    \SMTL &\ce  \MTL  + \nc    && & \UMLm &\ce  \ULm  + \idaxiom\\
    \IMTL &\ce  \MTL  + \inv     && & \RMe &\ce  \UMLm  + \inv\\
    \BL &\ce  \MTL  + \divaxiom && & \IUMLm &\ce  \UMLm  + \inv + \fp\\
    \Godel &\ce  \MTL  + \idaxiom && & \Alogic &\ce  \IULm  + \fp + \can
\end{align*}
}
}}
\caption{Some substructural logics.  In each logic, axioms $\prl$ and $\disaxiom$ are present.}
\label{fig:equational-logics}
\end{figure}

\begin{corollary}[Negative results,~CIP] \label{Cor_Negative_Applications}
The following two families of logics lack the~CIP and hence cannot have a single-conclusion semi-analytic sequent calculus. The first family:
\begin{itemize}
\item {\sf{(\cite{ba,luk,Urq}\cite[Corollary 3.2]{March})\textbf{.}}}
 $\ULm$, $\MTL$, $\SMTL$, $\BL$, $\Plogic$, $\CHL$, $\Godel_n$ (for $n \geq 4$);
\item {\sf{(\cite{Max})\textbf{.}}}
All consistent extensions of~$\IPC$, except for possibly~$\IPC$, $\Godel$, $\KC$, $\BDtwo$, $\Sm$, $\GSc$,~and~$\CPC$; 
\item
{\sf{(\cite{ba,mo}\cite[Theorem 3.3]{March})\textbf{.}}}
All consistent $\BL$-extensions, except for possibly $\Godel$, $\mathsf{G3}$ and~$\CPC$;
\end{itemize}
And the second family:
\begin{itemize}
\item
{\sf{(\cite{ba,luk,Urq}\cite[Corollary 3.2]{March})\textbf{.}}}
$\Rlogic$, $\Alogic$, $\IULm$, $\IMTL$, $\Lukas$, $\Lukas_n$ (for $n \geq 3$);
\item
{\sf{(\cite[Corollary 3.5]{March}\cite{komori81})\textbf{.}}}
All consistent $\IMTL$-extensions, and consequently, all consistent extensions of
 $\Lukas$, except for $\CPC$;
\item
{\sf{(\cite[Theorem 4.9]{March})\textbf{.}}} All consistent extensions of $\RMe$, except for possibly $\RMe$, $\IUMLm$, $\CPC$, $\RMe_3$, $\RMe_4$, $\CPC \cap \IUMLm$, $\RMe_4 \cap \IUMLm$, and $\CPC \cap \RMe_3$. This includes:
$\RMe_n$ for $n \geq 5$,
$\RMe_{2m} \cap \RMe_{2n+1}$ for $n \geq m \geq 1$ with $n \geq 2$, and
$\RMe_{2m} \cap \IUMLm$ for $m \geq 3$;
\item
{\sf{(\cite[Corollary 3.14]{fussner})\textbf{.}}}
Continuum-many
extensions of $\MALL$ and $\CLL$.
\end{itemize}
The logics in the second family do not have a multi-conclusion semi-analytic calculus either.
\end{corollary}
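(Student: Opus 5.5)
The corollary is the contrapositive of \cref{MainCorSub}, fed by the cited failures of the~CIP. The plan is to treat the cited results as black boxes for the first half of each claim ("lacks the~CIP"). For the second half ("has no semi-analytic calculus"), I would check, logic by logic, that the hypotheses of \cref{MainCorSub} are met, so that a semi-analytic calculus would force the~CIP and hence a contradiction.

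For the single-conclusion claim, take any logic~$\lgc$ in either family and suppose it had a single-conclusion semi-analytic calculus. Every listed logic is over~$\calL^u$ or~$\calL^b$. The fuzzy logics and their extensions ($\ULm$, $\MTL$, $\BL$, $\IMTL$, $\RMe$-extensions, \L ukasiewicz and G\"odel logics), as well as $\Rlogic$ and $\Alogic$, all extend~$\FLe$ in the unbounded language, or~$\IMALL$ once constants are present. Intermediate logics extend $\IPC \supseteq \FLe$. Extensions of~$\MALL$ and~$\CLL$ contain~$\IMALL$. So the appropriate clause of \cref{MainCorSub} yields the~CIP, contradicting the cited result.

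The main care point is matching each logic to the right language and base. For example, $\ULm$ is presented over the unbounded language while $\MTL$ includes~$\bd$, so one must use $\FLe \subseteq \lgc$ or $\IMALL \subseteq \lgc$ accordingly. For~$\CLL$-extensions the language contains~$!$, so the $\calL_!^b$ clause is needed.

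For the multi-conclusion claim, restricted to the second family, \cref{MainCorSub} requires $\CFLe \subseteq \lgc$ (or $\MALL \subseteq \lgc$). This is the step I expect to be the real obstacle, and it explains the split into two families. Logics in the second family are involutive: they validate~$\inv$, as $\Rlogic$, $\IULm$, $\IMTL$, \L ukasiewicz, $\RMe$, $\Alogic$ do, or they extend~$\MALL$ or~$\CLL$. Involutive~$\FLe$ coincides with~$\CFLe$, since multi-conclusion sequents are interpreted with the par/negation of involutive logics. So these logics contain~$\CFLe$, respectively~$\MALL$, and the multi-conclusion clause applies.

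For the first family, by contrast, logics like $\MTL$ or intermediate logics are not involutive and need not contain~$\CFLe$. The theorem then gives nothing in the multi-conclusion case, which is why the corollary makes no multi-conclusion claim for them. I would verify the involutivity inclusion explicitly by deriving $\neg\neg x = x$ in~$\CFLe$ and conversely showing that the $\CFLseqe$ rules are sound in involutive residuated lattices.
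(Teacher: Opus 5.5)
Your proposal is correct and is essentially the paper's argument: the corollary is the contrapositive of \cref{MainCorSub} applied to the cited failures of the CIP (with \cref{MainCor} the cleaner reference for the intermediate logics over $\calLp$). The multi-conclusion claim for the second family rests, as you say, on those logics being involutive extensions of $\FLe$, hence containing $\CFLe$, or extensions of $\MALL$. One small inaccuracy, which does not affect the proof: the first family is not uniformly non-involutive, since its $\BL$-extensions item also covers $\Lukas$ and $\Lukas_n$, and that is why these reappear in the second family.
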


Regarding the~UIP, we have the following result. Our theorem generalizes the results of~\cite{AlizadehDO14}~and~\cite{Bil} to also cover the modal cases:

\begin{corollary}[Positive results,~UIP]\label{FLeUniformInterpolation}
Logics $\CFLe$, $\CFLew$, and $\CPC$,  their $\K$ and $\D$ modal versions, as well as logics $\FLe$, $\FLew$, $\FLe\K$, $\FLew\K$, $\FLe\D$, $\FLew\D$, $\FLe\T$, and  $\FLew\T$ have the~UIP.\looseness=-1
\end{corollary}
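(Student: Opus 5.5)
The plan is to apply the UIP clause of \cref{MainCor} and its substructural analogue, so the work is to exhibit, for each logic, a semi-analytic calculus that is \emph{fully terminating} (\cref{Dfn:TerminatingCalculi}). The calculi are those of \cref{thm: cut mardom} with contraction removed: $\CFLseqe$, $\CFLseqew$, $\LK$ (with its structural rules absorbed as in $\GdrieK$), $\FLseqe$, $\FLseqew$, plus the rule sets $\{(\Kseqrule!)\}$, $\{(\Kseqrule!),(\Dseqrule!)\}$ or $\{(\lr!)\}$ for the $\K$, $\D$, $\T$ versions. For $\CPC$, $\K$, $\D$ we use $\GdrieK$-style calculi with $(\Kseqrule)^*$, and $(\Dseqrule)$ for $\D$. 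Cut admissibility in \cref{thm: cut mardom} shows that each calculus is a calculus for its logic in the sense of \cref{CalculusForLogic}. Every rule is semi-analytic and every axiom is focused, as verified for \cref{ApplicationInt1}, and the modal rule sets are among those the definition explicitly allows.

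The central step is to verify full termination. I would take $\preceq$ to be the order induced by the weight $w$ of \cref{thm:termination}, namely the total number of symbols in the sequent, summed over both sides. This works because no rule duplicates context in the absence of contraction. The multiplicative rules $(\lr\impl)^*$ and $(\rr*)$ split the context between the premises. The additive rules copy the context but remove the principal connective, so each premise is strictly lighter. Weakening lowers the weight, and proper subsequents are lighter. In the modal clause, stripping boxes (or $!$) lowers the weight. The rule $(\Kseqrule!)$ turns $!\Gamma\seqar !\phi$ into $\Gamma\seqar\phi$, $(\Dseqrule!)$ behaves the same way, and $(\lr!)$ replaces $!\phi$ by $\phi$. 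Each sequent is the conclusion of only finitely many rule instances, because the multiset splittings are finite.

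The step I expect to be the main obstacle is $\T$ together with the classical case. In $\FLseqeT$ the rule $(\lr!)$ is non-contracting, unlike $(\Tseqrule)$, which retains $\Box\phi$ and is the source of nontermination that the remark excludes. So the $\FLe\T$ and $\FLew\T$ cases need the form of \cref{MainCorSub} for the UIP, which rests on the theorem from~\cite{AkbarTabatabaiJ25} for $\calL_!^u$ and $\calL_!^b$ rather than on \cref{MainCor} alone. For $\CPC$, $\K$ and $\D$, contraction has to be eliminated. I would first try a contraction-free variant such as $\GdrieK$, whose termination is \cref{thm:termination}. I would need to check that its rules with implicit weakening remain semi-analytic, or else fold weakening into the axioms, which remain focused.

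Once full termination holds, the UIP clauses of \cref{MainCor} and \cref{MainCorSub} yield the UIP for every listed logic.
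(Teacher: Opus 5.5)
Your proposal follows the paper's own argument, spelled out in more detail: check that the usual contraction-free calculi are semi-analytic with focused axioms, get full termination from the symbol-count weight since no rule duplicates context, and then invoke the UIP clause of the universal-proof-theory theorem. As you observe, for all the $\FLe$- and $\CFLe$-based logics (their $\K$ and $\D$ versions included, not only the $\T$ ones) this has to be the UIP form of \cref{MainCorSub} from the cited work, since \cref{MainCor} needs $\IPC\subseteq\lgc$, $\lgc=\CPC$, or $\K\subseteq\lgc$. One correction for $\CPC$, $\K$, $\D$: folding weakening into the identity axiom gives $p,\Gamma\seqar\Delta,p$, which is not one of the focused forms, so instead keep weakening as an explicit (semi-analytic, weight-decreasing) rule together with $p\seqar p$, plain $(\Kseqrule)$/$(\Dseqrule)$, and the context-sharing logical rules of $\GdrieK$, under which contraction is admissible.
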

\begin{proof}
The rules of the usual calculi of these logics are semi-analytic, and their axioms are focused, and in the absence of the contraction rule, the calculi are clearly fully terminating.
\end{proof}

\subsubsection{Non-normal Modal and Conditional Logics}

It is possible to extend the methodology to cover (non-normal) modal logics, conditional logics, and intuitionistic modal logics. As a witness to the generality of the methodology, we yet again apply it to a wide range of logics and this time we prove the~ULIP. First, we need to introduce these logics.

Set $\calL_{\triangleright}\ce\{\wedge, \vee, \to, \bot, \triangleright\}$ as the language of conditional logics. For non-normal modal and conditional logics, the property U(L)IP is less explored than in the setting of normal logics. Recall that the basic non-normal modal logic $\Elogic$ (resp.~the conditional logic $\CElogic$) is the smallest set of $\calL_{\Box}$-formulas (resp.~$\calL_{\triangleright}$-formulas) containing classical tautologies, closed under~{\MP}~and~$\Erule$ (resp.~$\CErule$). Define the other non-normal modal and conditional logics as in \cref{figModalHilbert}, by adding the axioms in \cref{figaxioms}. Also, recall the well-known intuitionistic non-normal modal logic, namely the intuitionistic monotone modal logic $\iM$, which is axiomatized over $ \IPC$ by the rule $\Erule$ and the axiom $\Box (\phi \wedge \psi) \impl \Box\phi \wedge \Box \psi$.

\begin{figure}[!t]
\centering{
\small{
\fbox{\parbox{0.97\textwidth}{
\vspace{0.5em}
\ \ \textbf{Modal axioms:}
\vspace{-0.5em}
\begin{align*}
    \Maxiom & \ \ \Box (\phi \wedge \psi) \impl \Box \phi \wedge \Box \psi && & \Caxiom & \ \ \Box \phi \wedge \Box \psi \impl \Box (\phi \wedge \psi) && & \Naxiom & \ \ \Box \top
\end{align*}

\ \ \textbf{Conditional axioms:}
\vspace{-0.5em}
\begin{align*}
    \CM & \ \ (\phi \triangleright \psi \wedge \theta) \impl (\phi \triangleright \psi) \wedge (\phi \triangleright \theta) && & \CN & \ \ \phi \triangleright \top && & \IDaxiom & \ \  \phi \triangleright \phi\\
     \CC & \ \ (\phi \triangleright \psi) \wedge (\phi \triangleright \theta) \impl (\phi \triangleright \psi \wedge \theta) && & \CEM & \ \ (\phi \triangleright \psi) \vee (\phi \triangleright \neg \psi) && & 
\end{align*}

\ \ \textbf{Modal and conditional rules:}
\[
        \vlinf{\Erule}{}{\Box \phi \leftrightarrow \Box \psi}{\phi \leftrightarrow \psi} \qquad \qquad 
        \vliinf{\CErule}{}{\phi_0 \triangleright \psi_0 \impl \phi_1 \triangleright \psi_1}{\phi_0 \leftrightarrow \phi_1} {\psi_0 \leftrightarrow \psi_1} 
\]

}
}}
}
\caption{Modal and conditional axioms and rules.}
\label{figaxioms}
\end{figure}

\begin{figure}[!t]
\centering{
\small{
\fbox{\parbox{0.97\textwidth}{
\vspace{0.5em}
\ \ \textbf{Non-normal modal logics:}
\vspace{-0.5em}
\begin{align*}
    \ENlogic &\ce  \Elogic + \Naxiom   && & \Mlogic &\ce  \Elogic+ \Maxiom && & \MNlogic &\ce  \Mlogic+ \Naxiom && & \MClogic &\ce  \Mlogic + \Caxiom\\
    \K &\ce  \MClogic + \Naxiom   && & \EClogic &\ce  \Elogic+ \Caxiom && & \ECNlogic &\ce  \EClogic+ \Naxiom && & 
\end{align*}

\ \ \textbf{Conditional logics:}
\vspace{-0.5em}
\begin{align*}
    \CENlogic &\ce  \CElogic + \CN   & \CMlogic &\ce  \CElogic+ \CM  & \CMNlogic &\ce  \CMlogic+ \CN  & \CMClogic &\ce  \CMlogic + \CC\\
    \CKlogic &\ce  \CMClogic + \CN    & \CEClogic &\ce  \CElogic+ \CC  & \CECNlogic &\ce  \CEClogic+ \CN  & \CKIDlogic &\ce  \CKlogic+ \IDaxiom
\end{align*}
\vspace{-1.8em}
\[
\qquad \qquad \qquad  \CKCEMlogic\ce  \CKlogic+ \CEM \qquad \qquad \CKCEMIDlogic\ce  \CKCEMlogic+ \IDaxiom
\]
}
}}
}
\caption{Non-normal modal and conditional logics.}
\label{figModalHilbert}
\end{figure}

To prove the~ULIP for logics, we need to extend this concept to sequent calculi, similar to the case of the~UIP. If a sequent calculus has this form of the~ULIP, then its corresponding logic has the~ULIP as well. The summary of the results is provided in the following theorem. \looseness=-1
\begin{theorem}[Positive results, \cite{AkbarTabatabaiIJ22nonnormal,AkbarTabatabaiIJ22int}]
      The following  have the~ULIP (hence the~UIP~and~LIP):\looseness=-1
\begin{itemize}
    \item Non-normal modal logics: $\Elogic$, $\Mlogic$, $\ENlogic$, $\MNlogic$, $\MClogic$, and $\K$.
    \item Conditional logics: $\CElogic$, $\CMlogic$, $\CENlogic$, $\CMNlogic$, $\CMClogic$, $\CKlogic$, and $\CKIDlogic$.
    \item 
The intuitionistic monotone modal logic $\iM$.
\end{itemize}
       \end{theorem}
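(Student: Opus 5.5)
The plan is to prove the ULIP for each listed logic through a sequent calculus, following Pitts' method from \cref{subsec:Pitts} but tracking polarities as in the SLIP of \cref{def:sequent_Lyndon}. The first step is to fix, for every logic in the list, a cut-free and \emph{fully terminating} sequent calculus in the sense of \cref{Dfn:TerminatingCalculi}. For the classical logics the calculus is $\LK$ with implicit weakening and contraction, as in $\GdrieK$. For $\iM$ it is a $\Gfourpi$-style calculus. On top of this sits one modal or conditional rule for each logic:
\begin{itemize}
\item For $\Elogic$: $\Box\phi\seqar\Box\psi$ from $\phi\seqar\psi$ and $\psi\seqar\phi$.
\item For $\Mlogic$: only the first of these premises.
\item For $\MClogic$: $\Box\Gamma\seqar\Box\psi$ from $\Gamma\seqar\psi$ with $\Gamma\neq\emptyset$.
\item For $\ENlogic$ and $\MNlogic$: additional rules with empty antecedent.
\item For $\K$: $(\Kseqrule)^*$.
\item For the conditional logics: analogous rules $\phi_0\triangleright\psi_0\seqar\phi_1\triangleright\psi_1$ from equivalence premises on antecedents and (one- or two-sided) premises on consequents. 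For $\CKIDlogic$ a premise $\phi\seqar\psi$ is additionally allowed.
\end{itemize}
All contexts are weakened in implicitly. Cut admissibility for these calculi is assumed from the literature. Termination then follows as in \cref{thm:termination}, because every premise has strictly smaller weight than its conclusion.

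Next, I will define a sequent version of the ULIP that is stronger than \cref{def:seq_uip}. For each sequent $S=(\Gamma\seqar\Delta)$ and each atom $p$, I want formulas $A^{+}_p(S)$ and $A^{-}_p(S)$. The formula $A^{+}_p(S)$ must satisfy three conditions. First, its positive and negative signatures lie inside those of $S$, and it is $p^+$-free in the appropriate polarity. Second, $\Gamma, A^{+}_p(S)\seqar\Delta$ is provable. Third, whenever $\Pi,\Gamma\seqar\Sigma,\Delta$ is provable with $\Pi\seqar\Sigma$ $p^+$-free, then $\Pi\seqar A^{+}_p(S),\Sigma$ is provable. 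The conditions for $A^{-}_p(S)$ are the same with $p^-$ in place of $p^+$. For $\iM$ I also need dual formulas $E^{\pm}_p(\Gamma)$, as in Pitts' original setup. The transfer to the logic is then the routine step from the theorem after \cref{def:seq_uip}: take $\forall^{\pm} p\,\phi \ce A^{\pm}_p(\seqar\phi)$, and for the classical cases obtain $\exists^{\pm}$ by duality, using \cref{remark:classical_dual_ui} with the polarity swapped. That the ULIP implies the UIP and the LIP is Kurahashi's theorem quoted earlier, so those parts of the statement are immediate.

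The core of the proof is the recursive construction of $A^{\pm}_p$, by well-founded recursion on the termination order, in the style of \cref{fig:uip}. Invertible propositional rules contribute conjunctions of the interpolants of their premises. The terminal clause handles the case where no invertible rule applies. There, the sequent has the form $\Phi,\Box\Gamma'\seqar\Box\Delta',\Psi$, with a corresponding form for $\triangleright$-formulas. In this clause I take the disjunction of the following:
\begin{itemize}
\item the atoms $q\in\Phi$ and the negations $\neg r$ for $r\in\Psi$, omitting whichever ones would carry $p$ with the forbidden polarity (for $A^{+}_p$, drop $\neg p$ coming from $\Psi$ but keep $p\in\Phi$ when it is harmless, and symmetrically for $A^{-}_p$);
\item one modal disjunct for every instance of the modal rule. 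In that disjunct, the interpolants of premises that sit fully on the fixed side are computed recursively. For mixed premises, which occur in $\Elogic$ and the conditional rules and involve the flexible context formulas $\Box\chi$ from $\Pi,\Sigma$, I use a disjunct such as $\Box A^{\pm}_p(\Gamma_0\seqar\psi)\wedge\neg\Box\neg\ldots$. This needs care, because in $\Elogic$ both directions $\phi\seqar\psi$ and $\psi\seqar\phi$ must be interpolated at once. So the disjunct will be $\Box\theta$ with $\theta$ built from $A_p$ and $E_p$-like formulas of both premises, and the polarity of $p$ flips in the reverse premise.
\end{itemize}
For correctness I will check the three conditions by induction along the same order. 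Condition three is the hardest. There I take a proof of $\Pi,\Gamma\seqar\Sigma,\Delta$ and case-split on its last rule: either a rule acting on a formula from $\Gamma,\Delta$, in which case invertibility lets me reduce to the premises, or a modal rule whose principal formulas are split between the flexible context and the fixed side.

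I expect the main obstacle to be the non-monotone rules for $\Elogic$, $\ENlogic$, and the $\CE$-family. Their premises are two-directional equivalences, which mixes polarities. An interpolant for $\Box\phi$ must therefore control $p$ on both sides simultaneously, and it is not obvious that the $p^{+}$-freeness of the constructed modal disjunct survives. My first attempt will be to build, for each mixed premise pair, the formula $\Box(A^{\sigma}_p(\cdot)\wedge\ldots)$, where the polarity superscript $\sigma$ is chosen according to how $\Box$ propagates polarity, given that $\Box$ preserves polarity. I will then verify the required provability through the rule itself, using the uniqueness theorem to replace equivalent interpolants under the box. The monotone cases ($\Mlogic$, $\MClogic$, $\K$, $\CMlogic$, and so on) and $\CKIDlogic$ should then be routine variants. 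For $\iM$, the remaining work is to adapt Pitts' $E_p$ and $A_p$ mutual recursion to the single monotone modal rule.
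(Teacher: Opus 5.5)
Your overall architecture matches the one used in the cited works: a terminating cut-free calculus, polarity-indexed sequent interpolants $A^{\pm}_p$, a Pitts-style recursion, the transfer $\forall^{\pm}p\,\phi\ce A^{\pm}_p(\seqar\phi)$, and Kurahashi's theorem. However, the step that actually carries the non-normal cases is left open, and the form you sketch for it cannot work. In $\Elogic$, a disjunct $\Box\theta$ (resp.\ $\neg\Box\theta$) interacts with a fixed $\Box\beta$ (resp.\ $\Box\alpha$) only through a \emph{derivable equivalence} $\theta\leftrightarrow\beta$. No $p$-free formula assembled from premise interpolants can certify such an equivalence. The uniqueness theorem does not help either: it relates two uniform interpolants of the same formula, not a formula to its interpolant. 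Three examples show the failure.
\begin{itemize}
\item If you box the premise interpolants, you get $\Box\top$ for the derivable sequent $\Box(p\wedge q)\seqar\Box(q\wedge p)$. But $\Box\top$ is not a theorem of $\Elogic$, so your condition~3 fails already with empty context.
\item If you leave them unboxed, you get essentially $q$ for $\Box(p\vee q)\seqar\Box q$. But $\Box(p\vee q),q\seqar\Box q$ is not derivable, so condition~2 fails.
\item In the mixed case, take $S=(\Box\alpha\seqar)$ with a flexible $\Box\beta\in\Sigma$. Conditions~2 and~3 force the responsible disjunct to be equivalent to $\neg\Box\alpha$. For $\alpha=p\vee\neg p$ you need $\neg\Box\top=\neg\Box A_p(\seqar\alpha)$. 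For $\alpha=p\vee q$ the same recipe yields $\neg\Box q$, and $\Box(p\vee q),\neg\Box q\seqar$ is not derivable.
\end{itemize}
So the clause must depend on derivability, not only on the shape of the premise interpolants.

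The missing idea is to guard the modal disjuncts by derivability tests. These are legitimate because a fully terminating calculus is decidable, and the premise interpolants are already defined at that stage of the recursion. Concretely:
\begin{itemize}
\item add $\top$ when an instance with all principal formulas on the fixed side has derivable premises;
\item for $A^{+}_p$ and a fixed $\Box\alpha$ in the antecedent, add $\neg\Box A^{-}_p(\seqar\alpha)$ only if $\alpha\seqar A^{-}_p(\seqar\alpha)$ is derivable;
\item for $A^{+}_p$ and a fixed $\Box\beta$ in the succedent, add $\Box A^{+}_p(\seqar\beta)$ only if $\beta\seqar A^{+}_p(\seqar\beta)$ is derivable.
\end{itemize}
This also resolves the polarity flip you anticipated.

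The issue is not confined to $\Elogic$, $\ENlogic$, and $\CElogic$. The antecedent argument of $\triangleright$ requires provable equivalence in every listed conditional logic, including $\CMlogic$, $\CKlogic$, and $\CKIDlogic$, which you treat as routine monotone variants.

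Separately, your condition~3 has the wrong polarity for the flexible context. For $A^{+}_p$ it must require $p\notin\Vocpos{\Pi}\cup\Vocneg{\Sigma}$. As written, $S=(\seqar p)$ with $\Pi=\{p\}$ (or $\Sigma=\{\neg p\}$) forces $A^{+}_p(S)\leftrightarrow p$, which no $p^{+}$-free formula satisfies.
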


\begin{theorem}[Negative results,  \cite{AkbarTabatabaiIJ22nonnormal}]\leavevmode
\begin{itemize}
\item 
    The logics $\CKCEMlogic$ and $\CKCEMIDlogic$  do \emph{not} have the~ULIP (but they have the~UIP). Hence, they cannot have a fully terminating semi-analytic calculus.
\item 
Non-normal modal logics $\EClogic$ and $\ECNlogic$ and conditional logics $\CEClogic$ and $\CECNlogic$ do \emph{not} have the~CIP (and hence no U(L)IP). Therefore, they cannot have a semi-analytic calculus.
\end{itemize}    
\end{theorem}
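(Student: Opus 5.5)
The plan has two parts, one per item, and both go through the contrapositive of the general machinery: a fully terminating semi-analytic calculus (in the extended sense covering non-normal modal and conditional rules used in \cite{AkbarTabatabaiIJ22nonnormal}) yields the sequent version of the~ULIP, hence the~ULIP of the logic. Likewise, any semi-analytic calculus yields the~CIP, by the analogue of \cref{MainCor}. So it suffices to establish the semantic or syntactic failures, namely the failure of the~ULIP for $\CKCEMlogic$ and $\CKCEMIDlogic$, and the failure of the~CIP for $\EClogic$, $\ECNlogic$, $\CEClogic$, $\CECNlogic$. The non-existence claims then follow immediately.

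For the first item I would first secure the positive half, the~UIP for $\CKCEMlogic$ and $\CKCEMIDlogic$. I would either cite it or derive it from a terminating (but not semi-analytic) calculus via Pitts' method as in \cref{thm:Kuipsequent}. The~CEM rule requires a premise combining several conditionals with the same antecedent. That combination is harmless for variable occurrence but mixes polarities, which is exactly where uniformity with respect to $p^+$ versus $p^-$ breaks.

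For the failure of the~ULIP I would look for a concrete formula, something like $\phi = (q \triangleright p) \vee (q \triangleright \neg p)$ or $\neg (q\triangleright p)$, and show that $\forallpos p$ or $\existspos p$ cannot exist. The natural route is through the~LIP, since the~ULIP implies the~LIP. It therefore suffices to find a valid implication $\alpha \impl \beta$ with no Lyndon interpolant. Here $\CEM$ gives $\neg(\chi\triangleright p) \impl (\chi\triangleright \neg p)$, in which $p$ occurs only negatively on the left and only negatively on the right. The candidate interpolants over $\Voc{\chi}$ have no $p$. I would then refute the resulting implications using selection-function models (one selected world per antecedent for $\CEM$, plus reflexivity for $\IDaxiom$). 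This gives failure of the~LIP, hence of the~ULIP. If this particular instance does admit an interpolant, I would adjust it so that $p$ appears in the antecedent of the conditional.

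For the second item I would use neighbourhood semantics for $\EClogic$ and $\ECNlogic$ (closure under intersection, and containing $W$ for N), and their conditional analogues. The idea is to exhibit $\alpha\impl\beta$ valid but without an interpolant. The key feature is that $\Caxiom$ without $\Maxiom$ lets $\Box p \wedge \Box q$ force $\Box(p\wedge q)$, while no formula in the common vocabulary can express this. A candidate is $\Box p\wedge\Box q \wedge \neg\Box(p\wedge q\wedge r)$ on the left against $\neg(p\wedge q \leftrightarrow p\wedge q\wedge r)$-type consequents. To refute every candidate interpolant I would use a bisimulation-style argument for monotonicity-free neighbourhood models that agree on the shared variables. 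The conditional cases I would obtain by the standard translation treating $\phi\triangleright\psi$ as an indexed box.

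I expect the main obstacle to be constructing the counterexample implications and, above all, the model-theoretic argument that no interpolant exists. That argument requires an invariance (bisimulation) notion for non-monotone neighbourhood frames closed under intersection, which must be preserved under amalgamation failure. I would tackle this first, via an explicit two-model construction.
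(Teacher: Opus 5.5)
Your overall architecture (establish the failures, then contrapose the positive theorems) is the one the statement rests on. The problem is that neither failure is actually established.

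For $\CKCEMlogic$, you misapply the Lyndon condition. You also never fix the one thing that decides the matter: the polarity of variables in the antecedent of $\triangleright$. In $\neg(\chi\triangleright p)\impl(\chi\triangleright\neg p)$ the variable $p$ is negative on both sides, so it is \emph{allowed} (negatively) in a Lyndon interpolant. Your restriction to $\Voc{\chi}$ without $p$ is a Craig-style requirement, and its failure says nothing about the~LIP. The outcome then depends on the convention:
\begin{itemize}
\item If antecedent variables count with both polarities, $\neg(\chi\triangleright p)$ is itself a Lyndon interpolant. Moreover, Maehara's method on the cut-free calculus with the CEM rule then gives the~LIP outright, taking a Craig interpolant of the antecedent equivalence as the new antecedent. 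So the route through the~LIP is closed.
\item If $\triangleright$ preserves polarity in its antecedent, as $\Box$ does in $\Elogic$, the instance does fail, but for a different reason. For $\chi=s$ an atom, $s$ is negative on the left (under the outer negation) and positive on the right. So it is $s$, not $p$, that must be absent from the interpolant.
\end{itemize}
In the second case the refutation needs a different argument from yours. Show that any $s$-free $\theta$ with $\theta\impl(s\triangleright\neg p)$ is inconsistent, as follows. Add two isolated twin worlds that differ only in $s$, which makes $[s]$ undefinable without $s$. Then let the selection function pick a $p$-world at $[s]$; this does not change the truth of $\theta$. Since $\neg(s\triangleright p)$ is consistent, no interpolant exists. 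Proof-theoretically, this is the split sequent $;\ \seqar s\triangleright p\,;\,s\triangleright\neg p$ of the multi-conclusion CEM rule, where no antecedent of the right polarity is available to build the interpolant.

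For the second item you have no witness at all, and your candidate is not even valid in $\EClogic$. Box-facts say nothing about the current world, and there is no universal modality. So $\Box p\wedge\Box q\wedge\neg\Box(p\wedge q\wedge r)$ only says that the extensions of $p\wedge q$ and $p\wedge q\wedge r$ differ \emph{somewhere}. A two-world countermodel: take $N(w)=\{\{v\}\}$ with $w\models\neg p\wedge\neg q$ and $v\models p\wedge q\wedge\neg r$; the antecedent holds at $w$ and the consequent fails there.

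The whole content of this item is deferred as ``the main obstacle'': a valid implication with no interpolant, plus the invariance argument for intersection-closed, non-monotone neighbourhood models.

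The passage to $\CEClogic$ and $\CECNlogic$ is also not automatic, because an interpolant there may use arbitrary antecedents. Besides embedding $\Box\phi$ as $\top\triangleright\phi$, you need a map back. The antecedent-forgetting translation $\chi\triangleright\psi\mapsto\Box\psi$ works: it sends $\CEClogic$-theorems to $\EClogic$-theorems (and $\CECNlogic$ to $\ECNlogic$), so it turns a conditional interpolant into a modal one.
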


\section{Interpolation Using Generalizations of Sequent Calculi}
\label{sect:gener}

For many logics, no cut-free sequent calculus is known. While certain types of restricted cuts do not impede the Maehara method (see \cref{sect:role_of_cut}), it is often easier to find  a cut-free \emph{hypersequent}, \emph{nested sequent}, or \emph{labelled sequent} calculus for a  logic than a (semi-)analytic sequent calculus. In addition, the former are bettter suited to demonstrate the LIP. In this section we show how to adapt the Maehara method to these generalizations of sequents. \looseness=-1

 We aim to provide a general manual on how to construct interpolants to further explore interpolation properties for other logics of the reader's interest besides the logics treated in this section. We do this by presenting a reverse engineering strategy. What we mean by this is that we will analyze the form of the rules, which will guide us in coming up with the correct constructions of interpolants.

Since hypersequents and nested sequents can usually be translated into the labelled sequent format and there are powerful general methods of constructing cut-free labelled sequent calculi based on Kripke semantics of the logic~\cite{Negri05,NegrivP11}, we first focus on labelled sequents in \cref{sect:multi,sect:lab_interp} for modal logics over~$\CPC$. In \cref{sec:hyper}, we then briefly explain how to adapt this method to  hypersequents and nested sequents.

The method described in this section is sometimes criticized for the explicit use of semantic elements. In this respect, we would like to cite the alternative formulation of the method in~\cite{Lyon_etall20} that avoids such explicit use.

\subsection{Labelled Sequents}
\label{sect:multi}

A \emph{labelled sequent} (over a language~$\calL$) is a sequent (over~$\calL$) enriched with labels: every formula~$\phi$ in the sequent is supplied with some label~$i$, resulting in a labelled formula~$\labe{i}{\phi}$. Labels~$i$ can be taken from any  set. W.l.o.g.,~we will use integer labels from~$\posint$.

The presented method of proving interpolation was originally developed by Fitting and Kuznets for nested sequents~\cite{FittingK15} and later adapted to labelled sequents by Kuznets in~\cite{Kuznets16JELIA}. Here we largely follow~\cite{Kuznets18}, where additional details can be found. As in \cref{sec:basics}, we will present general conditions guaranteeing that interpolation holds, but based on Kripke frame conditions rather than the sequent rule shape.

Like labelled sequents themselves, this method relies on the Kripke semantics. In particular, labels~$i$ are interpreted as Kripke worlds. 
Whereas standard  sequents $\Gamma \seqar \Delta$ consist of two multisets of formulas (antecedent~$\Gamma$ and succedent~$\Delta$), labelled sequents $\relat, \Gamma \seqar \Delta$ are comprised  of two multisets~$\Gamma$~and~$\Delta$ of \emph{labelled formulas} plus  relational part~$\relat$. This~$\relat$ consists of relational atoms of the form $i R j$, where $i$~and~$j$~are labels and $R$~is the Kripke accessibility relation.\footnote{Multiple relations and/or $n$-ary relations may be used according to the Kripke semantics.} $\Lab(S)$~denotes the set of all labels occurring in~$S$. For instance, $\Lab(1 R 2, \ \labe{2}{\phi}, \ \labe{4}{p} \ \seqar \ \labe{4}{\psi})=\{1,2,4\}$. We write $i \in S$ to mean $i \in \Lab(S)$.

\begin{definition}[Labelled semantics]
\label{def:mult_val}
Let $\calM = (W, R, V)$ be a Kripke model. 
For a labelled sequent $S=\relat, \Gamma\seqar\Delta$, a function $\intfun \colon X \to W$ is called an \emph{interpretation of~$S$ into~$\calM$} if{f} $X \supseteq \Lab(S)$, i.e.,~$\intfun$~interprets all labels of~$S$, and $\labint{i} R \labint{j}$ in $\calM$ whenever $i R j \in \relat$.
For such an interpretation, $\calM\intfun \models \relat, \Gamma \seqar \Delta$ if{f} \looseness=-1
\begin{center}
either\quad $\calM, \labint{i} \not\models \phi$ for some  $\labe{i}{\phi} \in \Gamma$\quad or\quad $\calM, \labint{j} \models \psi$ for some $\labe{j}{\psi} \in \Delta$,
\end{center}
in which case we say that $\relat,\Gamma \seqar \Delta$ is true for~$\intfun$. Otherwise, it is false. 
$S$~is valid in a class~$\class$ of models, written $\class \models S$, if{f} $S$~is true for all interpretations~$\intfun$ of~$S$ into models $\calM \in \class$.
\end{definition}

    The notion of validity does not change if  interpretations are defined exactly on the labels~$\Lab(S)$ of a sequent~$S$ rather than on any superset $X \supseteq \Lab(S)$ thereof. 

\begin{remark}
\label{rem:seq_as_lab_seq}
An ordinary sequent $\Gamma 
\seqar \Delta$ can be viewed as a single-label labelled sequent $\labe{1}{\Gamma} \seqar \labe{1}{\Delta}$ with  $\relat=\emptyset$, where $\labe{i}{\Pi}\ce\{\labe{i}{\phi} \mid \phi \in\Pi\}$. This representation is faithful: any mapping~$\seqint$ defined on~$1$ is an interpretation of $\labe{1}{\Gamma} \seqar \labe{1}{\Delta}$ and $\calM\seqint \models \labe{1}{\Gamma} \seqar \labe{1}{\Delta}$ if{f} $\calM, \labint{1} \models \bigwedge\Gamma \impl \bigvee \Delta$.
\lipicsEnd
\end{remark}

Inference rules and proofs for ordinary sequents are defined in  \cref{subsec:seqcal}, so we will use the same notation. Examples of labelled sequent rules can be found in \cref{eq:label_sample_rules,eq:label_modal_rules,eq:label_special_rules} which will be explained in more detail in the next paragraphs. We will typically omit the word \emph{labelled}, while talking about sequents, rules, etc.~to mean the labelled versions and  using \emph{unlabelled} or \emph{ordinary} otherwise. Names of labelled calculi are prefixed with~$\mathbf{L}$ and names of labelled rules are prefixed with~$\mathsf{L}$. 

\begin{figure}[t]
\centering
\small{
\fbox{\parbox{0.98\textwidth}{
{\centering
\vspace{0.5em}
\ \ \ \ \ $\vlinf{\lid^*}{}{\relat, \labe{i}{\phi}, \Gamma \seqar \Delta, \labe{i}{\phi}}{}$
\qquad
$\vlinf{\Ll\lr\wedge}{}{\relat, \labe{i}{\phi\wedge\psi},\Gamma\seqar\Delta}{\relat,\labe{i}{\phi},\labe{i}{\psi},\Gamma\seqar\Delta}$
\qquad
$\vliinf{\Ll\lr\impl}{}{\relat,\labe{i}{\phi \impl \psi},\Gamma\seqar\Delta}{\relat,\Gamma\seqar\Delta,\labe{i}{\phi}}{\relat,\labe{i}{\psi}, \Gamma \seqar \Delta}
$
\vspace{0.5em}
}}}}
\caption{Labelled versions of some propositional sequent rules from \cref{fig:G3K}.}
\label{eq:label_sample_rules}
\end{figure}
\begin{figure}[t]
\centering
\small{
\fbox{\parbox{0.98\textwidth}{
{\centering
\vspace{0.5em}
\qquad \qquad \quad $\vlinf{\Ll\lr\Box}{}{i R j,\relat,  \labe{i}{\Box\phi},\Gamma\seqar\Delta}{i R j,\relat,  \labe{j}{\phi},\labe{i}{\Box\phi},\Gamma\seqar\Delta}$
\qquad\qquad
$\vlinf{\Ll\rr\Box}{\mathsmaller{(j\notin\relat, \Gamma \seqar \Delta)}}{\relat, \Gamma\seqar\Delta, \labe{i}{\Box\phi}}{i R j,\relat,\Gamma\seqar\Delta,  \labe{j}{\phi}}
$
\vspace{0.5em}
}}}}
\caption{Labelled sequent rules for the only modality~$\Box$ of language~$\calLBox$.}
\label{eq:label_modal_rules}
\end{figure}
\begin{figure}[t!]
\centering
\small{
\fbox{\parbox{0.98\textwidth}{
{\centering
\vspace{0.5em}
\qquad $
\vlinf{\LlB}{}{iRj,\relat,  \Gamma\seqar\Delta}{jRi, i R j,\relat,\Gamma\seqar\Delta}$
\qquad
$\vlinf{\Ll{}\Fourseqrule}{}{iRj,jRk,\relat,  \Gamma\seqar\Delta}{iRk,iRj ,jRk,\relat,\Gamma\seqar\Delta}$
\qquad
$\vlinf{\Ll{}\Dseqrule}{\mathsmaller{(j\notin\relat, \Gamma \seqar \Delta)}}{\relat, \Gamma\seqar\Delta}{i R j,\relat,\Gamma\seqar\Delta}$
\vspace{0.5em}
}}}}
\caption{Labelled sequent rules for  frame conditions of symmetry, transitivity, and seriality.}
\label{eq:label_special_rules}
\end{figure}

A labelled sequent calculus typically consists of three groups of rules: propositional rules, modal rules, and special rules. Propositional rules can be obtained by applying the following modification to each of the unlabelled rules from \cref{fig:G3K} (or any other suitable calculus), other than cut.\footnote{We omit the treatment of cut, as labelled sequent calculi are typically cut-free.} For an unlabelled rule, add the same relational part~$\relat$ to the conclusion and all premises, treat multisets~$\Gamma$, $\Delta$,~etc.~as multisets of labelled formulas, and assign the same label to all principal and active formulas. The result for  three of the rules from \cref{fig:G3K} is shown in \cref{eq:label_sample_rules}. The modal rules   require two rules (per modality) and  reflect the Kripke semantics for the respective modality. The two rules  for  language~$\calLBox$ are shown in \cref{eq:label_modal_rules}. 
Special rules are determined on a logic-by-logic basis and can often be generated algorithmically based on the Kripke frame conditions for the logic~\cite{NegrivP11}. For instance, the  rules  added if the frames are symmetric, transitive,  and serial respectively are shown  in \cref{eq:label_special_rules}.\looseness=-1

\begin{definition}
Let modal logic~$\lgc$ be complete w.r.t.~a class~$\class$ of Kripke models. A sequent calculus~$\LSC$ is a \emph{calculus for~$\lgc$ (w.r.t.~$\class$)} 
if{f} $
\LSC \vdash S \Longleftrightarrow \class \models S$ for every  sequent~$S$.
\end{definition}

Labelled versions of the propositional rules from \cref{fig:G3K}, i.e.,~all rules but~$(\Kseqrule)^*$, together with the rules from \cref{eq:label_modal_rules}, form the (cut-free)  calculus~$\bLl\GdrieK$ (over~$\calLBox$) for modal logic~$\K$. Adding to~$\bLl\GdrieK$  any combination of  $(\LlB)$, $(\Ll\Fourseqrule)$, and $(\Ll\Dseqrule)$ from \cref{eq:label_special_rules} yields calculi for the logics complete w.r.t.~the frames of the respective type. Note that some of these logics, e.g,~$\B$, possess no known cut-free \emph{unlabelled} sequent calculus. Cf.~\cite{NegrivP11} for general methods of constructing cut-free labelled sequent calculi.
\begin{proposition}
\label{prop:label_comp}
    If $\LSC$~is a calculus for~$\lgc$, then $\lgc\vdash \phi \impl \psi$ if{f} $\LSC \vdash \labe{i}{\phi} \seqar \labe{i}{\psi}$ for any~$i$.\looseness=-1
\end{proposition}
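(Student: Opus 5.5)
The plan is to chain two equivalences: the definition of a calculus for~$\lgc$ (validity in~$\class$) and the faithful reading of single-label sequents from \cref{rem:seq_as_lab_seq}. By definition, $\LSC \vdash \labe{i}{\phi} \seqar \labe{i}{\psi}$ if{f} $\class \models \labe{i}{\phi} \seqar \labe{i}{\psi}$. So it suffices to show that $\class \models \labe{i}{\phi} \seqar \labe{i}{\psi}$ if{f} $\lgc \vdash \phi \impl \psi$, where $\class$ is the class of Kripke models w.r.t.\ which $\lgc$ is complete (and, as the setting presupposes, sound).

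First I unfold validity of the labelled sequent. The relational part is empty, so an interpretation is any map $\seqint$ defined on a superset of $\{i\}$. Every world $w$ of every $\calM \in \class$ arises as $\labint{i}$ for such a map, and the truth of the sequent depends only on $\labint{i}$. Following \cref{rem:seq_as_lab_seq} (with label~$i$ instead of~$1$, which changes nothing), $\calM\seqint \models \labe{i}{\phi} \seqar \labe{i}{\psi}$ if{f} $\calM, \labint{i} \models \phi \impl \psi$. Hence $\class \models \labe{i}{\phi} \seqar \labe{i}{\psi}$ if{f} $\phi \impl \psi$ holds at every world of every model in~$\class$, i.e.,~it is valid in all models of~$\class$.

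Second, by soundness and completeness of~$\lgc$ w.r.t.~$\class$ (\cref{fig:modal-logics}), this validity is equivalent to $\lgc \vdash \phi \impl \psi$. Composing the equivalences gives the claim, and since $i$ was arbitrary it holds for any label.

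There is no real obstacle here. The only point to check is the surjectivity remark: every world must be reachable as $\labint{i}$ with no constraints on the interpretation. This holds because $\relat = \emptyset$ imposes no accessibility requirements. One should also note that ``complete w.r.t.~$\class$'' is meant in the two-sided sense used throughout \cref{fig:modal-logics}.
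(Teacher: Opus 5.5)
Your proposal is correct. It is exactly the argument the paper leaves implicit, since the paper states the proposition without proof. You unfold the definition of a calculus for $\lgc$ w.r.t.~$\class$, use the single-label reading from \cref{rem:seq_as_lab_seq} (with $\relat=\emptyset$ allowing every world to serve as $\labint{i}$), and finish with soundness and completeness of $\lgc$ w.r.t.~$\class$. Your observation that ``complete w.r.t.~$\class$'' must be read two-sidedly is also right: soundness is what gives the left-to-right direction.
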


\subsection{Labelled Interpolation}
\label{sect:lab_interp}

As in the unlabelled case, an interpolant of a theorem $\phi\impl \psi$ is found based on a  proof of a split sequent $\labe{i}{\phi};\; \seqar \;;\labe{i}{\psi}$ (for any label~$i$). Splits are defined as  in \cref{def:split_seq} with relational atoms remaining outside of the split. The left (right) side of the split sequent $\relat, \Gamma; \Gamma' \seqar \Delta; \Delta'$ is $\relat, \Gamma \seqar \Delta$  (resp.~$\relat, \Gamma' \seqar \Delta'$). Split rules are obtained as in the unlabelled case. For the modal rules from \cref{eq:label_modal_rules}, $\Gamma$~and~$\Delta$~are split the same way in the premise and conclusion, whereas both~$\labe{i}{\Box \phi}$~and~$\labe{j}{\phi}$ in the premise have to be on the same side of the split as~$\labe{i}{\Box \phi}$ in the conclusion.

Since labelled formulas  are  interpreted at different Kripke worlds according to a given interpretation~$\intfun$, the same should hold for interpolants, which cannot be mere formulas, but rather boolean combinations\footnote{Conjunctions and disjunctions are often sufficient with additional connectives added if need be.} of labelled formulas: 
\begin{definition}[Multiformula]
 Grammar
$\mho \cce \labe{i}{\phi} \mid (\mho \spdisj \mho) \mid (\mho \spconj \mho)$
where $\phi \in \form_\calL$ and $i\in\posint$ defines \emph{multiformulas} (over a language $\calL$). $i \in \mho$ denotes that label~$i$ occurs in~$\mho$.
\end{definition}
\begin{definition}[Multiformula semantics]
Let $\mho$~be a multiformula and let  $\intfun \colon X \to W$ be an interpretation into  $\calM = (W,R,V)$ with $X \supseteq \Lab(\mho)$. Then $\calM\intfun \models \labe{i}{\phi}$ if{f} $\calM, \labint{i} \models \phi$, while \emph{multiformula conjunction}~$\spconj$ and \emph{disjunction} $\spdisj$ behave classically, i.e., $\calM\intfun \models \mho_1 \spconj \mho_2$ ($\calM\intfun \models \mho_1 \spdisj \mho_2$) if{f} $\calM\intfun \models \mho_j$ for both $j=1,2$ (resp.~for at least one of $j=1,2$). We say that two multiformulas~$\mho$ and $\mho'$ are \emph{logically equivalent}, if{f} for any model~$\calM$ and any interpretation~$\seqint$ of~$\mho$~and~$\mho'$ into~$\calM$, we have $\calM\seqint \models \mho$ if{f} $\calM\seqint \models \mho'$.
\end{definition}

$\Voc{\relat, \Gamma \seqar \Delta}$ and $\Voc{\mho}$ are defined by ignoring~$\relat$, as well as all the labels. 

\begin{definition}[Labelled Craig interpolation]
\label{def:msint}
A  calculus~$\LSC$ over a language~$\calL$ has  the \emph{Labelled  Craig Interpolation Property~(LCIP)} w.r.t.~class~$\class$ of Kripke models if{f}, whenever $\vdash_{\LSC} S$ for a sequent $S= \relat, \Gamma, \Gamma' \seqar \Delta, \Delta'$, there exists a multiformula $\mho$ over $\calL$ such that
\begin{enumerate}
    \item\label{cond:lab_seq_craig}$\Voc{\mho} \subseteq \Voc{\Gamma, \Delta}\cap \Voc{\Gamma', \Delta'}$, i.e., all atoms in $\mho$ occur in both sides of the split;
    \item\label{cond:lab_seq_main} for any interpretation~$\intfun$ of $S$ into a Kripke model $\calM \in \class$:
    \begin{enumerate}[a]
    \item\label{cond:lab_seq_maina} if $\calM\intfun \not\models \mho$, then $\calM\intfun \models \relat, \Gamma \seqar \Delta$ and
    \item\label{cond:lab_seq_mainb} if $\calM\intfun \models \mho$, then $\calM\intfun \models \relat, \Gamma' \seqar \Delta'$;
\end{enumerate}
    \item\label{cond:lab_seq_lab} $\Lab(\mho) \subseteq \Lab(\relat, \Gamma, \Gamma' \seqar \Delta, \Delta')$, i.e., all labels in $\mho$ occur in the sequent.
\end{enumerate}
In this case, we write $\vdash_{\LSC} \relat, \Gamma;\Gamma' \xseqar{\mho} \Delta;\Delta'$ and call $\mho$ the \emph{(labelled Craig) interpolant} for the split sequent $\relat, \Gamma;\Gamma' \seqar \Delta;\Delta'$ in~$\LSC$.
\end{definition}

Here conditions~\ref{cond:lab_seq_craig}--\ref{cond:lab_seq_main} are labelled analogs of the two conditions from \cref{def:Maehara}, with condition~\ref{cond:lab_seq_maina} being  contrapositive relative to the first part of  \cref{def:Maehara}\eqref{cond:seq_main}. Condition~\ref{cond:lab_seq_lab} is labelled-specific (also used for nested and hypersequents). For the~LIP, as in  \cref{def:sequent_Lyndon}:
\begin{definition}[Labelled Lyndon interpolation]
\label{def:lablyndint}
The \emph{Labelled Lyndon Interpolation Property~(LLIP)} is obtained by replacing condition~\ref{cond:lab_seq_craig} in \cref{def:msint} with:
\begin{enumerate}[1'.]
\item\label{cond:lynd_one_label} $\Vocpos{\mho}\subseteq \Vocneg{\Gamma \seqar \Delta}\cap \Vocpos{\Gamma' \seqar \Delta'}$  and $\Vocneg{\mho}\subseteq \Vocpos{\Gamma \seqar \Delta}\cap \Vocneg{\Gamma' \seqar \Delta'}$.
\end{enumerate}
\end{definition}

\begin{theorem}
    Let $\LSC$~be a  calculus for~$\lgc$ (both over~$\calLBox$) w.r.t.~a class~$\class$ of Kripke models. If $\LSC$ has the~LCIP/LLIP, then $\lgc$ has the~CIP/LIP.
\end{theorem}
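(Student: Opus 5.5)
Suppose $\lgc\vdash\phi\impl\psi$. Fix a label, say $1$. By \cref{prop:label_comp}, $\LSC\vdash \labe{1}{\phi}\seqar\labe{1}{\psi}$. We split this sequent as $\labe{1}{\phi};\;\seqar\;;\labe{1}{\psi}$, so $\relat=\emptyset$, $\Gamma=\{\labe{1}{\phi}\}$, $\Delta=\Gamma'=\emptyset$, $\Delta'=\{\labe{1}{\psi}\}$. The LCIP (resp.\ LLIP) then gives a multiformula $\mho$ satisfying \cref{def:msint}. By condition~\ref{cond:lab_seq_lab}, every label in $\mho$ is $1$. The plan is to collapse $\mho$ into an ordinary formula $\theta$ and show that $\theta$ is a Craig (resp.\ Lyndon) interpolant of $\phi\impl\psi$.

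\emph{Collapse.} Define $\theta$ by recursion on $\mho$:
\begin{itemize}
\item $\labe{1}{\chi}\mapsto\chi$;
\item $\mho_1\spconj\mho_2\mapsto\theta_1\wedge\theta_2$;
\item $\mho_1\spdisj\mho_2\mapsto\theta_1\vee\theta_2$.
\end{itemize}
By a trivial induction, for every model $\calM$ and every interpretation $\seqint$ defined on $1$, we have $\calM\seqint\models\mho$ iff $\calM,\labint{1}\models\theta$. The map only erases labels and uses monotone connectives, so $\Voc{\theta}=\Voc{\mho}$, $\Vocpos{\theta}=\Vocpos{\mho}$ and $\Vocneg{\theta}=\Vocneg{\mho}$. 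Hence condition~\ref{cond:lab_seq_craig} gives $\Voc{\theta}\subseteq\Voc{\phi}\cap\Voc{\psi}$. For the Lyndon case, condition~$1'$ of \cref{def:lablyndint} gives $\Vocpos{\theta}\subseteq\Vocneg{\labe{1}{\phi}\seqar}\cap\Vocpos{\seqar\labe{1}{\psi}}=\Vocpos{\phi}\cap\Vocpos{\psi}$, and symmetrically for negative polarity.

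\emph{Provability.} To show $\lgc\vdash\phi\impl\theta$, use completeness of $\lgc$ w.r.t.\ $\class$ (the setting in which $\LSC$ is a calculus for $\lgc$). Equivalently, by \cref{rem:seq_as_lab_seq}, show $\class\models\labe{1}{\phi}\seqar\labe{1}{\theta}$ and then apply the calculus equivalence and \cref{prop:label_comp}. Take $\calM\in\class$ and an interpretation $\seqint$ on $\{1\}$. Since $\relat=\emptyset$, $\seqint$ is an interpretation of $S$.
\begin{itemize}
\item If $\calM,\labint{1}\not\models\theta$, then $\calM\seqint\not\models\mho$. By condition~\ref{cond:lab_seq_maina}, $\labe{1}{\phi}\seqar$ is true, i.e.\ $\calM,\labint{1}\not\models\phi$. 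So $\phi\impl\theta$ holds at $\labint{1}$.
\item If $\calM,\labint{1}\models\theta$, then $\calM\seqint\models\mho$. Condition~\ref{cond:lab_seq_mainb} gives $\calM,\labint{1}\models\psi$. So $\theta\impl\psi$ holds at $\labint{1}$.
\end{itemize}
Every world of every model in $\class$ arises as $\labint{1}$. Hence both implications are valid in $\class$, and thus provable in $\lgc$.

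\emph{Main obstacle.} The only delicate point is moving between validity in $\class$ and provability in $\lgc$. This needs $\lgc$ sound and complete w.r.t.\ $\class$, which we take from the hypothesis that $\LSC$ is a calculus for $\lgc$ w.r.t.\ $\class$ together with \cref{prop:label_comp}. Concretely, $\class\models\labe{1}{\phi}\seqar\labe{1}{\theta}$ gives $\LSC\vdash\labe{1}{\phi}\seqar\labe{1}{\theta}$, hence $\lgc\vdash\phi\impl\theta$. The label condition~\ref{cond:lab_seq_lab} is essential: without it, $\mho$ could mention labels not interpreted by $\seqint$ and would not collapse to a single-world formula.
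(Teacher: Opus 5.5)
Your proposal is correct and follows the paper's own proof sketch. You apply the LCIP/LLIP to the split $\labe{1}{\phi};\;\seqar\;;\labe{1}{\psi}$ and collapse the interpolant by erasing labels and mapping $\spconj,\spdisj$ to $\wedge,\vee$, which is exactly the paper's $\multform(\mho)$. Your checks of the variable conditions and of $\lgc\vdash\phi\impl\theta$ and $\lgc\vdash\theta\impl\psi$ (via the calculus equivalence and \cref{prop:label_comp}) fill in the details the paper leaves to the reader.
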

\begin{proof}[Proof sketch]
    If $\mho$ is a labelled Craig/Lyndon interpolant for $\labe{1}{\phi}; \; \seqar \; ;\labe{1}{\psi}$ in $\LSC$, then one can prove that formula $\multform(\mho)$ is a Craig/Lyndon interpolant for $\phi \impl \psi$ in~$\lgc$, where $\multform(\mho)$ is defined by $\multform(\labe{i}{\phi})\ce \phi$, $\multform(\mho_1 \circledcirc \mho_2) \ce \multform(\mho_1) \circ \multform(\mho_2)$ for $\circ\in\{\wedge,\vee\}$.
\end{proof}

Now we turn to a reverse engineering strategy to see how to define  interpolant transformations for various labelled rules. The form of the rules will guide us on how to come up with a  construction of appropriate interpolant transformations.

For this new presentation we will divide labelled rules into different categories depending on the shape of their split rules. Different categories will require different interpolant transformations. When going through these explanations, the reader might notice many similarities with the interpolant constructions from \cref{fig:interpolation} for unlabelled rules. Indeed, many of the (implicit) principles in \cref{fig:interpolation} transfer to the labelled case with minor modifications. A fundamental difference is that interpolants in the labelled case are multiformulas (contain labels) instead of just formulas, which means that here we need the multiformula variants~$\spconj$~and~$\spdisj$ of~$\wedge$~and~$\vee$ respectively. The reverse engineering strategy leads to the interpolant transformations for propositional labelled rules presented in \cref{fig:lab_interpolation}, which show close similarities with \cref{fig:interpolation} for unlabelled rules.

\begin{figure}[!t]
\centering{
{
\fbox{
\parbox{0.96\textwidth}{
\begin{adjustbox}{scale=0.8}
\parbox{\textwidth}{
\vspace{0.5em}
\textbf{Interpolants for the four splits of   labelled axiom~$(\lid^*)$}
\vspace{-0.5em}
\[
        \vlinf{\itLR(\lid^*)}{}{\relat,\labe{i}{\phi}, \Gamma;\Gamma' \xseqar{\labe{i}{\phi}} \Delta;\Delta',\labe{i}{\phi}}{} 
        \qquad
        \vlinf{\itRR(\lid^*)}{}{\relat,\Gamma;\Gamma',\labe{i}{\phi}\xseqar{\labe{i}{\top}} \Delta ;\Delta',\labe{i}{\phi}}{}         \]
\vspace{-0.9em}
\[
        \vlinf{\itLL(\lid^*)}{}{\relat,\labe{i}{\phi},\Gamma;\Gamma' \ \xseqar{\labe{i}{\bot}}\labe{i}{\phi},\Delta;\Delta'}{}
        \qquad
        \vlinf{\itRL(\lid^*)}{}{ \relat,\Gamma;\Gamma',\labe{i}{\phi}\xseqar{\labe{i}{\neg \phi}}\Delta,\labe{i}{\phi};\Delta'}{}
        \]

\textbf{Interpolant transformations for the two splits of  unary  labelled rule $(\Ll\lr\wedge)$}
\[
\vlinf{L(\Ll\lr\wedge)}{}{\relat,\labe{i}{\phi\wedge\psi},\Gamma ; \Gamma'\xseqar{\mho}\Delta; \Delta'}{\relat,\labe{i}{\phi}, \labe{i}{\psi},\Gamma; \Gamma' \xseqar{\mho}\Delta; \Delta'} 
\qquad
\vlinf{R(\Ll\lr\wedge)}{}{\relat,\Gamma ; \Gamma',\labe{i}{\phi\wedge\psi}\xseqar{\mho}\Delta; \Delta'}{\relat,\Gamma; \Gamma',\labe{i}{\phi}, \labe{i}{\psi} \xseqar{\mho}\Delta; \Delta'}
\]

\textbf{Interpolant transformations for the two splits of binary  labelled rule $(\Ll\lr\impl)$}
\[
\vliinf{L(\Ll\lr\impl)}{}{\relat,\labe{i}{\phi \impl \psi},\Gamma;\Gamma'\xseqar{\mho_1 \spdisj \mho_2}\Delta;\Delta'}{\relat,\Gamma;\Gamma'\xseqar{\mho_1}\labe{i}{\phi},\Delta;\Delta'}{\relat,\labe{i}{\psi}, \Gamma;\Gamma' \xseqar{\mho_2} \Delta;\Delta'} \quad
\vliinf{R(\Ll\lr\impl)}{}{\relat,\Gamma;\Gamma',\labe{i}{\phi \impl \psi}\xseqar{\mho_1 \spconj \mho_2}\Delta;\Delta'}{\relat,\Gamma;\Gamma'\xseqar{\mho_1}\Delta;\Delta',\labe{i}{\phi}}{\relat,\Gamma;\Gamma',\labe{i}{\psi} \xseqar{\mho_2} \Delta;\Delta'}
\]
}
\end{adjustbox}
}
}}}
\caption{Craig/Lyndon interpolant construction  for  sample split  rules obtained from \cref{fig:G3K}. Capital  $L$ ($R$) denotes that the principal formula is on the left (resp.~right) side of the split. } 
\label{fig:lab_interpolation}
\end{figure}

We first discuss three types of interpolant transformations that can be categorized via so-called \textit{local}, \textit{conjunctive}, and \textit{disjunctive} split rules. These categories are defined precisely in \cite{Kuznets18}, but are in essence defined by the three conditions mentioned in the following three lemmas. In particular, a \textit{local split rule} is a single-premise rule that preserves the local logical consequence (if the premise is true for an interpretation, the conclusion must be true for the same interpretation), which leads to the interpolant of the conclusion being the same as the interpolant of the premise.
For a conjunctive (resp.~disjunctive) binary split rule the interpolant of the conclusion is the conjunction (resp.~disjunction) of the interpolants of the premises. \looseness=-1

\begin{lemma}[Local split rules, \cite{Kuznets18}]
\label{lem:local_gen}
Let $\class$ be a class of Kripke models. Assume the following:
\begin{enumerate}[(i)]
    \item $\Lab(\relat^*,\Gamma;\Gamma'\seqar \Delta;\Delta') \subseteq \Lab(\relat,\Pi;\Pi'\seqar \Sigma;\Sigma')$;
    \item \label{eq:local_craig}  $\Voc{\Gamma, \Delta} \subseteq \Voc{\Pi,\Sigma}\quad\text{and}\quad\Voc{\Gamma', \Delta'} \subseteq \Voc{\Pi',\Sigma'}$;
    \item For all $\calM \in \class$ and interpretations $\seqint$ of $\relat,\Pi;\Pi'\seqar \Sigma;\Sigma'$ into $\calM$:
    \begin{align*} 
    \calM\seqint\models\relat^*,\Gamma \seqar \Delta \quad &\Longrightarrow \quad \calM\seqint\models\relat,\Pi\seqar\Sigma, \text{ and }
        \\
        \calM\seqint\models\relat^*,\Gamma' \seqar \Delta' \quad &\Longrightarrow \quad \calM\seqint\models\relat,\Pi'\seqar\Sigma'.
    \end{align*}
\end{enumerate}
Then the following interpolant transformation works for the~LCIP:
\[
\vlinf{}{}{\relat,\Pi;\Pi'\xseqar{\mho}\Sigma;\Sigma'}{\relat^*,\Gamma;\Gamma' \xseqar{\mho} \Delta;\Delta'}\  ,
\]
i.e., if\/ $\mho$ is a labelled Craig interpolant for $\relat^*,\Gamma;\Gamma' \seqar \Delta;\Delta'$, then\/ $\mho$ is also a labelled Craig interpolant for $\relat,\Pi;\Pi'\seqar\Sigma;\Sigma'$.
\end{lemma}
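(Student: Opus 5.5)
The plan is to verify the three conditions of \cref{def:msint} for $\mho$ directly, taking as given that $\mho$ satisfies them for the premise $\relat^*,\Gamma;\Gamma'\seqar\Delta;\Delta'$. No induction is needed: this is a one-step transfer argument, and each condition follows from the matching hypothesis (i)--(iii).

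First, the variable condition. By assumption $\Voc{\mho}\subseteq\Voc{\Gamma,\Delta}\cap\Voc{\Gamma',\Delta'}$. Hypothesis~\eqref{eq:local_craig} then gives $\Voc{\mho}\subseteq\Voc{\Pi,\Sigma}\cap\Voc{\Pi',\Sigma'}$ immediately.

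Second, the label condition. $\Lab(\mho)$ is contained in the labels of the premise sequent, and by hypothesis~(i) these are among the labels of the conclusion $\relat,\Pi,\Pi'\seqar\Sigma,\Sigma'$. This also ensures that every interpretation of the conclusion interprets all labels of $\mho$ and of the premise.

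Third, the semantic condition, which is the only place needing care. Fix $\calM\in\class$ and an interpretation $\seqint$ of the conclusion. By the label inclusion, $\seqint$ is defined on all labels of the premise. The key subtlety is that $\seqint$ need not be an interpretation of the premise in the sense of \cref{def:mult_val}, because $\relat^*$ may contain relational atoms not in $\relat$ that $\seqint$ fails to respect. However, we can still evaluate $\relat^*,\Gamma\seqar\Delta$ under $\seqint$, reading the relational atoms as conditions, and this is exactly the reading hypothesis~(iii) presupposes.

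So the step to settle is whether the premise interpolant property can be applied to such a $\seqint$. I would handle it by cases:
\begin{itemize}
\item If $\seqint$ respects $\relat^*$, it is a genuine interpretation of the premise. Then $\calM\seqint\not\models\mho$ gives $\calM\seqint\models\relat^*,\Gamma\seqar\Delta$, and (iii) yields $\calM\seqint\models\relat,\Pi\seqar\Sigma$. Symmetrically, $\calM\seqint\models\mho$ gives the right split of the premise and hence $\calM\seqint\models\relat,\Pi'\seqar\Sigma'$ via (iii).
\item If $\seqint$ violates some atom of $\relat^*$, then under the sequent reading both $\relat^*,\Gamma\seqar\Delta$ and $\relat^*,\Gamma'\seqar\Delta'$ are vacuously true. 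Hypothesis~(iii) then gives both sides of the conclusion outright, so conditions~(a) and~(b) hold regardless of $\mho$.
\end{itemize}

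I expect this interpretation bookkeeping to be the main obstacle. The intended reading of (iii) must be fixed precisely, namely that truth of a sequent with relational part under $\seqint$ counts as vacuous when $\seqint$ violates a relational atom. With that convention made explicit, the remaining steps are immediate.
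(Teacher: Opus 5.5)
Your proposal is correct: it is the direct transfer of the three conditions of \cref{def:msint} that the lemma calls for. The chapter gives no proof of this lemma and defers to \cite{Kuznets18}, but its proof of \cref{lem:diamond_gen} is a check of the same kind.

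Your case split on interpretations of the conclusion that violate an atom of $\relat^*$ addresses a point the statement leaves implicit, and your vacuous-truth convention is the right one. If violated atoms were instead ignored or made the premise false, then (i)--(iii) would hold for passing from $1R2,\labe{1}{\Box p};\,\seqar\,;\labe{2}{p}$ to the invalid $\labe{1}{\Box p};\,\seqar\,;\labe{2}{p}$. In all of the chapter's applications (the propositional rules, $(\Ll\lr\Box)$, and Horn-clause rules over frames satisfying the clause) only your first case can occur.
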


\begin{lemma}[Conjunctive split rules, \cite{Kuznets18}]
\label{lem:conjunct_gen}
Let $\class$ be a class of Kripke models. Assume:
\begin{enumerate}[(i)]
\item\label{eq:conjunct_craig_labels}
$\Lab(\relat_a,\Gamma_a;\Gamma_a'\seqar \Delta_a;\Delta_a') \subseteq \Lab(\relat,\Pi;\Pi'\seqar \Sigma;\Sigma')$ 
\ \text{for both $a \in\{1,2\}$};
\item\label{eq:conjunct_craig}
$\Voc{\Gamma_a, \Delta_a} \subseteq \Voc{\Pi,\Sigma}\quad\text{and}\quad \Voc{\Gamma_a', \Delta_a'} \subseteq \Voc{\Pi',\Sigma'}\  \text{for both $a \in\{1,2\}$}$;
    \item For all $\calM \in \class$ and interpretations $\seqint$ of $\relat,\Pi;\Pi'\seqar \Sigma;\Sigma'$ into $\calM$:
    \begin{align*} 
    \calM\seqint\models\relat_1,\Gamma_1 \seqar \Delta_1  \quad &\Longrightarrow \quad \calM\seqint\models\relat,\Pi\seqar\Sigma, 
        \\
        \calM\seqint\models\relat_2,\Gamma_2 \seqar \Delta_2 \quad &\Longrightarrow \quad \calM\seqint\models\relat,\Pi\seqar\Sigma, \text{ and } \\
   \calM\seqint\models\relat_1,\Gamma_1' \seqar \Delta_1' \ , \ \calM\seqint\models\relat_2,\Gamma_2' \seqar \Delta_2'
    \quad &\Longrightarrow \quad\calM\seqint\models\relat,\Pi'\seqar\Sigma'.
    \end{align*}
\end{enumerate}
Then the following interpolant transformation works for the~LCIP:
\[
{\small
\vliinf{}{}{\relat,\Pi;\Pi'\xseqar{\mho_1 \spconj\mho_2}\Sigma;\Sigma'}{\relat_1,\Gamma_1;\Gamma_1' \xseqar{\mho_1} \Delta_1;\Delta_1'}{\relat_2,\Gamma_2;\Gamma_2' \xseqar{\mho_2} \Delta_2;\Delta_2'}}\ .
\]
\end{lemma}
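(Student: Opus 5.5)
We verify the three conditions of the labelled Craig interpolation definition directly for $\mho_1 \spconj \mho_2$. We assume $\mho_a$ is a labelled Craig interpolant for the premise $\relat_a,\Gamma_a;\Gamma_a'\seqar\Delta_a;\Delta_a'$ for $a\in\{1,2\}$, and show that $\mho_1\spconj\mho_2$ is one for the conclusion $\relat,\Pi;\Pi'\seqar\Sigma;\Sigma'$. The label and vocabulary conditions are bookkeeping, and the semantic condition is checked by a case split on which conjunct fails.

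\emph{Labels.} We have $\Lab(\mho_1\spconj\mho_2)=\Lab(\mho_1)\cup\Lab(\mho_2)$. Each $\Lab(\mho_a)$ lies within the labels of premise~$a$, which by assumption~\ref{eq:conjunct_craig_labels} lie within the labels of the conclusion. This gives the label condition.

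\emph{Vocabulary.} We have $\Voc{\mho_1\spconj\mho_2}=\Voc{\mho_1}\cup\Voc{\mho_2}$. Each $\Voc{\mho_a}$ is contained in $\Voc{\Gamma_a,\Delta_a}\cap\Voc{\Gamma_a',\Delta_a'}$, hence in $\Voc{\Pi,\Sigma}\cap\Voc{\Pi',\Sigma'}$ by assumption~\ref{eq:conjunct_craig}.

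\emph{Semantic condition.} Fix $\calM\in\class$ and an interpretation $\seqint$ of the conclusion into~$\calM$. By the label inclusion, $\seqint$ is defined on all labels of each premise. Before using the induction hypothesis, we must check that $\seqint$ respects the relational atoms of $\relat_a$; this is the only step that needs care. When a premise adds a relational atom that is not in~$\relat$ (as in rules like $(\Ll\rr\Box)$), $\seqint$ need not satisfy it. We handle this by noting that if $\labint{i}R\labint{j}$ fails for some $iRj\in\relat_a$, then the left side $\relat_a,\Gamma_a\seqar\Delta_a$ and the right side $\relat_a,\Gamma_a'\seqar\Delta_a'$ are both vacuously true under the natural reading of assumption~(iii). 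So the required implications still follow from assumption~(iii) as stated, which quantifies over interpretations of the conclusion. Alternatively, we can read the interpolant property of each premise as applying to every map on its labels, with sequent truth read as ``relational part fails or formula part holds''. Either way, each $\mho_a$ satisfies: if $\calM\seqint\not\models\mho_a$ then $\calM\seqint\models\relat_a,\Gamma_a\seqar\Delta_a$, and if $\calM\seqint\models\mho_a$ then $\calM\seqint\models\relat_a,\Gamma_a'\seqar\Delta_a'$.

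Now we split into cases.
\begin{itemize}
\item If $\calM\seqint\not\models\mho_1\spconj\mho_2$, then some $\mho_a$ fails. So $\calM\seqint\models\relat_a,\Gamma_a\seqar\Delta_a$, and the first or second implication of assumption~(iii) yields $\calM\seqint\models\relat,\Pi\seqar\Sigma$.
\item If $\calM\seqint\models\mho_1\spconj\mho_2$, then both $\mho_1$ and $\mho_2$ hold. So both right premises are true, and the third implication of assumption~(iii) gives $\calM\seqint\models\relat,\Pi'\seqar\Sigma'$.
\end{itemize}

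The main obstacle is the relational-atom subtlety above: the premise interpretation is the restriction of $\seqint$, which need not satisfy~$\relat_a$. Everything else is immediate. I would settle that point by fixing the convention on interpretations once, in the way just described.
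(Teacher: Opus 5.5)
Your proof is correct and takes essentially the route the paper would: the paper gives no proof of this lemma (it defers to \cite{Kuznets18}), but its proof of \cref{lem:diamond_gen} is the same case split on whether the new interpolant holds, combining the premise interpolants with the assumed implications, with labels and vocabulary handled as bookkeeping. Your convention that a premise side counts as vacuously true when $\seqint$ violates an atom of $\relat_a$ is what makes assumption~(iii) meaningful; the paper leaves this implicit because its binary rules keep $\relat$ unchanged, and if such a premise side instead counted as false, the lemma could fail when $\relat_a\not\subseteq\relat$.
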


\begin{lemma}[Disjunctive split rules, \cite{Kuznets18}]
\label{lem:disjunct_gen}
Let $\class$ be a class of Kripke models. Assume   \eqref{eq:conjunct_craig_labels}~and~\eqref{eq:conjunct_craig} from \cref{lem:conjunct_gen}
hold and assume:
\begin{enumerate}[(i)]
\setcounter{enumi}{2}
\item  
For all $\calM \in \class$ and interpretations $\seqint$ of $\relat,\Pi;\Pi'\seqar \Sigma;\Sigma'$ into $\calM$:
    \begin{align*} 
    \calM\seqint\models\relat_1,\Gamma_1 \seqar \Delta_1  \ , \ \calM\seqint\models\relat_2,\Gamma_2 \seqar \Delta_2  
    \quad &\Longrightarrow \quad \calM\seqint\models\relat,\Pi\seqar\Sigma, \\
\calM\seqint\models\relat_1,\Gamma'_1 \seqar \Delta'_1  \quad &\Longrightarrow \quad \calM\seqint\models\relat,\Pi'\seqar\Sigma', \text{ and }
        \\
    \calM\seqint\models\relat_2,\Gamma'_2 \seqar \Delta'_2     \quad &\Longrightarrow \quad \calM\seqint\models\relat,\Pi'\seqar\Sigma'.
    \end{align*}
\end{enumerate}
Then the following interpolant transformation works for the~LCIP:
\[
{\small
\vliinf{}{}{\relat,\Pi;\Pi'\xseqar{\mho_1 \spdisj\mho_2}\Sigma;\Sigma'}{\relat_1,\Gamma_1;\Gamma_1' \xseqar{\mho_1} \Delta_1;\Delta_1'}{\relat_2,\Gamma_2;\Gamma_2' \xseqar{\mho_2} \Delta_2;\Delta_2'}} \ .
\]
\end{lemma}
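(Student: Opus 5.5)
The plan is to verify the three conditions of \cref{def:msint} for $\mho_1 \spdisj \mho_2$ directly. We assume that $\mho_a$ is a labelled Craig interpolant for the $a$-th premise, for $a \in \{1,2\}$. This is the labelled counterpart of the $L(\lr\vee)$ and $L(\lr\impl)$ cases in \cref{fig:interpolation}, and it parallels \cref{lem:conjunct_gen} with the roles of the two sides swapped.

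First, the two syntactic conditions are immediate. By definition $\Voc{\mho_1\spdisj\mho_2}=\Voc{\mho_1}\cup\Voc{\mho_2}$. Each $\Voc{\mho_a}$ is contained in $\Voc{\Gamma_a,\Delta_a}\cap\Voc{\Gamma_a',\Delta_a'}$, and by assumption~\eqref{eq:conjunct_craig} this set is contained in $\Voc{\Pi,\Sigma}\cap\Voc{\Pi',\Sigma'}$. The label condition follows the same way from~\eqref{eq:conjunct_craig_labels}, since $\Lab(\mho_a)$ lies within the labels of the $a$-th premise.

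Next comes the semantic condition. Fix $\calM\in\class$ and an interpretation $\seqint$ of the conclusion $\relat,\Pi;\Pi'\seqar\Sigma;\Sigma'$. By~\eqref{eq:conjunct_craig_labels}, $\seqint$ is defined on all labels of each premise. Premises may contain relational atoms not present in $\relat$ (for instance the fresh $iRj$ in $(\Ll\rr\Box)$-like rules). If we restrict attention to premises that are also interpretations, the relational atoms of $\relat_a$ might fail for $\seqint$. In that case the sequent $\relat_a,\ldots$ is vacuously classified, and I take Definition~\ref{def:mult_val}'s notion of truth "for $\seqint$" to apply whenever $\seqint$ interprets the labels, which is exactly how hypothesis~(iii) is phrased. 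This is the point I expect to be the subtle one: one must make sure that the interpolant property of $\mho_a$ may be invoked at $\seqint$. I would handle it by reading condition~\ref{cond:lab_seq_main} of \cref{def:msint} for the premise as applying to every $\seqint$ for which hypothesis~(iii) is stated, i.e., the same convention as in \cref{lem:local_gen,lem:conjunct_gen}.

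The case analysis then runs as follows. For~\ref{cond:lab_seq_maina}, suppose $\calM\seqint\not\models\mho_1\spdisj\mho_2$. Then $\calM\seqint\not\models\mho_1$ and $\calM\seqint\not\models\mho_2$. By the interpolant property of each premise, $\calM\seqint\models\relat_a,\Gamma_a\seqar\Delta_a$ for both $a$, so the first implication of~(iii) gives $\calM\seqint\models\relat,\Pi\seqar\Sigma$. For~\ref{cond:lab_seq_mainb}, suppose $\calM\seqint\models\mho_1\spdisj\mho_2$. Then $\calM\seqint\models\mho_a$ for some $a$, hence $\calM\seqint\models\relat_a,\Gamma_a'\seqar\Delta_a'$. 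The corresponding second or third implication of~(iii) then yields $\calM\seqint\models\relat,\Pi'\seqar\Sigma'$. This completes the verification.
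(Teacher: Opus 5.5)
Your proposal is correct and is the expected direct check of the three conditions of \cref{def:msint}; the paper defers this lemma to \cite{Kuznets18}, but it argues the $\Diamond$-like case in exactly this style, with part~(a) using both premise interpolants and the first implication of~(iii), and part~(b) using whichever of the other two implications matches the disjunct that holds. The relational-atom issue needs no special re-reading of \cref{def:msint}: (iii) already implicitly extends truth so that a violated relational atom makes a labelled sequent true, so any $\seqint$ violating an atom of $\relat_a$ makes both $\relat_a,\Gamma_a\seqar\Delta_a$ and $\relat_a,\Gamma'_a\seqar\Delta'_a$ vacuously true and the interpolant property of $\mho_a$ holds at $\seqint$ trivially; also, your $(\Ll\rr\Box)$ example is off-target, since~(i) rules out fresh labels in the premises.
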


\normalsize

The  transformations in \cref{lem:local_gen,lem:conjunct_gen,lem:disjunct_gen} also work for the~LLIP if assumptions~\eqref{eq:local_craig} are strengthened to satisfy \cref{def:lablyndint}($1'$), taking polarity into account (for details see~\cite{Kuznets18}).\looseness=-1

Before we provide examples of labelled local, conjunctive, and disjunctive split rules, we note that these concepts can already be observed in the unlabelled case.

\begin{example}[The unlabelled case]
\label{ex:unlabelled-modal-rules}
Recall from \cref{rem:seq_as_lab_seq} that an ordinary sequent can be viewed as a labelled sequent with a single label. Applying the conditions of \cref{lem:local_gen,lem:conjunct_gen,lem:disjunct_gen} to rules in \cref{fig:interpolation}, we see that   $R(\lr\wedge^i)$, $R(\rr\impl)$, and $L(\rr\cntr)$ are local split rules, $L(\lr\vee)$, $L(\rr\wedge)$, and $L(\lr\impl)$ are disjunctive split rules, while $R(\lr\vee)$, $R(\rr\wedge)$, and $R(\lr\impl)$ are conjunctive split rules. Note the symmetry that often occurs for the Left/Right splits of a rule vs.~disjunctive/conjunctive behavior of the interpolant construction. A similar pattern can be seen in the interpolant transformation for the analytic cut rule in the proof of \cref{thm:interpolationSvijf}.

For modal unlabelled rules, consider \cref{fig:modal-rules}. Any split of rule~$(\Tseqrule)$  is local w.r.t.~any class~$\class$ of reflexive models: if $\phi$ is false in a reflexive world, then so is $\Box \phi$. This explains the interpolant preservation for $L(\Tseqrule)$ in \cref{fig:modal_Craig_transformations}.
By contrast,  the other rules in \cref{fig:modal-rules} are not local, hence, require non-trivial transformations in \cref{fig:modal_Craig_transformations}. \looseness=-1 \lipicsEnd
\end{example}

Let us now consider the labelled analog of \cref{ex:unlabelled-modal-rules}.

\begin{example}[The labelled case]
For the propositional rules in \cref{fig:lab_interpolation}, we can identify local, conjunctive, and disjunctive split rules  according to essentially the same argument as in \cref{ex:unlabelled-modal-rules}. For modal labelled rules, the argument is different from the unlabelled case. The standard split version of~$(\Ll\lr\Box)$ from \cref{eq:label_modal_rules} is local (w.r.t.~any class) because $\calM, \labint{j} \not\models \phi$ and $\labint{i}R\labint{j}$ imply $\calM, \labint{i} \not\models \Box \phi$. Consequently, the following interpolant transformations work for the~LCIP/LLIP w.r.t.~any class of models:
\[
{\small
\vlinf{L(\Ll\lr\Box)}{}{i R j,\relat,  \labe{i}{\Box\phi},\Gamma;\Gamma' \xseqar{\mho}\Delta;\Delta'}{i R j,\relat,  \labe{j}{\phi},\labe{i}{\Box\phi},\Gamma;\Gamma'\xseqar{\mho}\Delta;\Delta'}
\qquad
\vlinf{R(\Ll\lr\Box)}{}{i R j,\relat,  \Gamma;\Gamma', \labe{i}{\Box\phi}\xseqar{\mho}\Delta;\Delta'}{i R j,\relat,  \Gamma;\Gamma',\labe{i}{\Box\phi},\labe{j}{\phi}\xseqar{\mho}\Delta;\Delta'}
}
\]
However, the left and right split versions of $(\Ll\rr \Box)$ cannot be identified as  local, conjunctive, or disjunctive and need different treatments as discussed in \cref{Ex: box diamond}. \lipicsEnd
\end{example}

In addition, many of the unary special labelled rules are local, hence, require no change to the interpolant. A large class of special rules is obtained by so-called Horn clauses. Restricting a class of frames by a Horn clause of the form  $i_1 R j_1 \wedge \dots \wedge i_m R j_m \impl k R l$  corresponds to adding to the labelled calculus the rule
\begin{equation}
\label{eq:Horn_rule}
{\small
\vlinf{}{\mathsmaller{(k,l \in i_1Rj_1, \dots, i_mRj_m, \relat, \Gamma \seqar \Delta)}}{i_1Rj_1, \dots, i_mRj_m, \relat,\Gamma \seqar \Delta}{kRl, i_1Rj_1, \dots, i_mRj_m, \relat, \Gamma \seqar \Delta}
}
\end{equation}
along with the
rules obtained from it by the \emph{closure condition}, i.e., by  contracting identical relational atoms $i_aRj_a$ and $i_bRj_b$  from the conclusion in both premise and conclusion for those instances of the rule that contain such identical atoms~\cite[Th.~11.27, Cor.~11.29]{NegrivP11}. Note that in the absence of a principal formula, such rules only have one corresponding split rule (instead of a left and right split version).

\begin{example}[Horn frame conditions, \cite{Kuznets16JELIA}]
\label{lem:horn}
For any split of~\eqref{eq:Horn_rule} or  its contractions, the following interpolant transformation works for the~LCIP/LLIP w.r.t.~any class of models satisfying the condition $i_1 R j_1 \wedge \dots \wedge i_m R j_m \impl k R l$:
\[
{\small
 \vlinf{}{}{i_1Rj_1, \dots, i_mRj_m, \relat,\Gamma;\Gamma' \xseqar{\mho} \Delta;\Delta'}{kRl, i_1Rj_1, \dots, i_mRj_m, \relat, \Gamma;\Gamma' \xseqar{\mho} \Delta;\Delta'}
 }
\]
Consequently, the split version of rule $(\LlB)$ for any class of symmetric frames and of  rule~$(\Ll\Fourseqrule)$ for any class of transitive frames from \cref{eq:label_special_rules} are local and, therefore, the above interpolant transformation works for the~LCIP/LLIP. \lipicsEnd
\end{example}

Two  rules from \cref{eq:label_modal_rules,eq:label_special_rules} do not fit any of the patterns described so far: the  premise  of both $(\Ll\rr\Box)$ and $(\Ll\Dseqrule)$ has a fresh label not occurring in the conclusion. Such rules require a more complex transformation to remove the extra label from premise interpolants. For that, we introduce additional notation for modifying interpretations.

\begin{definition}
Let $\calM=(W,R,V)$ be a Kripke model and $\seqint\colon X \to W$ map labels from \mbox{$X \subseteq \posint$} to worlds of $\calM$. For  $v \in W$ and $j \in \posint$, we define
$
\seqint_{j}^{v} \ce \bigl(\seqint \setminus (\{j\} \times W)\bigr)\sqcup \{(j,v)\}
$
where $\sqcup$ is used to emphasize that this is the union of disjoint sets. 
\end{definition}

The domain of $\seqint_{j}^{v}$ is $X \cup \{ j\}$. Note that the only possible difference between $\seqint_{j}^{v}$ and  $\seqint$ is that the former maps $j$  to $v$. Note also that if $j\notin\Lab(S)$ for a labelled sequent $S$, then $\seqint$~is an interpretation for $S$ if{f} $\seqint_{j}^{v}$ is and, if both are,  $\calM\seqint\models S$ if{f} $\calM\seqint_{j}^{v}\models S$ for any $v \in W$. The same applies to multiformulas.

\begin{definition}
Let $\calM=(W,R,V)$ be a Kripke model and $\seqint\colon X \to W$ be an interpretation of a sequent $S$ into $\calM$. For $i \in X$ and  $j \ne i$, we write
\begin{align*}
\calM\seqint_{j}^{i R \forall} \models S
& \qquad \text{ whenever } \qquad
\forall v \in W \text{ such that } \labint{i}Rv, 
\text{ we have }\calM\seqint_{j}^{v} \models S
\\
\calM\seqint_{j}^{i R \exists} \models S
& \qquad \text{ whenever } \qquad
\exists v \in W \text{ such that } \labint{i}Rv \text{ and } \calM\seqint_{j}^{v} \models S.
\end{align*}
\end{definition}

\begin{definition}
A label $j$ is  \emph{$\spconj\spdisj$-separated} (resp.~\emph{$\spdisj\spconj$-separated}) in a multiformula $\mho$  if{f} for some formulas $B_1,\dots,B_m$ and multiformulas $\mho_1,\dots,\mho_m$ with  $j \notin \Lab(\mho_k)$ for any $1 \leq k \leq m$, we have $\mho = \bigspconj_{k=1}^m (\labe{j}{B_k} \spdisj \mho_k)$ (resp.~$\mho = \bigspdisj_{k=1}^m (\labe{j}{B_k} \spconj \mho_k)$).
\end{definition}
\begin{proposition}
\label{prop:label_norm_form}
Let $j \in \posint$. For any multiformula $\mho$, there is a logically equivalent multiformula $\mho'$ such that $j$~is $\spconj\spdisj$-separated  (resp.~$\spdisj\spconj$-separated) in~$\mho'$. 
\end{proposition}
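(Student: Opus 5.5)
The plan is to treat multiformulas as propositional formulas over a set of ``atoms'' and apply the usual conjunctive/disjunctive normal form argument. Concretely, fix $j$ and call every labelled formula $\labe{i}{\phi}$ occurring in $\mho$ a multiatom. Since $\spconj$ and $\spdisj$ are interpreted classically (for a fixed model and interpretation, $\calM\seqint\models\cdot$ is a Boolean valuation of the multiatoms), any rewriting that is valid in classical propositional logic with only $\wedge,\vee$ preserves logical equivalence of multiformulas. In particular, the multiformula connectives are associative, commutative, idempotent, and distribute over each other.

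For the $\spconj\spdisj$ case, first I will use distributivity to rewrite $\mho$ into conjunctive normal form: $\mho \equiv \bigspconj_{k=1}^m \bigspdisj_{l} \mho_{k,l}$, where each $\mho_{k,l}$ is a multiatom. This is a routine induction on $\mho$. For a multiatom the form is trivial. For $\spconj$, concatenate the two conjunctions. For $\spdisj$, distribute: $(\bigspconj_k C_k)\spdisj(\bigspconj_{k'} D_{k'}) \equiv \bigspconj_{k,k'}(C_k\spdisj D_{k'})$.

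Next, in each clause I collect the multiatoms with label $j$. If these are $\labe{j}{\phi_1},\dots,\labe{j}{\phi_r}$, I replace them by the single multiatom $\labe{j}{B_k}$ with $B_k\ce\phi_1\vee\dots\vee\phi_r$. This is sound because $\calM\seqint\models\labe{j}{\phi_1\vee\phi_2}$ iff $\calM\seqint\models\labe{j}{\phi_1}\spdisj\labe{j}{\phi_2}$, by the satisfaction clause for $\vee$ at the world $\labint{j}$. If $r=0$, I take $B_k\ce\bot$ and use that $\labe{j}{\bot}$ is never satisfied, so $\labe{j}{\bot}\spdisj\mho_k\equiv\mho_k$.

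The remaining multiatoms of the clause form $\mho_k$, which is free of $j$. The only awkward point is an empty $\mho_k$, since multiformulas have no empty disjunction. In that case I choose any label $i\neq j$ and set $\mho_k\ce\labe{i}{\bot}$, which again changes nothing semantically. A caveat is that this may add a label; if condition~\ref{cond:lab_seq_lab} of \cref{def:msint} matters later, I would pick $i$ among the labels already present, and otherwise note the harmless choice.

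The $\spdisj\spconj$ case is dual. I use the disjunctive normal form, merge the $j$-atoms in each conjunct via $\wedge$, and use $\top$ as the neutral element for missing parts.

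The only mildly delicate step is the treatment of empty or missing components. It is handled by the neutral multiatoms $\labe{\cdot}{\bot}$ and $\labe{\cdot}{\top}$. Beyond that, the equivalences need only be checked per interpretation, which makes them immediate.
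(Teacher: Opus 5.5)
Your proposal is correct and follows the paper's own proof. Both rewrite into CNF (resp.\ DNF), merge the $j$-atoms of each clause via $\labe{j}{\phi}\spdisj\labe{j}{\psi}\equiv\labe{j}{\phi\vee\psi}$ (resp.\ $\spconj$ and $\wedge$), and pad with $\labe{j}{\bot}$ (resp.\ $\labe{j}{\top}$) when a clause has no $j$-atom. The paper leaves implicit your extra step of filling an empty $j$-free part with $\labe{i}{\bot}$ for some $i\neq j$, though its worked example uses exactly such a filler, $\labe{1}{\bot}$.
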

\begin{proof}
$\mho'$ can be constructed by rewritings similar to the ones used for CNF (resp.~DNF), in view of logical equivalences of $\labe{j}{\phi}\spconj\labe{j}{\psi}$  to $\labe{j}{\phi \wedge\psi}$ and of $\labe{j}{\phi}\spdisj\labe{j}{\psi}$  to $\labe{j}{\phi \vee\psi}$. If $\labe{j}{B_k}$ is absent for some $k$, either $\labe{j}{\top}\spconj$ or $\labe{j}{\bot}\spdisj$ is added, whichever is logically equivalent.
\end{proof}

\Cref{prop:label_norm_form} does not guarantee the uniqueness of $\mho'$, but given $\mho$ we know there is at least one logically equivalent multiformula $\mho'$ with $j$ $\spconj\spdisj$-separated  (resp.~$\spdisj\spconj$-separated) in it. From now on, given a multiformula $\mho$, we allow ourselves to pick one such $\mho'$ and denote it by $\mho_{\spconj j \spdisj}$ (resp.~$\mho_{\spdisj j \spconj}$). For a multiformula $\mho$, we further write $\mho^{j 
\mapsto i\Box}$ (resp.~$\mho^{j 
\mapsto i\Diamond}$) to mean the result of replacing each occurrence of $\labe{j}{}$ in $\mho$ with $\labe{i}{\Box}$ (resp.~with~$\labe{i}{\Diamond}$). 

\begin{example}
\label{ex:mod_trans}
Suppose $\mho_{\spconj j \spdisj} = \bigspconj_{k=1}^m (\labe{j}{B_k} \spdisj \mho_k)$ (resp.~$\mho_{\spdisj j \spconj} = \bigspdisj_{k=1}^m (\labe{j}{B_k} \spconj \mho_k)$) with $j \notin \Lab(\mho_k)$ for any $1 \leq k \leq m$. Then, 
\begin{align*}
\mho_{\spconj j\spdisj}^{j 
\mapsto i\Box} = \bigspconj_{k=1}^m (\labe{i}{
\Box B_k} \spdisj \mho_k)
\qquad
\big(\text{respectively}
\qquad
\mho_{\spdisj j\spconj}^{j 
\mapsto i\Diamond} = \bigspdisj_{k=1}^m (\labe{i}{
\Diamond B_k} \spconj \mho_k) \big).
\end{align*}
Clearly,  $j \notin\Lab(\mho_{\spconj j\spdisj}^{j 
\mapsto i\Box})$ (resp.~$j \notin\Lab(\mho_{\spdisj j\spconj}^{j 
\mapsto i\Diamond})$). \lipicsEnd
\end{example}

The general transformation applicable to rules $(\Ll\rr\Box)$ and $(\Ll\Dseqrule)$ additionally exploits the fact that the fresh label $j$ is only connected to one existing label $i$ by one relational atom $iRj$.
\begin{lemma}[$\Box$-like rules, \cite{Kuznets18}]
\label{lem:box_gen}
Let $\class$ be a class of Kripke models. Assume:
\begin{enumerate}[(i)]
\item\label{eq:box_craig_labels_zero}
$\Voc{\Gamma, \Delta} \subseteq \Voc{\Pi,\Sigma}\quad\text{and}\quad\Voc{\Gamma', \Delta'} \subseteq \Voc{\Pi',\Sigma'}$,
\item\label{eq:box_craig_labels_one}
$j \notin \Lab(\relat,\Pi;\Pi'\seqar \Sigma;\Sigma')
\quad\text{and}\quad 
\{i\} \cup \Lab(\relat^*) \subseteq \Lab(\relat,\Pi;\Pi'\seqar \Sigma;\Sigma')$, 
\item\label{eq:box_craig_labels_two}
$\Lab(\Gamma;\Gamma'\seqar \Delta;\Delta') \subseteq \{j\} \sqcup \Lab(\relat,\Pi;\Pi'\seqar \Sigma;\Sigma')$,
\item 
For all $\calM \in \class$ and interpretations $\seqint$ of $\relat,\Pi;\Pi'\seqar \Sigma;\Sigma'$ into $\calM$:
    \begin{align*} 
\calM\seqint^{iR\exists}_j\models iRj,\relat^*,\Gamma \seqar \Delta \quad &\Longrightarrow \quad \calM\seqint\models\relat,\Pi\seqar\Sigma, \ \text{and} \\
\calM\seqint^{iR\forall}_j\models iRj,\relat^*,\Gamma' \seqar \Delta' \quad & \Longrightarrow \quad \calM\seqint\models\relat,\Pi'\seqar\Sigma'.
\end{align*}
\end{enumerate}
Then the following interpolant transformation works for the~LCIP:
\[
\vlinf{}{}{\relat,\Pi;\Pi'\xseqar{\mho_{\spconj j \spdisj}^{j \mapsto \Box i}}\Sigma;\Sigma'}{iRj,\relat^*,\Gamma;\Gamma' \xseqar{\mho} \Delta;\Delta'} \ .
\]  
\end{lemma}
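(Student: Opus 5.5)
Let $\mho$ be a labelled Craig interpolant for the premise $iRj,\relat^*,\Gamma;\Gamma'\seqar\Delta;\Delta'$, and write $\mho_{\spconj j\spdisj}=\bigspconj_{k=1}^m(\labe{j}{B_k}\spdisj\mho_k)$ with $j\notin\Lab(\mho_k)$, as provided by \cref{prop:label_norm_form}. This form is logically equivalent to $\mho$, so it is still an interpolant for the premise. Set $\mho^\sharp\ce\bigspconj_{k=1}^m(\labe{i}{\Box B_k}\spdisj\mho_k)$, as in \cref{ex:mod_trans}. We must check the three conditions of \cref{def:msint} for $\mho^\sharp$ and the conclusion $\relat,\Pi;\Pi'\seqar\Sigma;\Sigma'$.

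\emph{Conditions 1 and 3.} The vocabulary condition is immediate. We have $\Voc{\mho^\sharp}=\Voc{\mho_{\spconj j\spdisj}}$, and the CNF-style rewriting can be chosen to introduce no new atoms (only $\top,\bot$ are added). Hence $\Voc{\mho^\sharp}\subseteq\Voc{\Gamma,\Delta}\cap\Voc{\Gamma',\Delta'}\subseteq\Voc{\Pi,\Sigma}\cap\Voc{\Pi',\Sigma'}$ by assumption~\eqref{eq:box_craig_labels_zero}. For the labels, $\Lab(\mho)$ is contained in the labels of the premise. By \eqref{eq:box_craig_labels_one} and \eqref{eq:box_craig_labels_two}, these are included in $\{j\}\sqcup\Lab(\relat,\Pi;\Pi'\seqar\Sigma;\Sigma')$. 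The replacement removes $j$ and introduces only $i$, which lies in the conclusion by \eqref{eq:box_craig_labels_one}.

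\emph{Condition 2.} Fix $\calM\in\class$ and an interpretation $\seqint$ of the conclusion. For every $v$ with $\labint{i}Rv$, the map $\seqint^v_j$ interprets the premise. Indeed, $j$ is fresh, $iRj$ is satisfied, and $\relat^*$ is satisfied because its atoms are in $\relat$ or otherwise hold; I expect the rule-specific hypothesis to cover this, since $\relat^*$ is simply the remainder of the relational part. The key observation is
\[
\calM\seqint\models\mho^\sharp \iff \calM\seqint^{iR\forall}_j\models\mho_{\spconj j\spdisj},
\]
for which the $\spconj\spdisj$-separation is essential. Since $j\notin\Lab(\mho_k)$, the value of $\mho_k$ does not depend on $v$. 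Hence ``for all $v$ with $\labint{i}Rv$, $\bigwedge_k(B_k(v)\vee\mho_k)$'' is equivalent to ``$\bigwedge_k(\forall v\,B_k(v)\vee\mho_k)$'', that is, to $\bigwedge_k(\Box B_k@\labint{i}\vee\mho_k)$. A universal quantifier commutes with conjunction, and with disjunction against a constant.

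\emph{Deriving (a) and (b).} Suppose $\calM\seqint\models\mho^\sharp$. Then for every successor $v$ we have $\calM\seqint^v_j\models\mho$, so the premise interpolant property (b) gives $\calM\seqint^v_j\models iRj,\relat^*,\Gamma'\seqar\Delta'$ for all such $v$. This is exactly $\calM\seqint^{iR\forall}_j\models iRj,\relat^*,\Gamma'\seqar\Delta'$, and hypothesis (iv) yields the right side of the conclusion. Now suppose $\calM\seqint\not\models\mho^\sharp$. Then some successor $v$ has $\calM\seqint^v_j\not\models\mho$, so by premise (a) the left premise sequent holds at $\seqint^v_j$. This is $\calM\seqint^{iR\exists}_j\models iRj,\relat^*,\Gamma\seqar\Delta$, and hypothesis (iv) gives the left side of the conclusion.

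The main obstacle is the commuting step. It needs the normal form, since without $j$-separation the universal quantifier does not distribute over $\spdisj$. I would state it as a small auxiliary claim, proved by a direct semantic calculation.
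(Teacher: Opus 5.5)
Your proposal is correct and is essentially the paper's argument. The paper proves only the $\Diamond$-like lemma and calls this one dual; your two directions (for (b), transferring $\mho^\sharp$ to every successor via the $\spconj\spdisj$-separated form; for (a), extracting a single successor $v$ with $\calM\seqint^v_j\not\models\mho$) are exactly that dual, and you also check conditions 1 and 3, which the paper omits.

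The one step you hedge on is that $\seqint$ satisfies $\relat^*$, so that $\seqint^v_j$ really interprets the premise. The paper leaves this equally implicit, and it does not follow from the hypotheses as literally stated. Rather than appealing to an unspecified rule-specific hypothesis, you should do one of two things. Either assume it outright: it holds whenever $\relat^*\subseteq\relat$, e.g.\ for the labelled right $\Box$-rule and the seriality rule, where $\relat^*=\relat$. Or adopt the convention that a map violating a relational atom makes a labelled sequent vacuously true, under which your argument goes through unchanged.
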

\begin{lemma}[$\Diamond$-like rules, \cite{Kuznets18}]
\label{lem:diamond_gen}
Let $\class$ be a class of Kripke models. Assume \eqref{eq:box_craig_labels_zero},\eqref{eq:box_craig_labels_one}, and \eqref{eq:box_craig_labels_two} from \cref{lem:box_gen} hold. Moreover, assume:
\begin{enumerate}[(i)]
\setcounter{enumi}{3}
    \item For all $\calM \in \class$ and interpretations $\seqint$ of $\relat,\Pi;\Pi'\seqar \Sigma;\Sigma'$ into $\calM$:
    \begin{align*}
\calM\seqint^{iR\forall}_j\models iRj,\relat^*,\Gamma \seqar \Delta \quad & \Longrightarrow \quad \calM\seqint\models\relat,\Pi\seqar\Sigma, \ \text{and} \\
\calM\seqint^{iR\exists}_j\models iRj,\relat^*,\Gamma' \seqar \Delta' \quad & \Longrightarrow \quad \calM\seqint\models\relat,\Pi'\seqar\Sigma'.
    \end{align*}
\end{enumerate}
Then the following interpolant transformation works for the~LCIP:
\[
\vlinf{}{}{\relat,\Pi;\Pi'\xseqar{\mho_{\spdisj j \spconj}^{j \mapsto \Diamond i}}\Sigma;\Sigma'}{iRj,\relat^*,\Gamma;\Gamma' \xseqar{\mho} \Delta;\Delta'} \ .
\] 
\end{lemma}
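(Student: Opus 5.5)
We check the three conditions of \cref{def:msint} for the conclusion, with $\mho' \ce \mho_{\spdisj j \spconj}^{j \mapsto \Diamond i}$, assuming they hold for $\mho$ and the premise $iRj,\relat^*,\Gamma;\Gamma'\seqar\Delta;\Delta'$. By \cref{prop:label_norm_form}, fix $\mho_{\spdisj j\spconj} = \bigspdisj_{k=1}^m (\labe{j}{B_k} \spconj \mho_k)$ with $j\notin\Lab(\mho_k)$; it is logically equivalent to $\mho$, so it also satisfies condition~\ref{cond:lab_seq_main} for the premise. The normal-form rewriting only merges or splits labelled formulas and adds $\labe{j}{\top}$ or $\labe{j}{\bot}$, so it adds no atoms. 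As in \cref{ex:mod_trans}, $\mho' = \bigspdisj_{k=1}^m (\labe{i}{\Diamond B_k} \spconj \mho_k)$.

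Conditions~\ref{cond:lab_seq_craig} and~\ref{cond:lab_seq_lab} are direct. Replacing $\labe{j}{B}$ by $\labe{i}{\Diamond B}$ adds no atoms. So $\Voc{\mho'}\subseteq\Voc{\mho}\subseteq \Voc{\Gamma,\Delta}\cap\Voc{\Gamma',\Delta'}$, which by assumption~\eqref{eq:box_craig_labels_zero} lies in $\Voc{\Pi,\Sigma}\cap\Voc{\Pi',\Sigma'}$. For labels, $j$ no longer occurs in $\mho'$, the label $i$ is in the conclusion by~\eqref{eq:box_craig_labels_one}, and every other label of $\mho$ is in the conclusion by~\eqref{eq:box_craig_labels_two}. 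Note that $\mho'$ contains $i$ whenever some $\labe{j}{B_k}$ was rewritten, so it matters that $i$ is in the conclusion.

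The main step is the semantic condition~\ref{cond:lab_seq_main}. The key observation is that, since $j\notin\Lab(\mho_k)$,
\[
\calM\seqint\models\mho' \iff \calM\seqint^{iR\exists}_j\models \mho_{\spdisj j\spconj}.
\]
Both sides say that for some $k$ there is a $v$ with $\labint{i}Rv$, $\calM,v\models B_k$ and $\calM\seqint\models\mho_k$. Here we use that the $\spdisj$ of existentials equals the existential of the $\spdisj$. Also, $\seqint^v_j$ is an interpretation of $iRj,\relat^*,\dots$ whenever $\labint{i}Rv$, using~\eqref{eq:box_craig_labels_one} and~\eqref{eq:box_craig_labels_two}.

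For condition~\ref{cond:lab_seq_maina}, suppose $\calM\seqint\not\models\mho'$. Then for every $R$-successor $v$ of $\labint{i}$, $\calM\seqint^v_j\not\models\mho$, and by the induction hypothesis (a) the left side of the premise is true under $\seqint^v_j$. Hence $\calM\seqint^{iR\forall}_j\models iRj,\relat^*,\Gamma\seqar\Delta$, and assumption~(iv) gives $\calM\seqint\models\relat,\Pi\seqar\Sigma$.

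For condition~\ref{cond:lab_seq_mainb}, suppose $\calM\seqint\models\mho'$. Then there is some $v$ with $\labint{i}Rv$ and $\calM\seqint^v_j\models\mho$. By the induction hypothesis (b), the right side of the premise holds under $\seqint^v_j$, so $\calM\seqint^{iR\exists}_j\models iRj,\relat^*,\Gamma'\seqar\Delta'$, and assumption~(iv) again yields the right side of the conclusion.

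The only subtle point I expect is the bookkeeping in the displayed equivalence. It needs the disjunctive normal form, not the conjunctive one, so that the $\exists$ commutes with $\spdisj$ and only one $\labe{j}{\cdot}$ occurs in each disjunct. The argument is dual to the proof of \cref{lem:box_gen}.
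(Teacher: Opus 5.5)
Your proof is correct and follows the paper's argument: the same case split on whether $\mho_{\spdisj j \spconj}^{j \mapsto \Diamond i}$ holds under $\seqint$, passing to all (resp.\ some) $R$-successors $v$ of $\labint{i}$ via $\seqint^{v}_{j}$ and $j \notin \Lab(\mho_k)$, then using the premise's interpolant property and assumption~(iv); your checks of conditions~\ref{cond:lab_seq_craig} and~\ref{cond:lab_seq_lab}, which the paper omits, are also fine. One small caveat, left equally implicit in the paper: \eqref{eq:box_craig_labels_one} and~\eqref{eq:box_craig_labels_two} only guarantee that $\seqint^{v}_{j}$ is defined on all premise labels, whereas respecting $\relat^*$ (needed for it to be an interpretation of the premise) requires that $\seqint$ respects $\relat^*$, e.g.\ when $\relat^* \subseteq \relat$ as in the intended instances $(\Ll\rr\Box)$ and $(\Ll\Dseqrule)$.
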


The  transformations in \cref{lem:box_gen,lem:diamond_gen} also work for the~LLIP if assumption~\eqref{eq:box_craig_labels_zero} is strengthened to satisfy \cref{def:lablyndint}($1'$), taking polarity into account (for details see~\cite{Kuznets18}).\looseness=-1
\begin{proof}
We treat \cref{lem:diamond_gen} (\cref{lem:box_gen} is dual). Here we only prove property~\ref{cond:lab_seq_main} from \cref{def:msint}. Let $\seqint$ be an interpretation for $\relat,\Pi;\Pi'\seqar \Sigma;\Sigma'$. Suppose that $\calM\seqint\not\models\mho_{\spdisj j \spconj}^{j \mapsto \Diamond i}$,  where multiformulas $\mho_{\spdisj j\spconj}$ and $\mho_{\spdisj j \spconj}^{j \mapsto \Diamond i}$ are as in \cref{ex:mod_trans}.  Then, for each $1\leq k \leq m$, either $\calM, \labint{i} \not \models \Diamond B_k$ or $\calM\seqint \not\models \mho_k$. Thus, for all $v$ such that $\labint{i}Rv$ either $\calM, v \not \models B_k$ or $\calM\seqint_j^{v} \not\models \mho_k$
(since $j \notin \Lab(\mho_k)$). Consequently, $\calM\seqint_j^{v}\not \models\mho_{\spdisj j\spconj}$, i.e., $\calM\seqint_j^{v}\not \models\mho$ for all $v$ such that $\labint{i}Rv$. Given that $\mho$ is an interpolant of the premise,  $\calM\seqint^{iR\forall}_j\models iRj,\relat^*,\Gamma \seqar \Delta$, which yields $\calM\seqint\models\relat,\Pi\seqar\Sigma$ by the first assumed validity.

Suppose now that that $\calM\seqint\models\mho_{\spdisj j \spconj}^{j \mapsto \Diamond i}$.  Then, there is some  $1 \leq k \leq m$ such that  $\calM, \labint{i} \models \Diamond B_k$ and $\calM\seqint \models \mho_k$. From the former, we find a $v$ such that $\labint{i}Rv$, and $\calM, v \models B_k$, and $\calM\seqint_j^{v} \models \mho_k$.
Consequently, $\calM\seqint_j^{v} \models\mho_{\spdisj j\spconj}$, i.e., $\calM\seqint_j^{v} \models\mho$ for this $v$. Given that $\mho$ is an interpolant of the premise, we get $\calM\seqint^{iR\exists}_j\models iRj,\relat^*,\Gamma' \seqar \Delta'$, which yields the requisite $\calM\seqint\models\relat,\Pi'\seqar\Sigma'$ by the second assumed validity.
\end{proof}
\begin{example}[$\Box$- and $\Diamond$-like rules]\label{Ex: box diamond}
The split version of $(\Ll\rr\Box)$ is $\Box$-like (resp.~$\Diamond$-like) w.r.t.~any class of models if $\Box \phi$ is on the right (resp.~left) side of the split. Consequently, the following interpolant transformations work for the~LCIP/LLIP w.r.t.~any class of models: 
\[
\vlinf{L(\Ll\rr\Box)}{}{\relat, \Gamma;\Gamma'\xseqar{\mho_{\spdisj j \spconj}^{j \mapsto \Diamond i}}\labe{i}{\Box\phi},\Delta;\Delta'}{i R j,\relat,\Gamma;\Gamma'\xseqar{\mho}\labe{j}{\phi},\Delta;\Delta'}
\qquad
\vlinf{R(\Ll\rr\Box)}{}{\relat, \Gamma;\Gamma'\xseqar{\mho_{\spconj j \spdisj}^{j \mapsto \Box i}}\Delta;\Delta',\labe{i}{\Box\phi}}{i R j,\relat,\Gamma;\Gamma'\xseqar{\mho}\Delta;\Delta',\labe{j}{\phi}}
\]
where $i$ does  and $j$ does not occur in the conclusion.
Rule~$(\Ll\Dseqrule)$ is interesting because there is no separation into the right and left split variants. Its only split version  is both $\Box$-like and $\Diamond$-like as long as the frame is serial. (The seriality is needed to avoid the vacuous fulfillment of the $\calM\seqint^{iR\forall}_j\models$ parts in \cref{lem:box_gen,lem:diamond_gen}.) Thus, for $i$ occurring  and $j$ not occurring in the conclusion, w.r.t. any class $\class$ of serial frames, both transformations are valid:
\[
\vlinf{(\Ll{}D)}{}{\relat, \Gamma;\Gamma'\xseqar{\mho_{\spdisj j \spconj}^{j \mapsto \Diamond i}}\Delta;\Delta'}{i R j,\relat,\Gamma;\Gamma'\xseqar{\mho}\Delta;\Delta'}
\qquad
\vlinf{(\Ll{}D)}{}{\relat, \Gamma;\Gamma'\xseqar{\mho_{\spconj j \spdisj}^{j \mapsto \Box i}}\Delta;\Delta'}{i R j,\relat,\Gamma;\Gamma'\xseqar{\mho}\Delta;\Delta'} \ . 
\]
Note that the $\Box \theta$ and $\Diamond \theta$ transformations in \cref{fig:modal_Craig_transformations} are the unlabelled manifestations of $\Box$- and $\Diamond$-like rules.\lipicsEnd
\end{example}

\begin{example}
\label{ex:labelled-interpolant}
Let $\phi = \Box\Box(p \wedge \Box q)$ and $\psi=\Box(q \vee r)$. In \cref{fig:lab_ex}, we find an interpolant of $\phi \impl \psi$ using the split labelled calculus for logic $\DB$ (the logic of symmetric serial frames), which has no cut-free sequent calculus.
    \begin{figure}[t]
    \centering
    {$\footnotesize{
\vlderivation{
\vlin{R(\Ll\rr\Box)}{}{\labe{1}{\Box\Box(p\wedge \Box q)}  ; \ \xseqar{\labe{1}{\Box q
    } \spdisj \labe{1}{\bot}} \ ; \labe{1}{\Box (q \vee r)}}
{
    \vlin{L(\Ll\lr\Box)}{}{
    1R2, \labe{1}{\Box\Box(p\wedge \Box q)} ; \ \xseqar{\labe{2}{\Diamond \top}\spconj \labe{2}{q}\equiv \labe{2}{q
    } \spdisj \labe{1}{\bot}} \ ; \labe{2}{(q \vee r)}
    }{
        \vlin{\Ll\Dseqrule}{}{1R2, \labe{2}{\Box(p\wedge \Box q)}, \labe{1}{\Box\Box(p\wedge \Box q)} ; \ \xseqar{\labe{2}{\Diamond \top}\spconj \labe{2}{q}} \ ; \labe{2}{(q \vee r)}}{
            \vlin{L(\Ll\lr\Box)}{}{2R3, 1R2, \labe{2}{\Box(p\wedge \Box q)}, \labe{1}{\Box\Box(p\wedge \Box q)} ; \ \xseqar{\labe{2}{q}\equiv \labe{3}{\top}\spconj \labe{2}{q}} \ ; \labe{2}{(q \vee r)}}{
                \vlin{L(\Ll\lr\wedge)}{}{2R3, 1R2, \labe{3}{(p\wedge \Box q)}, \labe{2}{\Box(p\wedge \Box q)}, \labe{1}{\Box\Box(p\wedge \Box q)} ; \ \xseqar{\labe{2}{q}} \ ; \labe{2}{(q \vee r)}}{
                    \vlin{\LlB}{}{2R3, 1R2, \labe{3}{p}, \labe{3}{\Box q}, \labe{2}{\Box(p\wedge \Box q)}, \labe{1}{\Box\Box(p\wedge \Box q)} ; \ \xseqar{\labe{2}{q}} \ ; \labe{2}{(q \vee r)}}{
                        \vlin{L(\Ll\lr\Box)}{}{3R2, 2R3, 1R2, \labe{3}{p}, \labe{3}{\Box q}, \labe{2}{\Box(p\wedge \Box q)}, \labe{1}{\Box\Box(p\wedge \Box q)} ; \ \xseqar{\labe{2}{q}} \ ; \labe{2}{(q \vee r)}}{
                            \vlin{R(\Ll\rr\vee)}{}{3R2, 2R3, 1R2, \labe{2}{q},\labe{3}{p}, \labe{3}{\Box q}, \labe{2}{\Box(p\wedge \Box q)}, \labe{1}{\Box\Box(p\wedge \Box q)} ; \ \xseqar{\labe{2}{q}} \ ; \labe{2}{(q \vee r)}}{
                            \vlin{\itLR(\lid^*)}{}{3R2, 2R3, 1R2, \labe{2}{q},\labe{3}{p}, \labe{3}{\Box q}, \labe{2}{\Box(p\wedge \Box q)}, \labe{1}{\Box\Box(p\wedge \Box q)} ; \ \xseqar{\labe{2}{q}} \ ; \labe{2}{q},\labe{2}{r}}
                            {
                            \vlhy{}
                            }
                            }
                        }
                    }
                }
            }
            }   
        }
    }
}
}
$}
\caption{Labelled interpolant computation in \cref{ex:labelled-interpolant}.}
\label{fig:lab_ex}
\end{figure}
Thus,  $\multform(\labe{1}{\Box q} \spdisj \labe{1}{\bot})=\Box q \vee \bot$, or equivalently~$\Box q$, is a Lyndon interpolant of $\phi \impl \psi=\Box\Box(p \wedge \Box q)\impl\Box(q \vee r)$ in logic~$\DB$. Here rule~$(\Ll{}D)$ was treated as a $\Diamond$-like, with label~$3$ being $\spdisj\spconj$-separated in $\labe{3}{\top}\spconj\labe{2}{q}\equiv\labe{2}{q}$. Similarly, label~$2$ is $\spconj\spdisj$-separated in $\labe{2}{q} \spdisj \labe{1}{\bot}\equiv\labe{2}{\Diamond \top} \spconj \labe{2}{q}$.  (For this logic, $\Diamond \top$ may be dropped due to seriality. In, e.g.,~$\K$, one would use $\labe{2}{(\Diamond \top \wedge q)} \spdisj \labe{1}{\bot}$ instead.)\looseness=-1 \lipicsEnd
\end{example}

In the next section we see how these techniques lead to powerful interpolation results.

\subsection{Interpolation Results for Labelled, Hyper-, and Nested Sequents}
\label{sec:hyper}
The general approach explained in the previous section of defining interpolants by looking at the shape of labelled split rules has two advantages. First, it leads to interpolation results (including Lyndon interpolation) for a range of logics. Second, the techniques can be adjusted to other types of sequents such as nested sequents and hypersequents. In this section we discuss results in both directions and point to relevant literature for more details.

Composing the algorithm for generating cut-free labelled sequent calculi rules \cite{NegrivP11} with the algorithm discussed in \cref{sect:lab_interp} for constructing interpolant transformations can yield general theorems as follows:
\begin{theorem}[\cite{Kuznets18}]
The modal logic $\lgc_\class$ of any class~$\class$ of frames defined by a set of frame conditions in the Horn-clause form $i_1 R j_1 \wedge \dots \wedge i_m R j_m \impl k R l$ has the~LIP.
\end{theorem}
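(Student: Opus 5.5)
The plan is to assemble the theorem from three ingredients: a cut-free labelled calculus for $\lgc_\class$, the labelled interpolation lemmas above, and the transfer theorem from the LLIP of a labelled calculus to the LIP of the logic.

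First, fix the set $H$ of Horn frame conditions defining $\class$. Take the calculus $\LSC$ consisting of:
\begin{itemize}
\item the labelled propositional rules obtained from \cref{fig:G3K};
\item the modal rules $(\Ll\lr\Box)$ and $(\Ll\rr\Box)$ of \cref{eq:label_modal_rules};
\item for every clause $i_1 R j_1 \wedge \dots \wedge i_m R j_m \impl k R l$ in $H$, the rule \eqref{eq:Horn_rule} together with all its instances obtained by the closure condition.
\end{itemize}
By the general theory of~\cite{NegrivP11} (Th.~11.27, Cor.~11.29), this calculus is sound and complete for $\class$: $\LSC\vdash S$ iff $\class\models S$. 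I will cite this rather than reprove it, since it is exactly the Negri--von Plato correspondence for geometric, in particular Horn, frame conditions. By \cref{prop:label_comp}, $\lgc_\class\vdash\phi\impl\psi$ iff $\LSC\vdash \labe{1}{\phi}\seqar\labe{1}{\psi}$. By the theorem transferring LLIP to LIP, it then suffices to show that $\LSC$ has the LLIP w.r.t.~$\class$.

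Second, I prove the LLIP by induction on a cut-free proof of an arbitrary split sequent $\relat,\Gamma;\Gamma'\seqar\Delta;\Delta'$, assigning interpolants rule by rule.
\begin{itemize}
\item Axioms: use the four splits of $(\lid^*)$ in \cref{fig:lab_interpolation}. The splits of $(\lr\bot)^*$ and $(\rr\top)^*$ get $\labe{i}{\bot}$ or $\labe{i}{\top}$ analogously.
\item Propositional rules: each split is local, conjunctive, or disjunctive. I apply \cref{lem:local_gen,lem:conjunct_gen,lem:disjunct_gen} in their Lyndon-strengthened form, checking the polarity conditions exactly as in the $L(\lr\impl)$ case of \cref{thm:Maehara_CPC}.
\item $(\Ll\lr\Box)$: both splits are local for every class of models.
\item $(\Ll\rr\Box)$: the two splits are $\Diamond$-like and $\Box$-like by \cref{Ex: box diamond}, so \cref{lem:box_gen,lem:diamond_gen} apply.
\item Horn rules and their contractions: these are local w.r.t.~$\class$ by \cref{lem:horn}, because every model in $\class$ satisfies the clause.
\end{itemize}
Relational atoms carry no propositional variables. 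Hence the Lyndon variable conditions are preserved in every case in which the active formulas stay on the side of their principal formula. The label condition (\ref{cond:lab_seq_lab}) is preserved because the only rule introducing a fresh label is $(\Ll\rr\Box)$, and its transformation removes $j$.

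The main obstacle is the $\Box$/$\Diamond$-like step with polarities. The Lyndon version of \cref{lem:box_gen,lem:diamond_gen} requires:
\begin{itemize}
\item $\labe{j}{\phi}$ in the premise has the same polarity as $\labe{i}{\Box\phi}$ in the conclusion, which holds since $\Box$ preserves polarity;
\item the rewriting into the separated forms $\mho_{\spconj j\spdisj}$ and $\mho_{\spdisj j\spconj}$ of \cref{prop:label_norm_form} does not introduce new polarities.
\end{itemize}
For the second point, the CNF/DNF rewriting only merges and duplicates existing subformulas, and adds only $\labe{j}{\top}$ or $\labe{j}{\bot}$, so I expect it to be harmless. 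A second point to check carefully is that contracted Horn instances still fall under \cref{lem:horn}. They do, since contraction only identifies atoms in the hypothesis of the clause, and the contracted rule is still sound on $\class$.
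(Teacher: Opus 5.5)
Your proposal is correct and follows essentially the same route as the paper. The paper gets the theorem by composing two things: the Negri--von Plato construction of a cut-free labelled calculus for Horn frame conditions (closure condition included), and the interpolant transformations of \cref{sect:lab_interp}. In that composition the Horn rules are local by \cref{lem:horn}, and the two splits of $(\Ll\rr\Box)$ are $\Box$-like and $\Diamond$-like. Your checks of polarity preservation under the separation rewriting and of the label condition spell out the details the paper defers to \cite{Kuznets18}.
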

\begin{corollary}
\label{cor:modal_logics}
Of the frame conditions listed in~\cite[Sect.~8]{Garson23}, any combination of seriality, reflexivity, transitivity, symmetry, euclideanity, and shift reflexivity produces a logic with the~LIP. This includes all 15 modal logics of the so-called modal cube~\cite[Sect.~8]{Garson23}, among them $\K$, $\T$, $\D$, $\Kvier$, $\Svier$, and $\Svijf$ from \cref{fig:modal-logics}.
\end{corollary}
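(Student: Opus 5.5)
The plan is to reduce the statement to the preceding theorem from~\cite{Kuznets18}. That theorem gives the~LIP for the modal logic of any class of frames defined by Horn-clause conditions $i_1 R j_1 \wedge \dots \wedge i_m R j_m \impl k R l$. So for each listed frame condition I would check that it is either a Horn clause or handled by one of the available transformations (\cref{lem:local_gen,lem:box_gen,lem:diamond_gen}).

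Five of the six conditions are Horn clauses.
\begin{itemize}
\item Reflexivity: $\top \impl iRi$, the case $m=0$.
\item Symmetry: $iRj \impl jRi$.
\item Transitivity: $iRj \wedge jRk \impl iRk$.
\item Euclideanity: $iRj \wedge iRk \impl jRk$.
\item Shift reflexivity: $iRj \impl jRj$.
\end{itemize}
Any combination of them is again a set of Horn clauses. The theorem then applies directly, and the corresponding special rules are local by \cref{lem:horn}. One point needs care: the side condition $k,l \in$ conclusion in~\eqref{eq:Horn_rule}. For reflexivity, $k=l=i$ must be a label already present in the sequent. This holds because the rule is only applied to existing labels, so the rule stays label-preserving, as condition~\ref{cond:lab_seq_lab} requires. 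I would also check that the closure condition~\cite{NegrivP11} yields contracted instances that are again local, which is immediate from \cref{lem:horn}.

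Seriality, $\forall i\,\exists j\, iRj$, is not Horn, and I expect this to be the main obstacle. The plan is to add rule $(\Ll\Dseqrule)$ to the calculus. By \cref{Ex: box diamond}, it is both $\Box$-like and $\Diamond$-like over serial frames, so it admits the transformation from \cref{lem:diamond_gen}. That transformation preserves Lyndon polarity, because the premise and conclusion contain the same formulas. What remains is to show that the labelled calculus for serial frames combined with Horn conditions is complete and cut-free, so that every valid split sequent has a proof to which Maehara's method can be applied. For this I would use the geometric-rule methodology of~\cite{NegrivP11}, since seriality is a geometric axiom with one fresh label. I would verify that the combined calculus is sound and complete w.r.t.\ the class of frames satisfying all selected conditions, and that every split of every rule falls under \cref{lem:local_gen,lem:conjunct_gen,lem:disjunct_gen,lem:box_gen,lem:diamond_gen} with the Lyndon-strengthened variable conditions. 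Applying the labelled-to-logic transfer theorem via $\multform$ then gives the~LIP for each combination.

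The final step covers the 15 logics of the modal cube, including $\K$, $\T$, $\D$, $\Kvier$, $\Svier$, $\Svijf$. Each of them is the logic of a class defined by a subset of $\{$D, T, B, 4, 5$\}$ frame conditions, by the completeness results summarised in \cref{fig:modal-logics} and~\cite{Garson23}. Since every such subset is a combination from the list above, each of these logics inherits the~LIP, and therefore also the~CIP.
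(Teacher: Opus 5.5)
Your proposal is correct and follows the paper's route. The corollary is stated without a separate proof, but it combines the preceding Horn-clause theorem of \cite{Kuznets18} (reflexivity, symmetry, transitivity, euclideanity and shift reflexivity are Horn, with local special rules as in \cref{lem:horn}) with the treatment of the non-Horn seriality condition via the labelled seriality rule, which is both $\Box$-like and $\Diamond$-like over serial frames as in \cref{Ex: box diamond}. The modal-cube logics then follow as the logics of the frame classes defined by subsets of the D, T, B, 4, 5 conditions, exactly as you argue.
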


The range of Kripke-frame classes guaranteeing the~LIP can  be expanded further, in particular to deal with the convergence restriction~\cite[Sect.~8]{Garson23} and $hijk$-convergence~\cite[Sect.~9]{Garson23}, yielding logics often named after Scott--Lemon or Geach.
 For details we refer to~\cite{Kuznets18}.

With \cref{cor:modal_logics} we can illustrate that labelled sequent calculi can be more suitable to obtain Lyndon interpolation than ordinary sequent calculi with some form of cut. For example, in \cref{thm:interpolationSvijf} we used Maehara's method to prove Craig interpolation for logic~$\Svijf$ by using the analytic cut rule. However,   analytic cuts  do not preserve polarity, resulting in the inability to obtain  Lyndon interpolation. \Cref{cor:modal_logics}, by contrast, establishes the~LIP for $\Svijf$ using the techniques for labelled sequents.

Now we turn to interpolation results via hypersequents and nested sequents. Nested sequents and hypersequents are closely related to labelled sequents. A labelled sequent $S = \relat,\Gamma\seqar\Delta$ can  be viewed as consisting of ordinary, unlabelled sequents $\{\Gamma_i \seqar \Delta_i\}_{i \in \Lab(S)}$  where multisets $\Gamma_i \ce \{\phi \mid \labe{i}{\phi}\in \Gamma\}$ and $\Delta_i \ce \{\phi \mid \labe{i}{\phi}\in \Delta\}$ contain (unlabelled) formulas. In this view, each sequent $\Gamma_i \seqar \Delta_i$ is a  \emph{sequent component of~$S$} and relational atoms from~$\relat$  describe the relationships among these sequent components. Relational atoms are expressive enough to represent any (Kripke-like) relationship. If a particular relationship is regular enough, it can be encoded by structural connectives, bypassing the use of explicit labels. 

\subparagraph{Hypersequents} 
A \emph{hypersequent} is defined as a finite multiset (alternatively, a set or a list) of sequent components, commonly written as $\Gamma_1 \seqar \Delta_1 \mid \dots \mid \Gamma_n \seqar \Delta_n
$. For example, modal logic $\Svijf$ admits a hypersequent calculus where hypersequents can be seen as labelled sequents in which $\relat$ describes a total relation (e.g., $iRj \in \relat$ for all $1 \leq i,j \leq n$). Although hypersequents predate labelled sequents, having been  introduced in \cite{Minc71,Pottinger83,Avron96}, the labelled view on hypersequents means that any interpolant transformation from \cref{sect:lab_interp} can be used whenever its conditions are satisfied. To use the interpolation algorithm, labels have to be assigned to each sequent component of each hypersequent in a proof. One may need to be careful when assigning labels, depending on whether a rule modifies a sequent component or replaces it with a different component. The latter case calls for using a different label, e.g.,~the proper labelling of Avron's $(\seqar \Box)$ rule from~\cite{Avron96} can be understood as follows:
\[
\vlinf{\tiny \seqar\Box}{}{G \mid \Box \Gamma\seqar \Box \phi}
{G \mid \Box \Gamma\seqar \phi}
\quad\rightsquigarrow\quad
\vlinf{}{}{G \mid \underbrace{\Box \Gamma\seqar \Box \phi}_i}
{G \mid \overbrace{\strut\ \ \ }^i \mid \overbrace{\Box \Gamma\seqar \phi}^j}
\quad\rightsquigarrow\quad
\vlinf{\tiny \Ll\seqar\Box}{j \notin X \sqcup\{i\}}{ \Pi,  \labe{i}{\Box \Gamma}\seqar \Sigma,\labe{i}{\Box \phi}}
{\Pi,  \labe{j}{\Box \Gamma}\seqar \Sigma,\labe{j}{\phi}}
\]
where $X=\Lab(G)$ is the set of labels assigned to the side components from $G$.
Consequently, $j$~is a label to be removed from the premise interpolant, and the two split versions of this rule are $\Box$-like and $\Diamond$-like, rather than local. See~\cite{Kuznets16Kolleg}~and~\cite[Sect.~3]{Kuznets18} for more details and interpolant transformations for other rules.
The proof-theoretic methods to prove the~LIP directly via hypersequents have been applied to~$\Svijf$ and modal logic~\textsf{S4.2} in~\cite{Kuznets16Kolleg,Kuznets18}. 

\subparagraph*{Nested Sequents, or Tree-Hypersequents}
Another commonly used configuration of sequent components is a tree of sequent components~\cite{Bruennler09,Poggiolesi09,LellmannP24}, usually under the names of \emph{nested sequents} or \emph{tree-hypersequents}. Tree structures  can be conveniently represented by labels consisting of sequences of integers, so it is no surprise that prefixed tableaux relate to nested sequents the same way as ordinary tableaux do to ordinary sequents~\cite{Fitting12}. In this case, the implicit relational atoms are of the form $\sigma R \sigma.n$, where $\sigma$~is a sequence of integers and $\sigma.n$~is obtained by appending it with integer~$n$. All the details can be found in~\cite{FittingK15} and \cite[Sect.~5]{Kuznets18}. Here, we only show  an instance of a nested rule whose splits are $\Box$- and $\Diamond$-like rules respectively:\looseness=-1
\[
\vlinf{}{}{\Pi\vlfill{\Gamma\seqar \Delta, \Box \phi}}
{\Pi\vlfill{\Gamma\seqar \Delta, [\phi]}}
\qquad\rightsquigarrow\qquad
\vlinf{}{\sigma.j \notin \relat, \Pi^\ell,  \labe{\sigma}{\Gamma}\seqar \labe{\sigma}{\Delta},\labe{\sigma}{\Box \phi}}{\relat, \Pi^\ell,  \labe{\sigma}{\Gamma}\seqar \labe{\sigma}{\Delta},\labe{\sigma}{\Box \phi}}
{\sigma R\sigma.j,\relat,\Pi^\ell,  \labe{\sigma}{\Gamma}\seqar \labe{\sigma}{\Delta},\labe{\sigma.j}{\phi}}
\]
where $\Pi^\ell$ is the labelled translation of the nested context $\Pi\vlhole$, sequent $\Gamma \seqar \Delta, \Box \phi$ is an ordinary sequent labelled with an integer sequence $\sigma$, and $\relat$ describes the tree structure of the nested tree of $\Pi\vlhole$. Here, label $\sigma.j$ is the one to be removed from the premise interpolant. The proof-theoretic methods to prove the~LIP directly via nested sequents have been applied~\cite{FittingK15,Kuznets18} to all 15 logics of the modal cube from~\cite[Sect.~8]{Garson23}.

\subparagraph{Intermediate Logics}
All formalisms so far in \cref{sect:gener} dealt with $\CPC$-based modal logics. 
The method  can also be applied to intermediate logics, with intuitionistic implication and negation playing the role of modality in modal logics. In most cases, this requires specialized, restricted types of nested sequents and/or hypersequents (see~\cite{KuznetsL18,KuznetsL21} for details). All seven interpolable intermediate logics can be dealt with using one calculus or another. Moreover, the~LIP for the G\"odel--Dummett logic, the logic of linear intuitionistic Kripke frames, was first demonstrated using this method~\cite{KuznetsL18}.

\subparagraph{Uniform interpolation}
Hypersequents and nested sequents can also be used to demonstrate the~UIP/ULIP. This has been done to reprove the UIP for logics~$\K$, $\D$, and~$\T$, and to provide the first proof-theoretic proof of the~ULIP for logic~$\Kvijf$~\cite{vdGiessenJK24,vdGiessenJK23TABLEAUX}. Similar to the labelled sequent definition of Craig interpolation (\Cref{def:msint}), a labelled version of uniform interpolation is incorporating semantic notions, based on bisimulation quantifiers.

\section{Conclusion}
\label{sec:conclusion}

This chapter provides methods for proving interpolation results that simultaneously provide algorithms for constructing interpolants based on various (generalized) sequent calculi. For each algorithm, standard interpolant transformations are presented for common rule types, which cover a wide range of logics. As a result, the chapter can be used as a manual: To prove a particular interpolation property for a logic represented by a given calculus, the rules of the calculus should be checked against the common rule types. If each of the calculus rules is covered by one of these types, the interpolation property directly follows. Otherwise, to prove interpolation, it is sufficient to develop interpolant transformations only for the rules not covered by the common types. The proof-theoretic techniques for sequent calculi are further embedded within the framework of universal proof theory, which systematically relates interpolation properties to the existence of well-behaved proof systems.\looseness=-1 

It is a common misconception that the existence of a cut-free (sequent)  calculus for a logic necessarily implies that the logic enjoys interpolation. It may be impossible to provide an interpolant transformation for a particular rule, even for an analytic one. An example of such non-interpolable hypersequent rule for the G\"odel--Dummett logic can be found in the proof of \cite[Theorem~4.6]{KuznetsL18}. Note that this logic does enjoy interpolation, both Craig and Lyndon. Thus,
more complex calculi may  be necessary despite the existence of a cut-free sequent one.\looseness=-1 

\subparagraph{Pointers to the literature}
For related reading on the discussed material in this chapter we provide the reader with some pointers to the literature: %\iris{\textit{I phrase the following as a summation where I don't use proper sentences. Is that fine?}} \iris{\textit{Are there more items to add?}} 
\begin{itemize}
    \item \textit{Maehara's method:} Maehara's original work \cite{Maehara61}. Textbooks on proof theory with Craig interpolation proofs via sequents for classical/intuitionistic propositional logic and classical/intuitionistic first-order logic~\cite{Takeuti75,TroelstraS00,Schuette77}.  See also the lecture notes in \cite{vdGiessen25} on which \Cref{sec:basics} is based. 
    \item \textit{Tableau methods:} Interpolation proofs via tableaux for modal and intuitionistic logics~\cite{Fitting83}.
    \item \textit{Pitts' method:} Pitts' original paper on the proof of uniform interpolation for intuitionistic propositional logic~\cite{Pitts92}. Proof-theoretic methods for uniform interpolation for modal logics~\cite{Bilkova06}.
    \item \textit{Universal proof theory:} Semi-analytic rules and the existence of well-behaved sequent calculi via the examination of interpolation \cite{AkbarTabatabaiIJ21,AkbarTabatabaiIJ22int,AkbarTabatabaiIJ22nonnormal,AkbarTabatabaiJ18uniform,AkbarTabatabaiJ25,Iemhoff19AML,Iemhoff19APAL}. See also the published lecture notes in \cite{UPTLectureNotes} on which \Cref{sec:universal-proof-theory} is based.
    \item \textit{Interpolation via hypersequents:}  Craig/Lyndon interpolation for modal logics~\cite{Kuznets16Kolleg,Kuznets18}  and for intermediate logics~\cite{KuznetsL18}. Uniform modal interpolation~\cite{vdGiessenJK24}.
    \item \textit{Interpolation via nested sequents:} Craig/Lyndon interpolation for modal logics~\cite{FittingK15,Kuznets18} and for an intermediate logic~\cite{KuznetsL18} using linear nested sequents. Purely syntactic representation of the method~\cite{Lyon_etall20}. Uniform interpolation for modal logics~\cite{Bilkova11,vdGiessenJK24}.
    \item \textit{Interpolation  via labelled sequents:} Craig/Lyndon interpolation for modal logics~\cite{Kuznets16JELIA,Kuznets18}.
    \item \textit{Interpolation via custom-made generalizations of sequents:}  Hypersequents with an additional designated root component~\cite{vdGiessenJK23TABLEAUX}. Nested sequents with bounds on depth and/or branching~\cite{KuznetsL18,KuznetsL21}.
\end{itemize}
 
\subparagraph{Related topics} Here we would like to point to some related research directions on proof-theoretic methods of interpolation that fall beyond the scope of this chapter:
\begin{itemize}
    \item \textit{Interpolation as cut-introduction:} In~\cite{Saurin25}, a \textit{proof-relevant} view on Maehara's method in linear logic is provided in the sense that given a cut-free sequent proof $\pi$ of $\phi \seqar \psi$, there is a formula $\theta$ in the common signature of $\phi$ and $\psi$, and there are sequent proofs $\pi_1$, $\pi_2$ for $\phi \seqar \theta$ and $\theta \seqar \psi$, respectively, such that $\pi_1$ composed with $\pi_2$ cut-reduces to $\pi$. This leads to the introduction of cuts by synthesizing the interpolant.
    \item \textit{Interpolation via cyclic and non-wellfounded proofs:} \refchapter{chapter:fixedpoint} of this book discusses Maehara-type methods to construct interpolants for fixpoint logics via sequent proofs that may contain infinite branches. \
    \item \textit{Interpolation and proof complexity:} Interpolation has also been extensively studied in proof systems such as resolution and cutting planes [53, 80], where it plays a central role in proof complexity. In particular, \emph{feasible interpolation} investigates when interpolants can be efficiently extracted from proofs, thereby linking short proofs to small interpolating circuits. This connection provides a powerful bridge between proof complexity and circuit complexity, allowing lower bounds on circuit size to be transferred to lower bounds on proof length and enabling the classification of proof systems and the establishment of exponential separations. These themes are discussed in detail in \refchapter{chapter:proofcomplexity} of this book.
\end{itemize}

\addcontentsline{toc}{section}{Acknowledgments}
\section*{Acknowledgments}
The authors sincerely thank Patrick Koopmann, Rosalie Iemhoff, and Sam van Gool for careful reading of the earlier versions of this manuscript and for their insightful comments, as well  as Marianna Girlando for helpful discussions. Iris van der Giessen acknowledges the support by the Dutch Research Council (NWO) under the project \textit{Finding Interpolants: Proofs in Action} with file number VI.Veni.232.369 of the research program Veni. Roman Kuznets was supported by Czech Science Foundation Grant No.~22-06414L. 

\bibliography{bibliography}
\end{document}